\documentclass{article}                   
\usepackage{mathptmx} 
\usepackage{latexsym}
\usepackage{graphicx}
\usepackage{color}
\usepackage{amsmath}
\usepackage{amsthm}
\usepackage{amsfonts}
\usepackage{amssymb}
\usepackage{url}
\usepackage{wrapfig}
\usepackage{mathtools}
\mathtoolsset{showonlyrefs}
\usepackage{paralist}
\usepackage{microtype} 
\usepackage[utf8]{inputenc}
\usepackage[T1]{fontenc}
\usepackage{newtxtext}
\usepackage[vvarbb]{newtxmath}
\usepackage[numbers]{natbib}
\usepackage{enumerate}

\definecolor{darkblue}{RGB}{0, 0, 102}
\definecolor{IKB}{RGB}{0, 47, 167}

\usepackage[pdftex,bookmarks,bookmarksopen=true,pdfstartview={FitH},bookmarksnumbered=true,colorlinks=true,citecolor=darkblue,linkcolor=IKB,plainpages=false,pdfpagelabels]{hyperref}

\newtheorem{theorem}{Theorem}[section] 
\newtheorem{maintheorem}[theorem]{Main Theorem}
\newtheorem{lemma}[theorem]{Lemma}
 
\newtheorem{cor}[theorem]{Corollary}

\newtheorem{Def}[theorem]{Definition}

\theoremstyle{remark}
\newtheorem{remark}[theorem]{Remark}
\newcommand{\R}{\mathbb{R}}
\newcommand{\conv}{\mathop{\rm conv}}

\usepackage{authblk}
\begin{document}
\title{New Limits of Treewidth-based Tractability in Optimization}
\author[1]{Yuri Faenza}
\author[2]{Gonzalo Mu\~noz}
\author[3]{Sebastian Pokutta}
\affil[1]{Industrial Engineering and Operations Research, Columbia University}
\affil[2]{IVADO Fellow,  Canada Excellence Research Chair in Data
Science for Real-Time Decision-Making}
\affil[3]{Industrial and Systems Engineering, Georgia Institute of Technology}
\date{}

\maketitle

\begin{abstract}
Sparse structures are frequently sought when pursuing tractability in optimization problems.
They are exploited from both theoretical and computational perspectives to
handle complex problems that become manageable when sparsity is present.
An example of this type of structure is given by treewidth: a graph theoretical parameter that measures how ``tree-like'' a graph is. This parameter has been
used for decades for analyzing the complexity of various optimization problems and for
obtaining tractable algorithms for problems where this parameter is bounded. The goal of this work is to contribute to the understanding of the limits of the treewidth-based tractability in optimization. Our results are as follows. First, 
we prove that, in a certain sense, the already known positive results on extension complexity based on low treewidth are the best possible.
Secondly,
under mild assumptions, we prove that treewidth is the only
graph-theoretical parameter that yields tractability a wide class of optimization
problems, a fact well known in \emph{Graphical Models} in Machine Learning and in \emph{Constraint Satisfaction Problems}, which here we
extend to an approximation setting in \emph{Optimization}.
\end{abstract}

\section{Introduction}

Treewidth is a graph-theoretical parameter used to measure, roughly
speaking, how far a graph is from being a tree. It
was explicitly defined by \citet{Robertson198449} (also see
\cite{rs86}), but there are many equivalent definitions.  An earlier
discussion is found in \cite{halin} and closely related concepts have
been used by many authors under different names, e.g., the ``running
intersection'' property, and the notion of ``partial k-trees". Here we
make will use the following definition; recall that a \emph{chordal
  graph} is a graph where every induced cycle has exactly three
vertices.

\begin{Def} \label{treewidthdef} An undirected graph $G=(V,E)$ has
  \emph{treewidth} $\le \omega$ if there exists a chordal graph
  $H=(V,E')$ with $E\subseteq E'$ and clique number $\le \omega+1$. We
  denote as $tw(G)$ the treewidth of $G$.
\end{Def}

Note that $H$ in the definition above is sometimes referred to as a
\emph{chordal completion} of $G$. It can be shown that a graph has
treewidth \(1\) if and only if it is a forest. On the other extreme, a
complete graph of $n$ vertices has treewidth $n-1$. An important 
fact is that an $n$-vertex graph with treewidth $\leq \omega$ has
$O(\omega n)$ edges, and thus low treewidth graphs are \emph{sparse},
although the converse is not true. This sparsity is accompanied by a
compact decomposition of low-treewidth graphs that allows to
efficiently address various combinatorial problems.

Bounded treewidth has been long and widely recognized as a measure of
complexity for all kinds of problems involving graphs and there is
expansive literature concerning polynomial-time algorithms for
combinatorial problems on graphs with bounded treewidth.  One of the
earliest references is \cite{arnborg0}; see also \cite{arnborg,acp,BrownFellowsLangston1989,Bern1987216,bodlaender,bienstocklangston}. These algorithms typically rely on \emph{Dynamic
  Programming} techniques that yield algorithms with a non-polynomial
dependency on the treewidth. A similar paradigm has been presented in
\emph{Inference Problems} of Graphical Models (see, e.g.,
\cite{graphical}), where it is well known that an
underlying graph with bounded treewidth yields tractable inference
problems; see \cite{pearl82,freuder,dechterpearl,wainjor,Chandrasekaran08,wainwrightjordanlong} and references therein.

In a more general optimization context, treewidth-based sparsity has
been studied using the concept of the \emph{intersection graph}\footnote{Also called \emph{primal constraint graph} or \emph{Gaifman graph.}},
which provides a representation of the variable interactions in a
system of constraints. The intersection graph of a system of
constraints was originally introduced in \cite{fulkersongross} and has
been used by many authors, sometimes using different terminology.

\begin{Def} \label{intersectgraph} The \emph{intersection graph} of a
  system of constraints is the undirected graph which has a vertex for
  each variable and an edge for each pair of variables that appear in
  any common constraint.  If an optimization problem instance or its
  system of constraints is denoted $\mathcal{I}$, we call its
  intersection graph $\Gamma[\mathcal{I}]$.
\end{Def}

As it has been observed before (see \cite{treewidth,tw,Laurentima,lasserre2,Waki06sumsof,wainjor}), the combination of
\emph{intersection graph} and \emph{treewidth} makes it possible to
define a notion of structured sparsity in an optimization context. One
example of a research stream that has made use of treewidth-based
sparsity via intersection graphs is that of constraint satisfaction
problems (CSPs). One can obtain efficient algorithms for CSPs,
whenever the intersection graph of the constraints exhibits low
treewidth. Moreover, one can find compact linear extended formulations
(i.e., linear formulations with a polynomial number of constraints) in such
cases \cite{kolman2015extended,kolman2015extension}. In the Integer
Programming context, extended formulations for binary problems whose
constraints present a sparsity pattern with small treewidth have been
developed as well; see \cite{tw,wainjor,Laurentima}. A different use
of treewidth in Integer Programming is given in \cite{cunngeel}.  An
alternative perspective on structured sparsity in optimization
problems, without relying on an intersection graph, is taken in
\cite{BraunPokutta}.

\subsection*{Contribution}
\label{sec:contribution}

In this article we focus on two questions related to tractability
induced by treewidth. While we provide a precise statements of these
questions in each corresponding section, roughly speaking
these questions and our contribution can be summarized as follows:

\begin{enumerate}
\item In general, whenever an optimization problem exhibits an
  intersection graph with bounded treewidth, it can be solved (or
  approximated, depending on the nature of the problem) in polynomial
  time (see \cite{treewidth,kolman2015extension,Laurentima}). As
  such it is natural to ask the following question:
\begin{center}
\emph{Is there any other graph-theoretical structure that yields tractability?}
\end{center}

It is known that the answer is negative in general. \citet{grohe2007complexity} and \citet{marx2010can} proved that, in a sense, CSPs are only tractable when bounded treewidth is present. \citet{Chandrasekaran08} proved that a family of graphs with unbounded treewidth can yield intractable inference problems in Graphical Models, under the $NP\not\subseteq P/poly$ hypothesis. 
Moreover, it is believed that many treewidth-based algorithms are best possible \cite{lokshtanov2011known}.

We complement these results by proving that a family of graphs with unbounded treewidth can yield intractable optimization problems, even if the variable domain is bounded and small violations to the constraints are allowed. This provides a converse to a recent theorem by \citet{treewidth}. We follow the overall strategy of \citeauthor{Chandrasekaran08}, but we make use of the hypothesis $NP\not\subseteq BPP$ instead. Besides the different complexity theory assumptions, we highlight other differences of our approach and results compared to that of \citet{Chandrasekaran08} and \citet{marx2010can} in Sections \ref{sec:negativeresultsCSP} and \ref{sec:chandrasekaran}.

\item For sets in $\{0,1\}^n$ defined using a set of constraints whose
  intersection graph has treewidth $\omega$, it is known that there
  exists a linear programming reformulation of its convex hull of size
  $O(n2^\omega)$. This yields the following question:
\begin{center}
\emph{For any given treewidth $\omega$, is there any 0/1 set that (nearly) meets this bound?}
\end{center}
We provide a positive answer to this question. Furthermore, we prove that this
bound is tight even if we allow semidefinite programming
formulations. This establishes that there is little to be gained from
semidefinite programs over linear programs in general when exploiting low
treewidth. Our analysis is based on
the result of \citet{briet2013existence}, where the existence of 0/1 sets with exponential
semidefinite extension complexity is proved. We also prove a similar result for the stable set polytope, making use of the treewidth of the underlying graph directly instead of relying on a particular formulation, and discuss related results.

It is worth mentioning
that the extension complexity upper bound is obtained enumerating
\emph{locally} feasible vectors along with a gluing
argument. Moreover, the upper bound is oblivious to any other
structure present in the constraints besides its sparsity pattern. Our
result shows that, surprisingly, one cannot do much better than this
seemingly straight-forward approach, even if semidefinite formulations
are allowed. 

Typically, one can find treewidth-based \emph{upper bounds} on the
extension complexity of certain polytopes \cite{kolman2015extension,kolman2015extended,BraunPokutta}, or extension complexity
\emph{lower bounds} on specific families of problems
\cite{braun2015approximation,BraunPokutta,fiorini2015exponential,avis2015extension} parametrized
using the problem size. To the best of our knowledge, much less attention has been devoted to providing extension complexity lower bounds parameterized using other features of the problem. As a matter of fact, we are only aware of two other articles in this domain: the work of \citet{gajarsky2017parameterized}, where the authors analyze the
extension complexity of the stable set polytope based on the
\emph{expansion} of the underlying graph, and the work of \citet{aboulker2019extension} which, independently of this work, provided extension complexity lower bounds of the correlation polytope parameterized using the treewidth of the underlying graph. Our work contributes to this
line of work, showing the existence of polytopes whose extension complexity lower bound depends on the treewidth parameter and nearly meet the aforementioned bound. We discuss the main difference of our approach to that of \citeauthor{aboulker2019extension} in Section \ref{sec:lowtreewidth}.
\end{enumerate}

We believe that addressing these two questions provides new valuable
insights into the limitations of exploiting treewidth and provides
strong lower bounds that allow for assessing the performance of
current approaches. In fact, complementing the results by \citet{Chandrasekaran08} and \citet{marx2010can}, our results show that the existing
approaches are, in some sense, the best possible and that further
improvement is only possible if more structure is considered.

We emphasize that the two questions studied in this paper, although both related to treewidth, are different and their answers need distinct approaches and tools. The extension complexity is a concept that does not
necessarily depend on whether a problem is easy or hard from an
algorithmic perspective, nor on the assumption of $P\neq NP$. For
example, there are instances of the matching polytope with an
exponential extension complexity \cite{rothvoss2017matching}, whereas
finding a maximum weight matching can be done in polynomial time for
any graph. An example in the other direction is given by the stable
set problem. For each $\epsilon>0$, an $n^{1-\epsilon}$-approximate
solution cannot be attained in polynomial time
\cite{Hastad01,Zuckerman07} unless $P=NP$, but there exists a
formulation of polynomial size of the stable set polytope with the
property that, for each objective function $c\geq 0$, its optimal
solution is a factor $O(\sqrt{n})$ away from the maximum weight stable
set (\cite{bazzi2015no,bazzi2015noJour}, by building on results from
\cite{Feige}).

\subsection*{Outline}
\label{Outline}
The rest of the article is organized as follows. In Section
\ref{notation} we provide the basic notation used in this article. The
main contributions are divided in two sections. In Section
\ref{unboundedTW} we provide the answer to the first question above,
i.e., we prove that unbounded treewidth can yield intractable
optimization problems, even if constraint violations are allowed, and in Section \ref{XClowerbounds} we provide
the answer to the second question, i.e., we show the existence of
sparse problems with high extension complexity. Both sections are
organized similarly: we begin by providing the necessary background,
along with the known positive treewidth-exploiting results, and then
move to the respective proofs. Section \ref{extra-results} provides
additional results to complement Section
\ref{XClowerbounds}.

\section{Notation} \label{notation}

We mostly follow standard linear algebra and graph theory
notation. For $n\in \mathbb{N}$, we use $[n]$ to denote the set of
integers $\{1,\ldots, n\}$. Further, we denote by $\mathbb{R}^n$ the
\(n\)-dimensional vector space of the reals and by $\mathbb{Z}^n$ the
$n$-dimensional free \(\mathbb{Z}\)-module over the integers.  If we
restrict vectors to have non-negative entries, we use $\mathbb{R}^n_+$
and $\mathbb{Z}^n_+$. We call $e_i$ with \(i \in [n]\) the canonical
vectors in $\mathbb{R}^n$, i.e., $(e_i)_j = 1$ if and only if
$i=j$. The space of symmetric $n\times n$ positive semidefinite
matrices is denoted as $\mathbb{S}^n_+$. The standard inner product
between two vectors $v,w\in \mathbb{R}^n$ is denoted by $v^T w$. Given
two matrices \(A,B\) (of compatible dimension), the Frobenius inner
product is denoted by
\(\langle A, B\rangle \doteq \text{trace}(A^T B)\). Given two set
$S_1, S_2$, we denote the cartesian product by
$S_1 \times S_2 \doteq \{(x_1, x_2) \ : \ x_1 \in S_1,\, x_2 \in
S_2\}$. The convex hull of a set $S\subseteq \mathbb{R}^n$ is denoted
as $\conv(S)$, and its affine hull by $\text{aff}(S)$. For a graph $G=(V,E)$, we use $V(G)$ and $E(G)$ to denote its vertices and edges respectively. For $v\in V(G)$, we use $N_G(v)$ to denote the set of neighbors of $v$ in $G$, that is $N_G(v) = \{u \, : \, \{u,v\} \in E(G)\}$. Given two graphs
$G_i=(V_i,E_i)$ with $i \in \{1,2\}$, we have that $G_1$ is a subgraph
of $G_2$ if $V_1\subseteq V_2$ and $E_1 \subseteq E_2$, and $G_1$ is a
minor of $G_2$ if $G_1$ can be obtained from $G_2$ using \emph{vertex
  deletions, edge deletions, and edge contractions}. Lastly, for a
polynomial $p(x)$, we denote by $\|p\|_1$ the sum of the absolute
values of its coefficients, i.e., if
\(p(x) = \sum_{\alpha \in I(p)} p_{\alpha} x^\alpha\) with
$x^\alpha \ \doteq \ \prod_{j=1}^n x_j^{\alpha_j}$ for some
$\alpha \in \mathbb{Z}_+^n$, $p_{\alpha}$ rational and
$I(p) \subseteq \mathbb{Z}_+^n$, then
\[\|p \|_1 \ \doteq \ \sum_{\alpha \in I(p)} |p_{\alpha}|.\]
The degree of $p$ is defined as  \(\deg(p) \doteq \max_{\alpha \in I(p)} \sum_j \alpha_j\).


\section{Unbounded treewidth can yield intractability} \label{unboundedTW}

Our first goal is to study the question of whether low treewidth is
the \emph{only} graph-theoretical structure that yields tractability
when approximating optimization problems. Here we work with the general \emph{Polynomial
  Optimization} framework, i.e., we consider problems of the form:
\begin{subequations} \label{PO}
\begin{align}
\mbox{(PO): }  \min\, &\, c^T x \\
\mbox{s.t. }\, & f_i(x) \ \ge \ 0 && i\in [m]\\
& x_j \in \{0,1\} && j\in [p],\\
& x_j \in [0,1] && j = p+1,\ldots, n. 
\end{align}
\end{subequations}
where each $f_i$ is a polynomial of degree at most $\rho$. When $\rho=2$ we also use the term QCQP (quadratically constrained quadratic problem) to refer to PO.  
\begin{remark}
Any problem with polynomial objective and constraints, and defined over a compact
set, can be cast as a PO. This can be done by appropriately rescaling variables and by using an epigraph formulation to move the non-linear terms of the objective to the constraints.
\end{remark}

As mentioned above, it is known that tractability of an instance
$\mathcal{I}$ of PO is implied by an intersection graph
$\Gamma[\mathcal{I}]$ of low treewidth. In the pure binary case, an
\emph{exact} optimal solution of $\mathcal{I}$ can be computed in
polynomial time whenever $\Gamma[\mathcal{I}]$ has bounded treewidth
(see \cite{treewidth,kolman2015extension,Laurentima}). However, if
continuous variables are present, exact solutions might not be
computable in finite time as shown by the following simple example.
\begin{align*}
\max \, &\, x \\
\mbox{s.t. }\, &\, x^2 \ \le \ \frac{1}{2}  \\
& x\in [0,1]
\end{align*}
has an irrational optimal solution. As such approximation is
unavoidable from a computational perspective, therefore we make use of
the following definition:

\begin{Def}
Given an instance $\mathcal{I}$ of PO, we say $x^* \in \{0,1\}^p \times [0,1]^{n-p}$ is $\epsilon$-feasible if $x^*\in S_\epsilon$, where
\begin{equation}
    \label{def:Sepsilon}
    S_\epsilon = \{x\in \{0,1\}^p \times [0,1]^{n-p} \ : \ f_i(x) \geq -\epsilon \|f_i\|_1, \ 1\leq i \leq m\}.
\end{equation}
\end{Def}

Given an instance $\mathcal{I}$ of PO an LP formulation that takes
advantage of low treewidth of $\Gamma[\mathcal{I}]$ was proposed by
\citet{treewidth} in order to approximate $\mathcal{I}$. More
specifically:

\begin{theorem}[Bienstock and Mu\~noz \cite{treewidth}] \label{potheorem_0} Consider a
  feasible instance $\mathcal{I}$ of PO and $\epsilon > 0$. Assume each $f_i(x)$ has
  degree at most $\rho$. If $\Gamma[\mathcal{I}]$ has treewidth
  $\le \omega$ then there is an LP formulation with
  $O\left((2 \rho/\epsilon)^{\omega + 1} \, n \log
    (\rho/\epsilon)\right)$ variables and constraints such that
\begin{enumerate}[(a)]
\item all feasible solutions to the LP are $\epsilon$-feasible for $\mathcal{I}$
\item every optimal LP solution $\hat{x}$ satisfies
\begin{equation}\label{eps-opt}
c^T \hat{x} \leq c^T x^* + \epsilon \| c_N \|_1
\end{equation}
where $x^*$ is an optimal solution to $\mathcal{I}$ and $c_N$ is the
sub-vector of $c$ corresponding to continuous variables $j=p+1,\ldots, n$.
\end{enumerate}

Moreover, given a chordal completion of $\Gamma[\mathcal{I}]$ with clique number $\leq \omega + 1$ (which exists whenever the treewidth is at most \(\omega\)), the LP
can be constructed in time
\[O( (2 \rho/\epsilon)^{\omega + 1} \, \log (\rho/\epsilon) \cdot \textit{poly}(\|\mathcal{I}\|)).\]
where $\|\mathcal{I}\|$ is the size of the representation of $\mathcal{I}$.
\end{theorem}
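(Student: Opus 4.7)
The plan is to combine three ingredients: a tree decomposition of the intersection graph coming from the given chordal completion, a discretization of the continuous variables to a finite grid whose resolution is calibrated to $\epsilon$ and $\rho$, and a ``local marginals'' LP that enumerates feasible local assignments inside each bag and glues them together via consistency constraints on shared variables.

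First I would extract a tree decomposition $(T,\{B_t\}_{t\in V(T)})$ from the chordal completion $H$ by taking the bags to be the maximal cliques of $H$; since $H$ has clique number at most $\omega+1$ one gets $|B_t|\le \omega+1$, and the number of bags is $O(n)$. Because the variables appearing in any constraint $f_i$ form a clique in $\Gamma[\mathcal{I}]\subseteq H$, that set is contained in some bag $B_{t(i)}$. Next I discretize each continuous variable using a dyadic representation $\tilde x_j=\sum_{k=1}^{K}2^{-k}b_{j,k}$ with $b_{j,k}\in\{0,1\}$ and $K=\lceil \log_2(2\rho/\epsilon)\rceil$, obtaining a grid $\Delta_j\subseteq[0,1]$ of size $O(\rho/\epsilon)$ with mesh $\eta\le \epsilon/(2\rho)$. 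The encoding accounts for the $\log(\rho/\epsilon)$ factor in the final size.

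Then for each bag $B_t$ I enumerate the set $L_t$ of all locally $\epsilon$-feasible assignments $y^t$ to the variables of $B_t$ (binary variables in $\{0,1\}$, continuous variables in $\Delta_j$) which satisfy $f_i(y^t)\ge -\epsilon\|f_i\|_1$ for every constraint $i$ with $t(i)=t$. The crude bound $|L_t|\le 2^{|B_t\cap[p]|}(2\rho/\epsilon)^{|B_t\setminus[p]|}=O((2\rho/\epsilon)^{\omega+1})$ controls the per-bag size. The LP has nonnegative variables $\lambda_{t,y}$ for $t\in V(T),\,y\in L_t$ with constraints
\begin{equation}
\sum_{y\in L_t}\lambda_{t,y}=1\quad\forall t,\qquad
\sum_{\substack{y\in L_t\\ y|_{B_t\cap B_{t'}}=z}}\!\!\lambda_{t,y}=\!\!\sum_{\substack{y'\in L_{t'}\\ y'|_{B_t\cap B_{t'}}=z}}\!\!\lambda_{t',y'}\quad\forall \{t,t'\}\in E(T),\,z,
\end{equation}
and recovers each $x_j$ as the marginal $\sum_{y\in L_t:\,y_j=v}v\cdot\lambda_{t,y}$ for any bag $t$ containing $j$, with objective $c^Tx$. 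Multiplying bags, local assignments per bag, and the $\log(\rho/\epsilon)$ bit-length yields the announced $O((2\rho/\epsilon)^{\omega+1}n\log(\rho/\epsilon))$ bound, and the construction time follows from simply enumerating $L_t$ for each $t$.

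For correctness I would argue two things. For part (a), the marginal interpretation yields a global distribution on $\{0,1\}^p\times \prod_j\Delta_j$ whose $t(i)$-projection is supported on $L_{t(i)}$, hence $\mathbb{E}[f_i(\tilde x)]\ge -\epsilon\|f_i\|_1$; averaging and using linearity (after expressing $c^Tx$ as the expectation of $c^T\tilde x$) pushes the bound to the recovered point $\hat x$. The running-intersection property of tree decompositions is what makes the pairwise consistency constraints sufficient to define a coherent global marginal. For part (b), starting from a true optimum $x^*$, I round each continuous coordinate to the nearest grid point $\tilde x^*$, set $\lambda_{t,\tilde x^*|_{B_t}}=1$, and observe that this is LP-feasible, while the objective shift is at most $\eta\|c_N\|_1\le\epsilon\|c_N\|_1$. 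The main obstacle is the discretization lemma ensuring $|f(x)-f(\tilde x)|\le\rho\eta\|f\|_1$ for $x,\tilde x\in[0,1]^n$ with $\|x-\tilde x\|_\infty\le \eta$ and $\deg f\le \rho$; this is what forces the particular scaling $\eta\asymp \epsilon/\rho$. I would prove it monomial-wise using the telescoping identity $x^\alpha-\tilde x^\alpha=\sum_{j}(\prod_{i<j}\tilde x_i^{\alpha_i})(x_j^{\alpha_j}-\tilde x_j^{\alpha_j})(\prod_{i>j}x_i^{\alpha_i})$ and $|x_j^{\alpha_j}-\tilde x_j^{\alpha_j}|\le \alpha_j\eta$ for entries in $[0,1]$, then summing with weights $|p_\alpha|$ to obtain the $\|p\|_1$ factor.
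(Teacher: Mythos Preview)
This theorem is not proved in the paper; it is quoted from \cite{treewidth} (with a pointer to \cite{thesisGonzalo} for the binary-variable extension), so there is no in-paper proof to compare against. Your sketch follows exactly the architecture of that reference: tree decomposition from the chordal completion, dyadic discretization of the continuous coordinates at scale $\eta\asymp\epsilon/\rho$, enumeration of locally $\epsilon$-feasible bag assignments, and a Sherali--Adams style marginal LP with consistency along tree edges. Your discretization lemma and the argument for part~(b) are correct.

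Your argument for part~(a), however, has a real gap. From local feasibility you deduce $\mathbb{E}_\mu[f_i(\tilde x)]\ge -\epsilon\|f_i\|_1$ for the global distribution $\mu$ induced by $\lambda$, and then assert that ``averaging and using linearity pushes the bound to the recovered point $\hat x$''. This is false when $f_i$ is nonlinear: take $f_i(x_1,x_2)=-x_1(1-x_2)$ and $\mu$ uniform on $\{(0,0),(1,1)\}$; then $\mathbb{E}_\mu[f_i]=0$ but $f_i(\hat x)=f_i(\tfrac12,\tfrac12)=-\tfrac14$. Equally problematic, for $j\le p$ the marginal $\hat x_j$ will typically not lie in $\{0,1\}$, so $\hat x\notin\{0,1\}^p\times[0,1]^{n-p}$ and is not even a candidate for membership in $S_\epsilon$ as defined in \eqref{def:Sepsilon}.

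The missing idea is the integrality of your tree-structured LP. The running-intersection property you invoke gives more than the existence of a consistent global distribution: it shows that the feasible region of your LP is exactly the convex hull of the indicator vectors of globally consistent grid assignments. Hence every LP \emph{vertex} corresponds to a single grid point $\tilde x\in\{0,1\}^p\times\prod_j\Delta_j$, which is $\epsilon$-feasible by construction of the sets $L_t$, and an optimal vertex supplies the $\hat x$ in part~(b). You should recover the solution from an LP vertex, not from the marginal average of an arbitrary LP-feasible $\lambda$; this is how (a) and (b) are obtained in \cite{treewidth}.
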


Here we phrased the theorem in a slightly different way compared to
\cite{treewidth}:
\begin{inparaenum}[(a)]
  \item we assume that $\mathcal{I}$ is feasible and
  \item the result in \cite{treewidth} \emph{only} considers
    continuous variables, whereas we allow for binary variables as
    well. This can be done while ensuring that the error term in
    \eqref{eps-opt} only involves coefficients associated with
    continuous variables; see \cite{thesisGonzalo} for details.
\end{inparaenum}

We would like to stress that the approximation provided by
Theorem \ref{potheorem_0} is different from the traditional notion of
approximation used in approximation algorithms: we allow for
$\epsilon$-feasibility, i.e., we allow (slightly) infeasible
solutions, which is usually not the case in approximation algorithms.

For $\rho = 2$ we obtain the following immediate corollary of Theorem
\ref{potheorem_0}. 

\begin{cor} \label{cor:0} For every fixed $\epsilon>0$, 
   there is an algorithm $\mathcal{A}$ such that, given a feasible instance
   $\mathcal{I}$ of QCQP and a chordal completion of $\Gamma[\mathcal{I}]$ with clique number $\leq \omega + 1$,
   it computes an $\epsilon$-feasible solution
   satisfying \eqref{eps-opt} in time $O(C^\omega \textit{poly}(\| \mathcal{I} \|))$, where $C$ is a constant.
 \end{cor}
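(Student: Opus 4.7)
The plan is to derive Corollary \ref{cor:0} as an immediate specialization of Theorem \ref{potheorem_0}, with $\rho = 2$ and the fixed $\epsilon$ absorbed into the constant $C$. The algorithm $\mathcal{A}$ will simply build the LP promised by Theorem \ref{potheorem_0} from $\mathcal{I}$ and the given chordal completion, solve it with any polynomial-time LP method, and return the projection of the LP optimum onto the original $x$-variables; what is left is then just book-keeping of the exponent in $\omega$.

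First I would set $C_0 \doteq 4/\epsilon$, a constant depending only on the fixed parameter $\epsilon$, so that the factor $(2\rho/\epsilon)^{\omega+1}$ appearing in Theorem \ref{potheorem_0} equals $C_0 \cdot C_0^{\omega}$. The ``moreover'' clause of Theorem \ref{potheorem_0} then constructs the LP in time $O(C_0^{\omega+1}\log(2/\epsilon)\cdot \textit{poly}(\|\mathcal{I}\|)) = O(C_0^{\omega}\cdot \textit{poly}(\|\mathcal{I}\|))$ and produces an LP with $s = O(C_0^{\omega}\cdot \textit{poly}(\|\mathcal{I}\|))$ variables and constraints; since its coefficients are rational discretizations at a scale determined only by $\epsilon$ and $\rho$, the overall bit complexity of the LP is $\textit{poly}(\|\mathcal{I}\|)$.

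Next I would solve the LP in time $\textit{poly}(s) = (C_0^\omega)^{k}\cdot \textit{poly}(\|\mathcal{I}\|) = C^{\omega}\cdot \textit{poly}(\|\mathcal{I}\|)$ using a polynomial-time LP algorithm such as the ellipsoid method, where $k$ is the (universal) exponent of that LP solver and $C \doteq C_0^{k}$. Projecting the optimum onto the original $x$-coordinates yields a point that is $\epsilon$-feasible by part (a) and satisfies \eqref{eps-opt} by part (b) of Theorem \ref{potheorem_0}, which finishes the proof within the claimed total running time. The only subtlety worth flagging is to verify that $C$ is genuinely independent of $\omega$ and $\|\mathcal{I}\|$; this holds precisely because $k$ is a universal constant, so raising $C_0^{\omega}$ to the $k$-th power merely rescales the base of the exponential in $\omega$. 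I do not anticipate any genuine obstacle: the corollary is essentially Theorem \ref{potheorem_0} with $\rho$ and $\epsilon$ frozen.
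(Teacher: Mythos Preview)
Your proposal is correct and matches the paper's approach: the paper simply states that this is an immediate corollary of Theorem~\ref{potheorem_0} with $\rho = 2$, and your write-up just spells out the routine bookkeeping (absorbing the fixed $\epsilon$ into the base $C$ and solving the resulting LP in polynomial time). One tiny wording nit: the \emph{total} bit-size of the LP is $O(C_0^{\omega}\cdot\textit{poly}(\|\mathcal{I}\|))$, not $\textit{poly}(\|\mathcal{I}\|)$, but since you then bound the LP-solve time by $\textit{poly}(s)$ anyway, this does not affect your final $C^{\omega}\cdot\textit{poly}(\|\mathcal{I}\|)$ bound.
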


We establish an (almost) matching lower bound for Theorem
\ref{potheorem_0} by providing an (almost) matching lower bound for
Corollary \ref{cor:0}. For this we use the strategy of
\citet{Chandrasekaran08} adapted to the general optimization
setting. We make use of the following definition:

\begin{Def}
We say a countable family of graphs $\{\mathcal{G}_k\}_{k=1}^\infty$ is \emph{polynomial-time enumerable} if there is an algorithm such that, given $k$, it outputs a description of $\mathcal{G}_k$ in time $\textit{poly}(k)$.
\end{Def}

Using this definition, we prove the following; we discuss the
complexity theoretic assumption $NP \not\subseteq BPP$ in Section \ref{tools}:

\begin{maintheorem}\label{main-theo}
  Fix $\epsilon < 1/10$ and let $\{\mathcal{G}_k\}_{k=1}^\infty$ be an arbitrary polynomial-time enumerable
  family of graphs indexed by treewidth. Let
  $\mathcal{A}$ be an algorithm such that for all instances
  $\mathcal{I}_k$ of QCQP such that
  $\Gamma[\mathcal{I}_k] = \mathcal{G}_k$ algorithm \(\mathcal{A}\)
  computes an $\epsilon$-feasible solution satisfying \eqref{eps-opt} in time $T(k)\cdot \textit{poly}(\|\mathcal{I}_k \|)$, then assuming 
  $NP \not\subseteq BPP$ implies that $T(k)$ grows super-polynomially in $k$.
\end{maintheorem}

Note that assuming the family is polynomial-time enumerable implies that an encoding of $\mathcal{G}_k$ of size polynomial in $k$ exists. This is indeed a desirable feature, since we will be dealing with a polynomial-time reduction, and thus we need to have at least an efficient access to the graph family. In fact, \citet{Chandrasekaran08} assume this implicitly, as they assume access to the graph family via a polynomial time ``advice''.

\subsection{Related intractability results in CSPs}\label{sec:negativeresultsCSP}
Many treewidth-based intractability results have been obtained in the CSP community. Two crucial contributions are those of \citet{grohe2007complexity} and \citet{marx2010can} who proved that treewidth, in a sense, is the only tractable graph structure in a CSP. More specifically, assuming FPT$\neq$W[1], \citet{grohe2007complexity} proved that CSPs defined over a recursively enumerable family of graphs are polynomially solvable if and only if the family has bounded treewidth. Later on, \citet{marx2010can} proved the following result that, assuming stronger complexity theoretic assumptions, leads to sharper lower bounds.

\begin{theorem}{(\citet[Theorem 1.3]{marx2010can})}\label{thm:marx}
If there is a class $\mathcal{G}$ of graphs with unbounded treewidth, an algorithm $\mathcal{A}^M$, and a function $f$ such that $\mathcal{A}^M$ correctly decides every binary CSP instance and the running time is $f(G) \| I \|^{o(tw(G)/\log tw(G))}$
for binary $CSP(G)$ instances $I$ with intersection graph $G \in \mathcal{G}$, then the Exponential Time Hypothesis (ETH) fails.
\end{theorem}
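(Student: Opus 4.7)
The plan is to prove the contrapositive: assuming the hypothesized class $\mathcal{G}$ and algorithm $\mathcal{A}^M$, construct a $2^{o(n)}$-time algorithm for $3$-SAT, contradicting ETH. Fix a $3$-SAT instance $\varphi$ on $n$ variables and $O(n)$ clauses, and turn its constraint graph $H$ into a bounded-degree graph on $\Theta(n)$ vertices by standard preprocessing (splitting high-degree variables via equality gadgets). Since $\mathcal{G}$ has unbounded treewidth, I can pick $G \in \mathcal{G}$ whose treewidth $k$ is large enough to host $H$ via the embedding described next.

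The technical backbone is an embedding theorem at the heart of \citet{marx2010can}: any bounded-degree graph $H$ on $m$ vertices admits a bounded-congestion topological minor embedding into every graph $G$ with $tw(G) = \Omega(m \log m)$, i.e., a map $\phi$ sending each $v \in V(H)$ to a connected subgraph $\phi(v) \subseteq G$ such that every $G$-vertex lies in $O(1)$ of the $\phi(v)$, and each $H$-edge $uv$ is realized by a path in $G$ connecting $\phi(u)$ to $\phi(v)$ with $O(1)$ total path overlap. This embedding is derived from the polynomial excluded-grid theorem together with concurrent-flow constructions inside high-treewidth graphs, and the $\log k$ factor that appears in the exponent of the final bound reflects exactly what this embedding unavoidably loses.

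Using $\phi$, I would encode $\varphi$ as a binary CSP $I$ whose primal graph is $G$: each $G$-vertex $u$ carries a variable whose domain records the Boolean values of all $H$-variables $v$ with $u \in \phi(v)$ (bounded congestion keeps the domain size $\textit{poly}(n)$, so $\|I\| = \textit{poly}(n)$), and every $3$-clause of $\varphi$ is rewritten as a chain of arity-$2$ consistency constraints traversing the routing paths between the three host super-vertices. Satisfiability is preserved. Feeding $I$ to $\mathcal{A}^M$ then yields a running time $f(G) \cdot \|I\|^{o(tw(G)/\log tw(G))} = f(G) \cdot 2^{o(n)}$; the factor $f(G)$ is a constant for each fixed $n$, and can be absorbed either via non-uniform ETH, or uniformly under a polynomial-time enumerability assumption on $\mathcal{G}$ (as in the Main Theorem of this paper). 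This delivers the promised $2^{o(n)}$-time $3$-SAT algorithm and contradicts ETH.

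The main obstacle — and what separates Marx's tight exponent from the earlier $W[1]$-hardness of \citet{grohe2007complexity} — is the quantitative embedding step: producing bounded-congestion topological minor embeddings of arbitrary bounded-degree guests into every treewidth-$k$ host of $\mathcal{G}$ with the correct $\log k$ loss, ideally in time $\textit{poly}(k)$. Calibrating the constants so that the reduction size and the $f(G)$ overhead together remain $2^{o(n)}$ is delicate, but once the combinatorial embedding is in place, the CSP encoding and the ETH derivation are comparatively routine.
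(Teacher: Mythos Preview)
The paper does not prove this theorem at all: Theorem~\ref{thm:marx} is quoted verbatim from \citet{marx2010can} as background in Section~\ref{sec:negativeresultsCSP}, and the authors explicitly use it only to contrast their Main Theorem~\ref{main-theo} with existing CSP intractability results. There is therefore no ``paper's own proof'' to compare your proposal against.

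That said, your outline is a faithful high-level sketch of Marx's actual argument: the contrapositive reduction from $3$-SAT, the bounded-degree preprocessing, the key embedding theorem (bounded-depth minors / almost-embeddings into high-treewidth hosts with an unavoidable $\log k$ loss), and the binary-CSP encoding via bounded-congestion routing are precisely the ingredients of \cite{marx2010can}. Your identification of the embedding step as the crux is correct---Marx's main technical contribution is exactly the quantitative almost-embedding lemma you describe. One minor caution: in Marx's formulation the function $f$ depends only on the graph $G$ (not on the instance), so the ``absorb $f(G)$'' step is handled by choosing, for each $n$, a single fixed $G\in\mathcal{G}$ of the right treewidth; you gesture at this but should be careful that the non-uniformity does not leak into the ETH contradiction without the recursive-enumerability assumption Marx also discusses.
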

Here \emph{binary CSP} refers to CSP problems where each constraint involves at most two variables, and does \emph{not} imply that the variables' domain is $\{0,1\}$. Note that $\mathcal{A}^M$ in Theorem \ref{thm:marx} is assumed to be defined over \emph{all} CSP instances, meaning, instances with \emph{any} intersection graph (although the running-time assumption is only made on the family $\mathcal{G}$). However, \citeauthor{marx2010can} also provides an alternative result that requires $\mathcal{A}^M$ to be defined only on CSPs whose intersection graph belongs to the family $\mathcal{G}$, under the assumption of $\mathcal{G}$ being recursively enumerable.

From this, it is natural to ask whether Theorem \ref{main-theo} can be obtained from these
already known results. We argue why this is not the case and that our result is rather complementary.\\

The first evident difference lies in the complexity-theoretic assumption. \citet{grohe2007complexity} assumes FPT$\neq$W[1], \citet{marx2010can} assumes ETH, whereas we assume $NP\not\subseteq BPP$.
$NP\not\subseteq BPP$, roughly speaking, asserts that certain problems in $NP$ cannot be solved in randomized polynomial time. Not much is known about the relationship of $NP$ and $BPP$, but is widely believed that $P = BPP$, which would make $NP\not\subseteq BPP$ equivalent to $P\neq NP$. We describe $BPP$ more precisely in Section \ref{tools}.  

Secondly, the results obtained by \citeauthor{grohe2007complexity} and \citeauthor{marx2010can} are impossibility results for solving CSPs \emph{exactly}, while our goal is to provide a converse to Theorem \ref{potheorem_0} ---an approximation-type of result.
Thus, we must allow algorithm $\mathcal{A}$ to return potentially infeasible solutions. It is not clear to us, and seems a challenging task, whether the sequence of reductions from e.g.~\cite{marx2010can} can be extended to prove bounds on the approximation guarantee.

Finally, and most importantly, \citeauthor{grohe2007complexity} and \citeauthor{marx2010can} deal with CSPs defined over unbounded domains. In this case, the treewidth-based algorithmic complexity upper bound is roughly $n^{O(\omega)}$, which is what the authors work with. In our case, algorithm $\mathcal{A}$ in Theorem \ref{main-theo} is only assumed to be defined over QCQP instances (which can be viewed as a subset of CSP instances) and whose variables' domain is only $\{0,1\}$ or $[0,1]$. This causes the upper bound in Corollary \ref{cor:0} to be better than $n^{O(\omega)}$ and thus we need a different 
procedure to provide an intractability result in this case. We note that due to the same observation, the result by \citet{Chandrasekaran08} that we discuss below does not follow from \citeauthor{marx2010can}'s.

\subsection{Intractability in the 0/1 case}\label{sec:chandrasekaran}

In the 0/1 case, a similar result to Theorem \ref{main-theo}
can be obtained as a direct consequence of the work of \citet{Chandrasekaran08} in the \emph{exact} setting, i.e., when no approximation is allowed. This was done
in the context of \emph{graphical models}.

Given a graph $G=(V,E)$, a collection of (binary)
random variables $x_v$ with $v\in V$, and for each $K\subseteq V$
forming a clique of $G$ a function $\psi_K$ which only involves
variables $x_v$ with $v\in K$, then the \emph{inference problem}
involves computing the partition function $Z(\psi)$ defined as
\[Z(\psi) = \sum_{x\in \{0,1\}^V} \prod_{K \in \mathcal{K}} \psi_{K}(x_K),\]
where $\mathcal{K}$ is the set of all cliques in $G$. It is known that
if the underlying graph $G$ has bounded treewidth, then the inference
problem can be solved in polynomial time
(see \cite{wainwrightjordanlong}). One of the main results in
\cite{Chandrasekaran08} provides a converse to this statement: given
any family of graphs $\{\mathcal{G}_k\}_{k=1}^\infty$ indexed by
treewidth---under the complexity assumptions of Theorem
\ref{main-theo}---there exist instances defined over that family of
graphs such that inference requires time super-polynomial in
$k$.

The proof can be directly adapted to state the same result
regarding computing an optimal solution for a 0/1 PO problem. Hence,
the result by \citet{Chandrasekaran08} can be viewed as the 0/1 version
Theorem \ref{main-theo}, which does not involve $\epsilon$-feasible
solutions, as in such context exact solutions can be computed.

\begin{remark}
  The original proof in \cite{Chandrasekaran08} makes use of the
  $NP\not\subseteq P/\textit{poly}$ hypothesis and the so called
  \emph{Grid-minor} hypothesis. Since then, the latter was shown to
  be true by \citet{chekuri2016polynomial}, along with an algorithmic
  result allowing the use of the $NP \not\subseteq BPP$ instead of
  $NP\not\subseteq P/\textit{poly}$.
\end{remark}

Here we extend these results to include continuous variables and show
that even approximately solving the problem remains intractable. The
proof is along the lines of \cite{Chandrasekaran08} and we follow
their overall strategy. Our contribution here is to replace reductions
between distributions and potential functions with reductions
involving QCQPs and their
approximations, as well as making use of randomized algorithms instead of non-uniform algorithms. To avoid confusion, we would like to stress that the
notion of \emph{Approximate Inference} presented in
\cite{Chandrasekaran08} is a different concept compared to finding an
$\epsilon$-feasible solution to a PO problem.

\subsection{Complexity-theoretic Assumptions and Graph-theoretic Tools}\label{tools}

For a precise definition of $BPP$ and the commonly believed 
$NP \not\subseteq BPP$ hypothesis,
we refer the reader to \cite{AB}. Simplifying here, $BPP$ is the class
of languages $L$ for which a polynomial time probabilistic Turing
machine exists which, given an input $x$, provides a wrong answer to the decision
$x\in L$ with probability of at most 1/3, whether in fact $x\in L$ or $x\not\in L$. 
In our context here, it is sufficient to know that this 
complexity-theoretic assumption implies that MAX-2SAT in planar graphs
(an NP-hard problem; see \cite{guibas1991approximating}) does not belong to
$BPP$.

The second important tool we will use stems from work on the famous graph minor
theorem. We briefly recall relevant results here, phrased to match the language in
\cite{Chandrasekaran08}.

\begin{theorem}[\citet{ROBERTSON1994323}] \label{minor1}
   There exist universal constants $c_3$ and
  $c_4$ such that the following holds. Let $G$ be a $g\times g$
  grid. Then, (a) $G$ is a minor of all planar graphs with treewidth
  greater than $c_3 g$. Further, (b) all planar graphs of size (number
  of vertices) less than $c_4 g$ are minors of $G$.
\end{theorem}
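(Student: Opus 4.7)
My plan is to handle the two parts separately, since they have quite different flavors. Part (b) is an elementary consequence of planar grid drawings, while part (a) is the deep planar excluded grid theorem originally due to Robertson, Seymour, and Thomas.

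For part (b), I would invoke one of the classical planar straight-line grid embedding theorems (de Fraysseix--Pach--Pollack or Schnyder), which guarantee that every planar graph on $N$ vertices admits a non-crossing straight-line drawing on a $(N-2) \times (N-2)$ integer grid. Given such a drawing, one perturbs the edges so that they are routed along axis-aligned paths inside a refined grid of size $O(N) \times O(N)$ with internally vertex-disjoint edge-paths. The resulting ``inflated'' graph is then a subgraph of that grid, and contracting each edge-path back to a single edge exhibits the original planar graph as a minor. Choosing $c_4$ small enough that $O(c_4 g) \leq g$ delivers the statement.

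For part (a), the plan is to combine the Seymour--Thomas bramble/treewidth duality with a planar-specific extraction argument. The first tool is the statement that $tw(G) \geq k$ if and only if $G$ admits a bramble of order at least $k+1$. Starting from a planar $G$ with $tw(G) > c_3 g$, this immediately yields a bramble $\mathcal{B}$ whose order is linear in $g$. The second tool, which is planar-specific, extracts from $\mathcal{B}$ a nested sequence of cycles $C_1, \dots, C_t$ together with $t$ paths each crossing every $C_i$, with $t$ linear in $g$. Such an arrangement can then be reduced to a $g \times g$ grid via a bounded number of edge contractions and deletions.

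The main obstacle will be this second step: the extraction of nested cycles and transverse paths from a planar bramble. Here one has to rely on the combinatorics of the planar embedding, the duality between cycles in $G$ and cuts in the dual $G^{*}$, and a careful analysis of how bramble elements must wrap around one another once planarity is imposed. This is where the planarity hypothesis is essential, since for general graphs the best known bound in the excluded grid theorem is polynomial (not linear) in $g$, so a direct analogue of (a) is known to fail outside the planar case. I would model this part of the argument on the original proof of Robertson--Seymour--Thomas, with the simplifications due to Diestel and collaborators making the presentation more palatable.
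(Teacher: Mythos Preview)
The paper does not prove this theorem; it is stated as a known result from the graph-minors literature (attributed to Robertson and Seymour, phrased following \cite{Chandrasekaran08}) and used purely as a black box in the reduction of Section~\ref{section2SATreduction}. There is therefore no ``paper's own proof'' to compare your proposal against.

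That said, your outline is a faithful sketch of how the result is established in the literature. For part~(b), the route via planar grid drawings is standard, though the passage from a straight-line drawing to internally vertex-disjoint axis-aligned edge-paths in a constant-factor refined grid needs more care than your one sentence suggests; the clean way is via orthogonal drawings with a bounded number of bends per edge (Tamassia's algorithm, which the paper in fact cites elsewhere as \cite{tamassia}). For part~(a) you correctly identify the planar excluded-grid theorem and the bramble-duality machinery; the nested-cycles-plus-transverse-paths extraction is essentially the Robertson--Seymour--Thomas argument, and the Diestel et al.\ simplification you mention is indeed the right source for a readable write-up. There is no genuine gap in your plan, but since the paper treats the statement as background you would not be expected to reproduce any of it here.
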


The next theorem relaxes the planarity assumption however
only for one of the directions of Theorem~\ref{minor1}.  

\begin{theorem}[\citet{ROBERTSON1994323}] \label{minor2}  Let $G$ be a
  $g\times g$ grid. There exists a finite $\kappa_{GM}(g)$ such that
  $G$ is a minor of all graphs with treewidth greater than
  $\kappa_{GM}(g)$. Further, $c_1 g^2 \log g \leq \kappa_{GM}(g) \leq 2^{c_2 g^5}$, where
  $c_1$ and $c_2$ are universal constants (i.e., they are independent
  of $g$).
\end{theorem}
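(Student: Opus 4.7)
The plan is to handle the upper and lower bounds on $\kappa_{GM}(g)$ by rather different means, following the broad outline of the original Robertson--Seymour argument.

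For the upper bound, I would exploit the bramble/tangle characterization of treewidth. Concretely, by the Seymour--Thomas duality, if $tw(H) > \kappa$ then $H$ contains a bramble of order at least $\kappa + 1$, i.e.\ a family of pairwise-touching connected subgraphs with no hitting set of size $\leq \kappa$. Starting from such a highly connected combinatorial witness, the goal is to extract a planar ``wall'' of height $\Theta(g)$, which trivially contains a $g \times g$ grid as a minor. The extraction would proceed iteratively: using Menger's theorem on well-linked subsets of the bramble to produce many pairwise vertex-disjoint paths, then invoking (a version of) the Robertson--Seymour flat wall theorem to argue that, after discarding a bounded chunk of the structure, the routing can be chosen to be topologically flat. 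Finally, a ``cleaning'' step collapses the flat wall to an honest grid minor.

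For the lower bound $c_1 g^2 \log g$, I would give an explicit construction of graphs of treewidth $\Omega(g^2 \log g)$ that contain no $g \times g$ grid minor. A natural candidate is a random $3$-regular graph on $n = \Theta(g^2 \log g)$ vertices: with high probability such a graph is an expander, hence has treewidth $\Omega(n)$, yet its largest planar minor has only $O(n / \log n)$ vertices because high girth and expansion preclude the dense planar substructure that would host a large grid. Tuning $n$ against $g$ yields the claimed lower bound; one could also use a more deterministic expander-based construction at the cost of slightly messier constants.

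The main obstacle is unquestionably the wall-extraction step of the upper bound: converting an abstract highly connected object (bramble/tangle) into a concrete topologically well-behaved planar structure requires delicate iterated routing, and every routing/cleaning step incurs a polynomial blow-up in the parameters. Accumulating roughly $g^{O(1)}$ iterations to force planarity is what drives the enormous bound $2^{c_2 g^5}$, and any substantial improvement (as in the subsequent Chekuri--Chuzhoy polynomial bound) requires genuinely new combinatorial ideas rather than a tightening of the original argument. The lower bound, by contrast, is routine once the right random or expander construction is in hand.
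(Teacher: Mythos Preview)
The paper does not prove this theorem at all: it is quoted verbatim from Robertson and Seymour and used as a black box (immediately followed by commentary and then Theorem~\ref{chekuri} of Chekuri--Chuzhoy). There is therefore no ``paper's own proof'' to compare your proposal against.

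That said, your sketch is a reasonable high-level account of the original Robertson--Seymour strategy for the upper bound and of the standard expander-based lower bound. A few caveats if you intend to flesh it out elsewhere: the $2^{c_2 g^5}$ bound does not come from ``$g^{O(1)}$ iterations each with polynomial blow-up'' as you suggest (that would give a polynomial, not an exponential); the exponential arises because the tangle/wall extraction in the original Graph Minors V argument loses a tower of parameters at each structural step, and the $g^5$ exponent reflects the specific sequence of reductions through tangles, meshes, and walls. Also, your lower-bound argument via random $3$-regular expanders is correct in spirit but the quantitative claim ``largest planar minor has $O(n/\log n)$ vertices'' needs a citation or proof; the cleanest route is the explicit construction of Robertson, Seymour, and Thomas showing that bounded-degree expanders of treewidth $\Theta(g^2\log g)$ have no $K_g$ minor, hence no $g\times g$ grid minor.
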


The last theorem provides bounds on the magnitude of $\kappa_{GM}(g)$
in order for it to have the $g\times g$ grid as a minor. The constant
$\kappa_{GM}(g)$ was conjectured to be polynomial in $g$, and was used
as a complexity-theoretic assumption (under the name \emph{Grid-minor
  hypothesis}) in \cite{Chandrasekaran08}. Since then, a recent
breakthrough by \citet{chekuri2016polynomial} resolved this in the positive.

\begin{theorem}[\citet{chekuri2016polynomial}]\label{chekuri}

 \[\kappa_{GM}(g) \in O(g^{98}\textit{poly log}(g))\]
 Moreover, there is a polynomial time randomized algorithm that, given a graph G with treewidth at least $\kappa_{GM}(g)$, with high probability\footnote{probability at least $1 - 1/|V(G)|^c$ for some constant $c>1$} outputs the sequence of grid minor operations transforming $G$ into the $g \times g $ grid.
\end{theorem}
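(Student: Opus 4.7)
The result is a deep structural/algorithmic theorem, so my plan is to sketch the Chekuri--Chuzhoy strategy rather than attempt a self-contained argument. The overall approach reduces the grid-minor problem to finding a sufficiently rich routing structure inside a graph of high treewidth, and then carving a grid out of that structure. The central object is a \emph{well-linked set}: a vertex set $T$ such that every balanced separator of $G$ must cut many paths between halves of $T$. A standard fact (Reed-style well-linked decomposition) is that a graph of treewidth $\Omega(k)$ contains a well-linked set of size $\Omega(k/\mathrm{polylog})$, and this step can be implemented in randomized polynomial time by approximate multicommodity flow / sparsest-cut subroutines.

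The second step is to build what Chekuri and Chuzhoy call a \emph{path-of-sets system}: a long sequence $S_1,\ldots,S_\ell$ of vertex-disjoint subgraphs, each containing a well-linked ``interface'', linked by vertex-disjoint paths between consecutive interfaces. The key technical lemma is that a well-linked set of polynomial size can be converted into a path-of-sets system with both $\ell$ and the interface size polynomial in $g$, with a polynomial blow-up of $g^{O(1)}\cdot\mathrm{polylog}$. This is done by an iterative ``taming'' procedure: at each step one either extracts a new piece of the path or reroutes flow to increase the well-linkedness of the remaining graph. The $g^{98}\,\mathrm{polylog}(g)$ bound comes from tracking the loss at each of a constant number of these flow/routing reductions.

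The final step converts a path-of-sets system into a $g\times g$ grid minor. Since the interfaces are well-linked, one can route any desired matching between them; applying this in each ``column'' $S_i$ produces $g$ vertex-disjoint paths traversing the whole system, and a further routing argument inside the $S_i$ creates the $g$ perpendicular ``rows''. Taking the planarization of this crossbar yields the grid minor. Algorithmically every routing is realized by approximate concurrent flow, which is randomized polynomial time and succeeds with high probability, giving statement (b).

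The main obstacle is step two: converting high treewidth into a long, well-linked, vertex-disjoint path-of-sets structure without losing more than a polynomial factor. A naive implementation of the Robertson--Seymour argument gives only an exponential bound (as reflected in the $2^{c_2 g^5}$ upper bound of Theorem~\ref{minor2}), and the whole point of the Chekuri--Chuzhoy breakthrough is to replace the exponential loss with a polynomial one by combining well-linked decompositions with flow-rerouting in a careful amortized way. I would expect the bulk of the effort and essentially all the nontrivial case analysis to be concentrated here; the well-linkedness extraction and the final grid-construction step are comparatively standard.
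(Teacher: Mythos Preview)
The paper does not prove this theorem at all: it is stated as a black-box citation of \citet{chekuri2016polynomial}, used only as an input to Corollary~\ref{main-cor} and the reduction in Section~\ref{section2SATreduction}. There is therefore no ``paper's own proof'' to compare against; the authors simply invoke the result.

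Your sketch is a reasonable high-level outline of the actual Chekuri--Chuzhoy argument (well-linked sets, path-of-sets system, extracting the grid), and you correctly locate the main difficulty in the polynomial construction of the path-of-sets system. But for the purposes of this paper none of that is required: the theorem is imported wholesale, and a one-line citation is all the paper provides or needs.
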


\begin{remark}
In \cite{chekuri2016polynomial}, the output of the randomized
algorithm is a \emph{model} of the minor. Such \emph{model} can be
directly turned into a set of minor operations.
\end{remark}

\begin{remark}
  There has been some considerable progress recently regarding the exponent of the polynomial dependency in Theorem \ref{chekuri}. We refer the reader to \cite{chuzhoy2016improved} for these improvements. Nonetheless, these newer results are non-algorithmic, which is undesirable for our purposes.
\end{remark}

Theorem \ref{chekuri}, together with Theorem~\ref{minor1}
and Theorem~\ref{minor2}, yields the following corollary:

\begin{cor} \label{main-cor}
Let $G = (V,E)$ be a planar graph of $n$ nodes. There exists a polynomial $\kappa (n)$ such that $G$ is a
minor of \emph{all} graphs of treewidth at least $\kappa (n)$. 
\end{cor}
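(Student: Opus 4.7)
The plan is a direct chaining of the three theorems just stated. Given a planar graph $G$ on $n$ vertices, I would first select a grid size $g$ large enough that Theorem~\ref{minor1}(b) forces $G$ to embed into a $g\times g$ grid as a minor. Concretely, picking $g = \lceil n/c_4 \rceil + 1$ guarantees $n < c_4 g$, so Theorem~\ref{minor1}(b) yields that $G$ is a minor of the $g\times g$ grid.

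Next, I would invoke Theorem~\ref{chekuri} (or equivalently, Theorem~\ref{minor2} for the non-algorithmic part) to ensure that every graph whose treewidth is at least $\kappa_{GM}(g)$ contains the $g\times g$ grid as a minor. Defining
\[
\kappa(n) \;\doteq\; \kappa_{GM}\!\bigl(\lceil n/c_4\rceil + 1\bigr),
\]
Theorem~\ref{chekuri} gives $\kappa_{GM}(g) \in O(g^{98}\,\textit{poly log}(g))$, so substituting the choice of $g$ one obtains $\kappa(n) \in O(n^{98}\,\textit{poly log}(n))$, which is polynomial in $n$ as required.

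To finish, I would use the transitivity of the minor relation: if $H$ has treewidth at least $\kappa(n)$, then the $g\times g$ grid is a minor of $H$, and $G$ is a minor of the $g\times g$ grid, so $G$ is a minor of $H$. Since the choice of $H$ was arbitrary, the corollary follows.

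There is no substantial obstacle here; the entire content of the corollary is bookkeeping on top of Theorems~\ref{minor1} and~\ref{chekuri}. The only mild subtlety is making sure the quantifiers are arranged correctly: one first fixes $G$ (hence $n$, hence $g$, hence the threshold $\kappa(n)$), and only then quantifies over graphs $H$ with $tw(H)\ge \kappa(n)$. Everything else, including the polynomial growth of $\kappa(n)$, is immediate from the bound in Theorem~\ref{chekuri}.
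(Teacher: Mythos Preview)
Your proposal is correct and follows essentially the same route as the paper: embed the planar graph into a suitably large grid via Theorem~\ref{minor1}(b), then use Theorem~\ref{chekuri} (the polynomial bound on $\kappa_{GM}$) together with transitivity of the minor relation. The paper does not spell out a proof beyond noting that the corollary follows from Theorems~\ref{minor1}, \ref{minor2}, and~\ref{chekuri}, and later uses exactly the definition $\kappa(s)=\kappa_{GM}(s/c_4)$; your version is actually slightly more careful in handling the strict inequality $n<c_4 g$.
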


The above in particular implies that $G$ is a minor of $\mathcal{G}_k$ for all $ k \geq
\kappa(n)$ for the sequence of graphs in Theorem~\ref{main-theo}. 

\subsection{Proof of Theorem
  \ref{main-theo}} \label{section2SATreduction} The outline of the
proof of Main Theorem \ref{main-theo} is as follows. We start from a
NP-hard instance $\mathcal{I}$ of QCQP, whose intersection graph
$\Gamma[\mathcal{I}]$ is planar. Recall that we assume we are given an
arbitrary family of graphs $\{\mathcal{G}_k\}_{k=1}^\infty$ indexed by
treewidth. Due to Corollary \ref{main-cor}, $\Gamma[\mathcal{I}]$ is a
minor of $\mathcal{G}_k$ for some $k$ large enough. We then construct
an instance $\mathcal{I}_{k}$ of QCQP equivalent to $\mathcal{I}$
whose intersection graph is exactly $\mathcal{G}_k$. This makes it
possible to use algorithm $\mathcal{A}$ over $\mathcal{I}_k$, which yields the
conclusion. The key ingredient is the following: having a family with
unbounded treewidth allows us to embed the graph defining the NP-Hard
problem into a graph of the given family, even if this family is
arbitrary.

\subsubsection{Formulating MAX-2SAT as a special PO
  problem} \label{first-step} Consider the NP-Hard problem of planar
MAX-2SAT with underlying planar graph $G=(V,E)$. Denote
$\{C_i\}_{i=1}^m$ the clauses and $E=\{e_i\}_{i=1}^m$ the edges of
\(G\). Let the variables be $x_j$ with $j \in [n]$. Then
\[e_i = \{x_{i_1}, x_{i_2}\} \Leftrightarrow C_i = \{x_{i_1} \vee x_{i_2}\} \vee  C_i = \{x_{i_1} \vee \overline{x_{i_2}}\} \vee  C_i = \{\overline{x_{i_1}} \vee x_{i_2}\} \vee  C_i = \{\overline{x_{i_1}} \vee \overline{x_{i_2}}\}\]

We can formulate MAX-2SAT directly as a QCQP:
\begin{subequations}
\begin{align}\mbox{(MAX-2SAT-1): } \max &\, \sum_{i=1}^m y_i \ \\
\mbox{s.t. } &  y_i - f_i(x_{i_1}, x_{i_2}) \ \le \ 0 && i\in [m]\\
&  x_j^2 - x_j = 0 && j\in [n]  \\
&  y_i \in \{0,1\} && i\in [m]  \\
&  x_j \in [0,1] &&  j\in [n],
\end{align}
\end{subequations}
where 
\[  f_i(x_{i_1}, x_{i_2}) = \left\{
\begin{array}{ll}
x_{i_1} + x_{i_2} & \text{if }  C_i = \{x_{i_1} \vee x_{i_2}\} \\
x_{i_1} + (1 - x_{i_2}) & \text{if }  C_i = \{x_{i_1} \vee \overline{x_{i_2}}\} \\
(1 - x_{i_1}) + x_{i_2} & \text{if }  C_i = \{\overline{x_{i_1}} \vee x_{i_2}\} \\
(1- x_{i_1}) + (1- x_{i_2}) & \text{if }  C_i = \{\overline{x_{i_1}} \vee \overline{x_{i_2}}\} \\
\end{array}\right.,\]
thus $y_i = 1$ implies that clause $C_i$ is satisfied. Let
$\mathcal{I}$ be an instance of MAX-2SAT-1. Note that using this formulation the graph \(G\) is a subgraph of the
intersection graph $\Gamma[\mathcal{I}]$. It is also not hard to see
that $\Gamma[\mathcal{I}]$ is planar, as we only need to add vertices
$y_i$, and each vertex $y_i$ is connected to the endpoints of one
particular edge of $G$. We would like to emphasize that constraints
$x_j^2 - x_j = 0 $ are equivalent to simply requiring
$x_j \in \{0,1\}$, so we could formulate MAX-2SAT as a pure binary
problem, however, since we are aiming for statements about the complexity of
approximating PO problems, we deliberately chose a formulation using variables that
can be continuous in nature; this will become clear soon. 

The above formulation of MAX-2SAT is straight-forward, however for
technical reasons we use the following equivalent alternative.  The
advantage of this formulation is that all constraints involve
only \(1\) or \(2\) variables, simplifying the later analysis.
\begin{subequations}
\begin{align}
\mbox{(MAX-2SAT): } \max\, & \sum_{i=1}^m (y_{i_1} + y_{i_2}) \ \\
\mbox{s.t. } & y_{i_1} - f_{i_1}(x_{i_1}) \ \le \ 0  && i\in[m]\\
&  y_{i_2} - f_{i_2}(x_{i_2})  \ \le \ 0 &&i\in [m] \\
&  y_{i_1} + y_{i_2}  \ \le \ 1 && i\in [m]\\
&  x_j^2 - x_j = 0 && j\in [n] \\
&  y_i \in \{0,1\} && i\in [m] \\
&  x_j \in [0,1] && j\in [n],
\end{align}
\end{subequations}
where
\[ f_{i_1}(x_{i_1}) = \left\{
\begin{array}{ll}
x_{i_1} & \text{if }  C_i = \{x_{i_1} \vee x_{i_2}\}  \text{ or }  C_i = \{x_{i_1} \vee \overline{x_{i_2}}\} \\
1- x_{i_1} & \text{if } C_i = \{\overline{x_{i_1}} \vee x_{i_2}\} \text{ or }  C_i = \{\overline{x_{i_1}} \vee \overline{x_{i_2}}\}
\end{array}\right.,\]
and $f_{i_2}(x_{i_2})$ is similarly defined.

\subsubsection{Graph Minor Operations} \label{minor-operations} Let
$\mathcal{I}$ be an instance of MAX-2SAT with planar intersection
graph $\Gamma[\mathcal{I}]$. Given a target graph $H$ which has $\Gamma[\mathcal{I}]$ as a minor, in this
section we show how to construct a QCQP instance $\mathcal{I}_H$
equivalent to $\mathcal{I}$. The complexity of this reduction is polynomial in the
number of minor operations (vertex deletion, edge deletion, edge
contraction), assuming that we know in advance which those operations
should be.  We will first show this for $H$ being
contractable to $\Gamma[\mathcal{I}]$ using a \emph{single} minor
operation and then argue that this is without loss of generality by
repeating the argument. We distinguish the following cases:

\begin{enumerate}[(a)]
\item \emph{Vertex Deletion.} If the minor operation is a vertex
  deletion of a vertex $u\in V(H)$, we define $\mathcal{I}_H$ as $\mathcal{I}$
  plus a new variable $x_u \in [0,1]$ with objective coefficient
  \(0\). Additionally, for all $ v\in N_{H}(u)$ we add the redundant
  constraint $x_v + x_u \geq 0$.
\item \emph{Edge Deletion.} If the minor operation is an edge deletion
  of an edge $(u,v) \in
  E(H)$, we define $\mathcal{I}_H$ as $\mathcal{I}$ plus the redundant
  constraint $x_v + x_u \geq 0$.
\item \emph{Edge contraction.} If the minor operation is an edge
  contraction of $(u,v) \in E(H)$ to form
  $w \in V(\Gamma[\mathcal{I}])$, then we proceed as follows.

  Let $N_H(u)$ be the neighbors of $u$ in $H$. Note that in
  $\mathcal{I}$ all constraints involve at most \(2\) variables, hence
  there is a one-to-one correspondence of edges in
  $\Gamma[\mathcal{I}]$ and constraints involving \(2\) variables in
  $\mathcal{I}$, and all these constraints are linear. Such
  constraints have the form
\[ a_{w,t} z_{w} + b_{w,t} z_{t} \leq d_{w,t} \quad t\in N_{\Gamma[\mathcal{I}]}(w),\]
where variables $z$ can be either variables $x$ or $y$ in MAX-2SAT,
depending on node $w$. Using this, we define $\mathcal{I}_H$ from
$\mathcal{I}$ by removing variable $z_{w}$, adding variables $z_{u}$
and $z_{v}$, and adding the following constraints
\begin{eqnarray*}
a_{w,t} z_{u} + b_{w,t} z_{t} &\leq& d_{w,t} \quad t\in N_{H}(u) \\ 
a_{w,t} z_{v} + b_{w,t} z_{t} &\leq& d_{w,t} \quad t\in N_{H}(v) \\ 
z_{u} &=& z_v
\end{eqnarray*}
If the objective value of $z_w$ was 1, then we ensure $z_{u}$ to have
objective value \(1\) and $z_v$ to have objective value \(0\). If $z_w$ was a
continuous variable we add the constraints $z_u (1-z_u) = 0$ and
$z_v (1-z_v) = 0$, and if $z_w$ was a binary variable we enforce $z_u$
and $z_v$ to be binary as well.

\end{enumerate}

Clearly, in any case we obtain that $\Gamma[\mathcal{I}_H] = H$, and
$\mathcal{I}_H$ is equivalent to $\mathcal{I}$. Note that constraints
in $\mathcal{I}_H$ involve at most \(2\) variables and the ones with
exactly \(2\) variables are linear. This invariant makes it possible to
iterate this procedure using any sequence of minor operations. 

Let $s\doteq |V(\Gamma[\mathcal{I}])|$. Corollary \ref{main-cor} implies that $\Gamma[\mathcal{I}]$ is a minor of $\mathcal{G}_{\kappa (s)}$, thus assuming the sequence of minor operations is known, we can use the procedure above to construct an instance $\mathcal{I}_{\kappa (s)}$ which is equivalent to $\mathcal{I}$ and whose intersection graph is exactly $\mathcal{G}_{\kappa(s)}$. It is not hard to see that $\mathcal{I}_{\kappa (s)}$ has the
following form (after relabeling variables):
\begin{subequations}
\begin{align}
(\mathcal{I}_{\kappa (s)}) \quad \max\, &  \sum_{i=1}^{m'} z_i \ \\
\mbox{s.t. } & a_{i,j} z_{i} + b_{i,j} z_{j} \leq d_{i,j}  && (i,j) \in E_1 \label{e1}\\
&  z_{i} = z_j && (i,j) \in E_2  \label{e2} \\
&  z_{i}(1-z_i) = 0 && i= m',\ldots, n'   \label{e3} \\
& z_i \in \{0,1\} && i \in [m']\\
& z_i \in [0,1] && i=m',\ldots, n',
\end{align}
\end{subequations}
for some appropriately defined $E_1, E_2$, and where $a_{i,j}, b_{i,j}
\in \{-1,0,1\}$, $d_{i,j} \in \{0,1\}$. 

\begin{remark} \label{remark1} Each constraint \eqref{e1} is either a
  redundant constraint (introduced with the vertex or edge deletion
  operation) or it involves at least one integer variable. This will
  be important in the next section.
 \end{remark}

\subsubsection{From approximations to exact solutions} \label{rounding}

We will now show how to construct a (truly) feasible solution from an
$\epsilon$-feasible solution to $\mathcal{I}_{\kappa (s)}$. This will
provide the link of the hardness of approximating $\mathcal{I}_{\kappa (s)}$ to the
hardness of solving $\mathcal{I}_{\kappa (s)}$ exactly.

\begin{lemma}
  Let $z \in \{0,1\}^{m'} \times [0,1]^{n'- m'}$ be an
  $\epsilon$-feasible solution to $\mathcal{I}_{\kappa (s)}$
  satisfying \eqref{eps-opt} for $\epsilon < 1/10$. Then, from $z$, we can construct $\hat{z}$ such that $\hat{z}\in \{0,1\}^{m'} \times \{0,1\}^{n'-m'}$
  is feasible and optimal for $\mathcal{I}_{\kappa (s)}$.
\end{lemma}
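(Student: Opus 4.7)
The plan is to round $z$ coordinate-wise and then verify that the $\epsilon$-feasibility constraints, together with the integrality structure inherited from the MAX-2SAT reduction, force the rounded solution to be truly feasible and to have exactly the optimal objective value. Concretely, set $\hat{z}_i \doteq 0$ if $z_i < 1/2$ and $\hat{z}_i \doteq 1$ otherwise; note that for $i \in [m']$ this leaves the binary variables unchanged.

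The core quantitative ingredient is to analyze the complementarity constraint $z_i(1-z_i)=0$. Written in PO form, both $z_i - z_i^2 \ge 0$ and $z_i^2 - z_i \ge 0$ have $\|\cdot\|_1 = 2$, so $\epsilon$-feasibility gives $z_i(1-z_i) \le 2\epsilon < 1/5$. Solving this quadratic inequality with $\epsilon < 1/10$ shows that every continuous variable satisfies $z_i \in [0,\alpha] \cup [1-\alpha,1]$ with $\alpha = \tfrac{1}{2}(1-1/\sqrt{5}) < 0.28$, so $|\hat{z}_i - z_i| \le \alpha$.

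Next I would verify feasibility of $\hat{z}$ constraint by constraint. For an equality $z_i = z_j$, $\epsilon$-feasibility gives $|z_i - z_j| \le 2\epsilon < 1/5$, which is smaller than the gap $1-2\alpha > 0.44$ between the two allowed regions, so $z_i$ and $z_j$ must lie in the same region and hence round the same way. For a linear constraint $a_{i,j}z_i + b_{i,j}z_j \le d_{i,j}$ with $a_{i,j},b_{i,j}\in\{-1,0,1\}$, $d_{i,j}\in\{0,1\}$, redundant constraints are satisfied automatically; for the non-redundant ones, Remark \ref{remark1} guarantees at least one of $z_i,z_j$ is binary (and thus unchanged by rounding). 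Since $\|f\|_1 \le 3$ gives $a_{i,j}z_i + b_{i,j}z_j \le d_{i,j} + 3\epsilon$ and rounding perturbs the other variable by at most $\alpha$, the rounded LHS is at most $d_{i,j} + 3\epsilon + \alpha < d_{i,j} + 0.6$. Since the LHS after rounding and $d_{i,j}$ are both integers, we conclude $a_{i,j}\hat z_i + b_{i,j}\hat z_j \le d_{i,j}$ exactly. The binary and $z_i(1-z_i)=0$ constraints are clearly satisfied since $\hat z \in \{0,1\}^{n'}$.

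For optimality, I would invoke the crucial structural observation that in the reduction from MAX-2SAT the only variables appearing in the objective are binary; every continuous variable (be it an original $x_j$, a copy produced by edge contraction of such an $x_j$, or a dummy from vertex deletion) has objective coefficient $0$. Hence $c_N = 0$ and $\epsilon\|c_N\|_1 = 0$, so the guarantee \eqref{eps-opt} forces $c^T z$ to coincide with the true optimum $c^T z^*$. Because rounding does not alter any binary coordinate and the objective depends only on binary variables, $c^T \hat z = c^T z = c^T z^*$, making the feasible point $\hat z$ optimal. The main obstacle is really bookkeeping: one must confirm that the slack budgets $2\epsilon$ and $3\epsilon$ allowed by $\epsilon$-feasibility, combined with the rounding error $\alpha$, stay strictly below the integer gap of $1$, and that the configuration of each non-redundant constraint in the reduced instance still contains an unperturbed (binary) variable so that the integrality argument is applicable.
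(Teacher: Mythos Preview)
Your proposal is correct and follows essentially the same route as the paper: round each coordinate to the nearest integer, use the quadratic constraint $z_i(1-z_i)=0$ to bound the rounding error on continuous variables, verify feasibility of \eqref{e1}--\eqref{e3} via the small-slack-plus-integrality argument (invoking Remark~\ref{remark1} for the non-redundant linear constraints), and deduce optimality from the fact that all objective-bearing variables are binary so $c_N=0$. The only differences are cosmetic---you solve the quadratic exactly to get $\alpha=(1-1/\sqrt{5})/2$ while the paper uses the looser bound $4\epsilon$, and your treatment of the linear constraints bounds the rounded LHS directly whereas the paper isolates the integer side first---but the structure and ideas are identical.
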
 
\begin{proof}
  Since $z$ is an $\epsilon$-feasible solution, we have
\[ \left|z_i^2 - z_i\right| \leq 2 \epsilon  \quad  m' \le i \le n',\]
where the \(2\) arises as the \(1\)-norm of the coefficients.
Thus either $0 \leq z_i \leq 4\epsilon $ or
$|z_i - 1| \leq 4\epsilon$: $h(x) = x^2 - x$ is
decreasing in $[0,1/2)$, increasing in $[1/2,1]$, $h(0) = 0$, and
\[h(4\epsilon) + 2\epsilon = 16 \epsilon^2 - 4\epsilon + 2\epsilon 
=  16 \epsilon^2 - 2\epsilon 
= 2\epsilon (8 \epsilon - 1) 
< 0,
\]
as \(\epsilon < 1/10\).  Thus $h(4\epsilon) \leq - 2\epsilon$. From
here we conclude $z_i \leq 4\epsilon$ if $z_i\leq 1/2$. The case for
$z_i > 1/2$ is symmetric. 

Now from $z$ we construct $\hat z$ by
rounding each component to the nearest integer, and we argue the
feasibility and optimality of $\hat z$.

\begin{enumerate}[(a)]
\item Constraints \eqref{e3} are clearly satisfied as $\hat z$ is a binary vector.
\item For constraints \eqref{e2}, $z$ being $\epsilon$-feasible
  implies $ |z_i - z_j| \leq 2\epsilon $  for all $(i,j)\in E_2$ and
thus, using the above and that \(\hat{z}_i, \hat{z}_j\) are binary, we have
\begin{eqnarray*}
|\hat{z}_i - \hat{z}_j | &\leq & |\hat{z}_i - z_i | + |\hat{z}_j - z_j | + 2\epsilon \\
&\leq & 10\epsilon.
\end{eqnarray*}
The left-hand side is an integer and $\epsilon < 1/10$, from where we
conclude $\hat{z}_i = \hat{z}_j $.
\item For constraints \eqref{e1} fix $(i,j)\in E_1$ such that the
  corresponding constraint is not redundant. By Remark \ref{remark1}
  either $z_i$ or $z_j$ is integer. Without loss of generality assume
  $z_i \in \{0,1\}$, and thus $\hat{z}_i = z_i$. To make the argument
  clear, we rewrite the inequality as
\[ a_{i,j} z_{i}  - d_{i,j} \leq - b_{i,j} z_{j}. \]
The left hand side is an integer, therefore rounding $z_j$ will keep
the inequality valid where we use that \(b_{i,j} \in \{-1,0,1\}\):
\[ a_{i,j} \hat{z}_{i}  - d_{i,j} \leq - b_{i,j} \hat{z}_{j}. \]
\end{enumerate}

This proves $\hat z$ is feasible. On the other hand, $z$ satisfies \eqref{eps-opt}, and only integer variables have non-zero objective coefficient:

\[ \sum_{i=1}^{m'} \hat{z}_i  = \sum_{i=1}^{m'} z_i \geq  \sum_{i=1}^{m'} z^*_i\]
therefore $\hat{z}$ is optimal.
 \end{proof}

\subsubsection{Bringing it all together}

\begin{proof}[Main Theorem \ref{main-theo}]
Suppose we are given a sequence of graphs ${\cal
G}_k$, each having treewidth $k$. We show that, under the conditions
of Theorem \ref{main-theo}, the existence of an algorithm $\mathcal{A}$ as in Theorem \ref{main-theo} (i.e., that can approximately solve QCQP problems ${\cal I}_k$ with $\Gamma[\mathcal{I}_k] = \mathcal{G}_k$), with
running time $T(k)\cdot \textit{poly}(\|I_k \|)$ with $T(k)$ polynomial in $k$ implies that planar MAX-2SAT
belongs to $BPP$, contradicting the assumption $NP \not\subseteq BPP$.

\begin{enumerate}
\item Consider an instance of planar MAX-2SAT. We construct an instance $\mathcal{I}$ of a QCQP as in Section
  \ref{first-step}, whose intersection $\Gamma[\mathcal{I}]$ graph is planar. We denote $s$ its number of vertices.
  \item From Corollary \ref{main-cor} we know that  $\Gamma[\mathcal{I}]$ is a minor of $\mathcal{G}_{\kappa(s)}$. Moreover, $\kappa(s):=\kappa_{GM} (s/c_4)$ and, from the discussion in Section \ref{minor-operations}, $\cal I$ is equivalent to a QCQP
  problem ${\cal I}_{\kappa(s)}$ with $\Gamma[{\cal I}_{\kappa(s)}]=\mathcal{G}_{\kappa(s)}$.
  \item The minor operations transforming $\mathcal{G}_{\kappa(s)}$ into $\Gamma[\mathcal{I}]$, which are needed to construct  ${\cal I}_{\kappa(s)}$, can be obtained as follows:
  \begin{enumerate}
      \item Since $\Gamma[\mathcal{I}]$ is planar, it is a minor of the  $s/c_4 \times s/c_4$ grid. This sequence of minor operations can be found can be found in linear time using the results in \cite{tamassia}.
      \item The $s/c_4 \times s/c_4$ grid is a minor of $\mathcal{G}_{\kappa(s)}$. We can find the corresponding sequence of minor operations (with high probability) in polynomial time using the algorithm by \citet{chekuri2016polynomial} mentioned in Theorem \ref{chekuri}.
  \end{enumerate}
\item Using the point above, we can construct (with high probability) instance ${\cal I}_{\kappa(s)}$.
\item Using $\mathcal{A}$ and a fixed $\epsilon < 1/10$, find an
$\epsilon$-feasible solution satisfying \eqref{eps-opt} for
$\mathcal{I}_{\kappa(s)}$ in time $T(\kappa(s))\cdot \textit{poly}(\|I_{\kappa(s)} \|)$.
\item Given an $\epsilon$-feasible solution of $\mathcal{I}_{\kappa(s)}$, we
construct an optimal solution for $\mathcal{I}_{\kappa(s)}$ as in Section
\ref{rounding}.
\item From the optimal solution to $\mathcal{I}_{\kappa(s)}$, we can find an
optimal solution to $\mathcal{I}$ using the minor operations described
in Section \ref{minor-operations} in polynomial time.
\end{enumerate} 
Using the optimal solution, we can solve the \emph{decision problem} associated to planar MAX-2SAT directly. The only place where our algorithm can make a mistake is in the sequence of minor operations, which happens with low probability. Since clearly $\|I_{\kappa(s)}\|$ is polynomial, and by assumption we have access to $\mathcal{G}_{\kappa(s)}$ in polynomial time, if $T(\kappa(s))$ is also polynomial, we obtain that planar MAX-2SAT $\in BPP$, a contradiction.
\end{proof}


\section{Treewidth-based Extension Complexity Lower Bounds} \label{XClowerbounds}

In this section we analyze the tightness of the \emph{linear extension
  complexity} results that exploit treewidth. While we provide precise
definitions in Section~\ref{sec:backgr-extend-form}, the \emph{linear
  extension complexity} of a problem is the smallest number of
inequalities needed to represent a given problem as linear program. In
fact our lower bounds will also hold for semidefinite programs,
showing that there is little to be gained from semidefinite programs
over linear programs in terms of exploiting low treewidth. 

To this end, we consider a
set defined as
\begin{equation} \label{BPO}
S = \{ x\in \{0,1\}^n \ : \  \phi_i(x) =  0 \text{ with } i \in [m] \} 
\end{equation}
where each $\phi_i: \{0,1\}^n \rightarrow \{0,1\}$ is a boolean
function. Note that the intersection graph does not only depend on the
set $S$, but also on \emph{how} it is formulated; we denote the
intersection graph of \eqref{BPO} as $\Gamma[S_\phi]$.

\begin{remark}
  Given the generality of the $\phi_i$ functions defining the
  constraints in \eqref{BPO}, one could formulate $S$ using a
  single membership oracle of $S$. However, such a formulation would
  consist of a single constraint involving all variables, which would
  yield a very dense formulation of $S$, so that we could not exploit
  low treewidth. 
\end{remark}

  Any pure binary PO can be formulated as \eqref{BPO}. We have already
  seen in the previous section that unbounded treewidth of the
  intersection graph can yield intractability in the algorithmic
  sense. In this section, in contrast, we focus on studying how hard a
  sparse problem can be, using extension complexity as the measure of
  complexity.

\subsection{Background on Extended Formulations}
\label{sec:backgr-extend-form}
We will now briefly recall basics concepts from \emph{Extended
  Formulations} needed for our discussion. Extended formulations aim
for finding a formulation of an optimization problem in extended space
where auxiliary variables are utilized with the aim to find an overall
smaller formulations compared to formulations in the original space,
involving only the problem-inherent variables. Note that optimizing a
linear objective over an extended formulation is no harder than over
the original formulation, which makes extended formulations
appealing. For a more detailed discussion we refer the reader to
\cite{fiorini2012linear,fawzi2015positive}.

\begin{Def}[Linear Extended Formulation]
Given a polytope $P\subseteq \mathbb{R}^n$, a \emph{linear} extended formulation
of $P$ is a linear system 
\begin{equation}\label{linearEF}
E x + F y = g,\quad y\geq 0
\end{equation}
with the property that $x\in P$ if and only if there exists $y$ such that $(x,y)$ satisfies \eqref{linearEF}. The \emph{size} of the linear
extension is given by the number of inequalities in \eqref{linearEF},
and the \emph{linear extension complexity} of $P$ is the minimum size
of a linear extended formulation of $P$, which we denote by $xc(P)$.
\end{Def}

\begin{remark}
In the previous definition, system \eqref{linearEF} can be made more general. 
We can also consider 
\[E x + F y = g^{=}, \quad E^{\leq} x + F^{\leq} y \leq g^{\leq}\]
and define the \emph{size} the same way as before. However, this more
general definition does not affect the extension complexity of a
polytope; see e.g., \cite{Yannakakis}.
\end{remark}

In Yannakakis' ground-breaking paper \cite{Yannakakis}, it is proved that the linear extension complexity of a polytope
is strongly related to the concepts of \emph{slack matrix} and \emph{non-negative rank}:

\begin{Def}[Slack Matrix]
Let $P$ be a polytope that can be formulated as
\[P = \{ x\in \mathbb{R}^n \ : \ a_i^T x \leq b_i, \ i \in [m]\}.\]
Consider a set of points $V = \{x_j : j\in J\}$ such that $P = \conv(V)$. Then, the \emph{slack matrix} $S$ of $P$ associated to $Ax \leq b$ and $V$ is given by
\[S_{ij} = b_i - a_i^T x_j.
\]
\end{Def}

\begin{Def}[Non-negative Factorization]
Given a non-negative matrix $M$, a rank-$r$ non-negative factorization of $M$ is given by two non-negatives matrix $T$ (of $r$ columns) and $U$ (of $r$ rows) such that 
\[ M = TU. \]
The \emph{non-negative rank} of $M$, denoted as $rk_+(M)$, is the minimum rank of a non-negative factorization of $M$.
\end{Def}

\begin{theorem}[\citet{Yannakakis}]
Let $P = \{ x\in \mathbb{R}^n \ : \ Ax\leq b \} = \conv(V)$ be a polytope with
$\dim(P) \geq 1$ and let $S$ be the slack matrix of $P$ associated to $Ax \leq b$ and $V$. Then
\[xc(P) = rk_+(S).\]
\end{theorem}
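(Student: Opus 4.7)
The plan is to prove Yannakakis' equality by a two-sided correspondence: rank-$r$ nonnegative factorizations of $S$ correspond to extended formulations of $P$ of size $r$. Throughout, I fix the facet description $P = \{x : a_i^T x \le b_i, \, i \in [m]\}$ and the vertex set $V = \{x_j\}_{j\in J}$ used to build $S$.

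For the direction $xc(P) \le rk_+(S)$, factor $S = TU$ with $T \in \R_+^{m \times r}$ and $U \in \R_+^{r \times |J|}$, and define
\[ Q = \{(x,y) \in \R^n \times \R^r : Ax + Ty = b, \; y \ge 0\}, \]
a lifted polyhedron of size $r$. The projection $\mathrm{proj}_x Q$ equals $P$: any $(x,y) \in Q$ satisfies $Ax = b - Ty \le b$, so $x \in P$; conversely, each vertex $x_j$ lifts via $(x_j, U_{\cdot j}) \in Q$, since $Ax_j + T U_{\cdot j} = Ax_j + S_{\cdot j} = b$ by the very definition of the slack matrix, and convex combinations of these lifts realize every $x \in P$.

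For the harder direction $rk_+(S) \le xc(P)$, I would start from an arbitrary extended formulation $Q = \{(x,y) : Ex + Fy = g, \, y \ge 0\}$ of size $r$ and manufacture a nonnegative factorization via LP duality. For each facet $i$, apply strong duality to $\min\{b_i - a_i^T x : (x,y) \in Q\}$; strong duality is available because $Q$ is nonempty and the objective is bounded below by $0$ from the validity of the inequality on $P$. This yields $\lambda_i^*$ with $E^T \lambda_i^* = -a_i$ and $F^T \lambda_i^* \le 0$. Substituting $Ex = g - Fy$ gives the pointwise identity
\[ b_i - a_i^T x = \bigl(b_i + g^T \lambda_i^*\bigr) + (-F^T \lambda_i^*)^T y = \alpha_i + \beta_i^T y \qquad \text{for all } (x,y) \in Q, \]
with $\beta_i := -F^T \lambda_i^* \ge 0$ and $\alpha_i := b_i + g^T \lambda_i^* \ge 0$. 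Because $a_i^T x \le b_i$ is facet-defining, it is tight at some vertex $x_{j(i)}$, so $0 = \alpha_i + \beta_i^T y_{j(i)}$ forces $\alpha_i = 0$. Evaluating at a lift $y_j$ of each vertex $x_j$ then gives $S_{ij} = \beta_i^T y_j$, which exhibits $S$ as the product of the nonnegative matrix with rows $\beta_i^T$ and the nonnegative matrix with columns $y_j$, so $rk_+(S) \le r$.

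The step I expect to be the main obstacle is precisely the elimination of the constant $\alpha_i$: a naive reading of LP duality yields only $rk_+(S) \le xc(P) + 1$, the extra rank-one summand absorbing the $\alpha_i$'s. Closing this gap requires the observation that facet-defining inequalities are tight at some vertex, which in turn relies on the hypothesis $\dim(P) \ge 1$ to guarantee the existence of both facets and vertices.
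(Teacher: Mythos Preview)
The paper does not prove this theorem: it is stated as a cited result of Yannakakis and used as a black box, with no proof supplied. So there is nothing to compare against.

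Your argument is the standard two-direction proof and is essentially correct. One small mismatch with the paper's statement: you explicitly ``fix the facet description'' and treat $V$ as the vertex set, whereas the theorem as stated in the paper allows an arbitrary valid inequality system $Ax\le b$ and an arbitrary generating set $V$ with $P=\conv(V)$. Your elimination of the constant $\alpha_i$ relies on each inequality being tight somewhere on $P$, which fails for redundant inequalities that are strictly satisfied on all of $P$. The usual fix is to prove the equality first for a minimal (facet/vertex) slack matrix, and then observe that adding redundant rows (non-negative combinations of facet rows plus a non-negative multiple of the all-ones row, by Farkas) or redundant columns (convex combinations of vertex columns) does not change the non-negative rank when $\dim(P)\ge 1$; the all-ones row itself already lies in the non-negative span of the facet rows once the polytope is bounded and has at least two vertices. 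This is a routine add-on, but as written your proof covers only the facet/vertex case.
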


In the linear case the $y$ variables in the extended formulation are
required to be in the cone given by the non-negative orthant, i.e.,
\(y \geq 0\). This was generalized to other cones, allowing for more
expressiveness in the extended space.  Of particular interest to this
work is the generalization to \emph{semidefinite extended
  formulations}; see \cite{fiorini2012linear,gouveia2013lifts} for
details on the following concepts and results.

\begin{Def}[Semidefinite Extended Formulations]
Given a convex set $K\subseteq \mathbb{R}^n$, a \emph{semidefinite} extended formulation
of $K$ is a system
\begin{equation}\label{semidefiniteEF}
a_i^T x + \langle U_i, Y \rangle = b_i, \, i\in I, \quad Y\in \mathbb{S}^r_+
\end{equation}
where $I$ is an index set, $a_i\in \mathbb{R}^n$,
$U_i \in \mathbb{S}^r_+$, with the property that $x\in K$ if and only
if there exists $Y$ such that $(x,Y)$ satisfies
\eqref{semidefiniteEF}. The \emph{size} of the semidefinite extension
is given by the size $r$ of matrices $U_i$ in \eqref{semidefiniteEF},
and the \emph{semidefinite extension complexity} of $K$ is the minimum
size of a semidefinite extended formulation of $K$. It is denoted
$xc_{SDP}(K)$.
\end{Def}

\begin{Def}[Semidefinite Factorization]
Given a non-negative $n\times m$ matrix $M$, a rank-$r$ semidefinite factorization of $M$ is given by a set of pairs $(U_i, V^j)_{(i,j)\in [n]\times [m]} \subseteq \mathbb{S}^r_+ \times \mathbb{S}^r_+ $ such that
\[ M_{i,j} = \langle U_i, V^i \rangle \quad \forall i \in [n], j
\in [m]. \]
The \emph{semidefinite rank} of $M$, denoted as $rk_{PSD}(M)$, is the minimum rank of a semidefinite factorization of $M$.
\end{Def}

\begin{theorem}[Yannakakis' Factorization Theorem for SDPs, \cite{gouveia2013lifts}] \label{yannakakisPSD}
Let $P = \{ x\in \mathbb{R}^n \ : \ Ax\leq b \} = \conv(V)$ be a polytope with
$\dim(P) \geq 1$ and let $S$ be the slack matrix of $P$ associated to $Ax \leq b$ and $V$. Then
\[xc_{SDP}(P) = rk_{PSD}(S).\]
\end{theorem}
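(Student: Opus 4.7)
The plan is to prove the two inequalities $xc_{SDP}(P) \leq rk_{PSD}(S)$ and $xc_{SDP}(P) \geq rk_{PSD}(S)$ separately, following the classical Yannakakis template with the nonnegative orthant replaced throughout by the semidefinite cone $\mathbb{S}^r_+$. Both directions hinge on the self-duality of $\mathbb{S}^r_+$, so that inner products $\langle U, Y\rangle$ of PSD matrices are automatically nonnegative, playing the role that termwise nonnegativity played in the LP case.

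For the upper bound, I would take a PSD factorization $S_{ij} = \langle U_i, V^j \rangle$ with $U_i, V^j \in \mathbb{S}^r_+$ and define the lift
\[Q = \{(x, Y) \in \text{aff}(P) \times \mathbb{S}^r_+ \ : \ a_i^T x + \langle U_i, Y \rangle = b_i \text{ for all } i\}.\]
Given $x = \sum_j \lambda_j v_j \in P$ with $\lambda$ in the simplex, the choice $Y = \sum_j \lambda_j V^j \in \mathbb{S}^r_+$ makes $(x,Y) \in Q$, using the factorization identity to verify every equation. Conversely, any $(x,Y) \in Q$ satisfies $b_i - a_i^T x = \langle U_i, Y \rangle \geq 0$ for each $i$, hence $x \in P$. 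This shows the projection of $Q$ onto the $x$-coordinates equals $P$, so $xc_{SDP}(P) \leq r$.

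For the lower bound, assume $P$ admits an SDP lift $\hat{P} = \{(x, Y) \in \mathbb{R}^n \times \mathbb{S}^r_+ \ : \ \alpha_k^T x + \langle C_k, Y \rangle = \gamma_k, \ k \in K\}$ with $\pi_x(\hat P)=P$. For each vertex $v_j \in V$, pick any $V^j \in \mathbb{S}^r_+$ with $(v_j, V^j) \in \hat{P}$. For each facet inequality $a_i^T x \leq b_i$, the affine function $b_i - a_i^T x$ is nonnegative on $\hat{P}$, so by the semidefinite analogue of Farkas' lemma there exist multipliers $\mu^i_k \in \mathbb{R}$ and a matrix $U_i \in \mathbb{S}^r_+$ with
\[b_i - a_i^T x = \sum_{k \in K} \mu^i_k \bigl(\gamma_k - \alpha_k^T x - \langle C_k, Y \rangle\bigr) + \langle U_i, Y \rangle\]
holding as a linear identity on $\mathbb{R}^n \times \mathbb{S}^r$. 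Evaluating at $(v_j, V^j)$ kills every parenthesized term, leaving $S_{ij} = b_i - a_i^T v_j = \langle U_i, V^j \rangle$. The families $\{U_i\}, \{V^j\}$ are then a PSD factorization of $S$ of rank at most $r$.

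The main obstacle is the Farkas step, because strong duality for SDP requires a constraint qualification (e.g.\ Slater) that need not hold on an arbitrary lift. The standard workaround is to replace $\hat{P}$ by its intersection with the minimal face $F$ of $\mathbb{S}^r_+$ that contains the $Y$-projection of $\hat{P}$: on this face the lift has a relative interior point, the semidefinite Farkas lemma applies cleanly, and the resulting factorization sits inside $F \subseteq \mathbb{S}^r_+$ so its rank is at most $r$. This is precisely the facial reduction argument carried out in \cite{gouveia2013lifts}, which I would cite to close the lower-bound direction without redeveloping the SDP duality machinery from scratch.
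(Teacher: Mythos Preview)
The paper does not prove this theorem at all: it is stated as a known result and attributed to \cite{gouveia2013lifts}, with no proof or sketch given in the text. So there is nothing in the paper to compare your proposal against.

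That said, your sketch is essentially the argument of \cite{gouveia2013lifts}. The upper-bound direction is fine as written. For the lower bound, your identification of the Farkas/strong-duality obstacle and the facial-reduction fix is exactly the point where care is needed, and citing \cite{gouveia2013lifts} for that step is appropriate. One small remark: in the upper-bound construction you should check that restricting to $\text{aff}(P)$ (or otherwise handling the implicit equalities of $P$) is done consistently, since the factorization equations $a_i^Tx+\langle U_i,Y\rangle=b_i$ alone need not pin $x$ to $\text{aff}(P)$; this is routine but worth stating explicitly if you write out a full proof.
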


Note that every linear extended formulation is a semidefinite extended
formulation using diagonal matrices so that $xc_{SDP}(P) \leq
xc(P)$. 

\subsection{Low treewidth implies small extension complexity} \label{sec:lowtreewidth}
We will now state the known upper bound on the linear extension
complexity of low-treewidth problems, which we prove to be nearly
optimal. The following strong result is well known; see e.g.,
\cite{treewidth,kolman2015extension,Laurentima}:

\begin{theorem} \label{theo:binary}
Let $S\subseteq \{0,1\}^n$ be a set that exhibits a \emph{formulation} as
\begin{equation} \label{formulation}
S = \{ x\in \{0,1\}^n \ : \  \phi_i(x) =  0 \text{ with } i \in [m] \}.
	\end{equation}
 If $\Gamma[S_\phi]$ has treewidth $\omega$, then $\conv(P)$ has linear extension complexity 
\begin{equation} \label{xcbinary}
O(n2^\omega).
\end{equation}
\end{theorem}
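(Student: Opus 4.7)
The plan is to build a standard "local enumeration plus gluing" extended formulation based on a tree decomposition of $\Gamma[S_\phi]$. First, I would take a chordal completion $H$ of $\Gamma[S_\phi]$ with clique number at most $\omega+1$, and pass to its clique tree $T$: a tree whose nodes are indexed by the maximal cliques $B_t$ of $H$, satisfying the running-intersection property. The bags have size at most $\omega+1$, and one may assume $T$ has $O(n)$ nodes. Since each constraint $\phi_i$ depends on a set of variables that is pairwise-joined in $\Gamma[S_\phi]$, and hence forms a clique in $H$, the clique-containment property of tree decompositions guarantees that all variables of $\phi_i$ lie together in some bag $B_{t(i)}$; I would fix one such assignment $i \mapsto t(i)$.

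Next I would introduce, for each bag $t$ and each partial assignment $\alpha \in \{0,1\}^{B_t}$ that satisfies every constraint $\phi_i$ with $t(i) = t$, a nonnegative variable $y_t^\alpha$. The extended formulation consists of (i) the normalization $\sum_\alpha y_t^\alpha = 1$ for every bag $t$, (ii) the linking constraints $x_j = \sum_{\alpha:\,\alpha_j = 1} y_t^\alpha$ for each $j \in B_t$, and (iii) the marginal-consistency constraints, for every edge $(t,t')$ of $T$ and every $j \in B_t \cap B_{t'}$,
\[
\sum_{\alpha:\,\alpha_j=1} y_t^\alpha \;=\; \sum_{\alpha':\,\alpha'_j=1} y_{t'}^{\alpha'}.
\]
There are $O(n)$ bags, each contributing at most $2^{\omega+1}$ variables and $O(\omega \cdot 2^\omega)$ constraints, giving the claimed size $O(n 2^\omega)$.

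The easy direction of correctness is that any $x \in S$ lifts by setting $y_t^{x|_{B_t}} = 1$ and all other $y_t^\alpha = 0$; all constraints are satisfied because $x$ itself is feasible. The real work is the converse: every feasible $(x,y)$ projects to a point in $\conv(S)$. Here I would argue by induction on $T$, viewing the $y_t^\alpha$ as a locally consistent family of probability distributions on bag assignments. Rooting $T$ at an arbitrary bag and processing leaves first, I would inductively construct, for each subtree rooted at $t$, a distribution over assignments to the variables in that subtree whose bag marginals agree with the $y$'s; the running-intersection property ensures that when a child subtree is attached through separator $B_t \cap B_{t'}$, the marginals over that separator match, so the join is well-defined. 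Since every partial assignment we ever use is locally feasible (by construction of the support of $y_t^\alpha$) and every constraint $\phi_i$ is enforced in bag $B_{t(i)}$, the resulting global distribution is supported on $S$, exhibiting $x$ as a convex combination of points of $S$.

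The main obstacle is the gluing argument in the inductive step: one must check that marginal consistency along tree edges really suffices to produce a single global distribution on $\{0,1\}^n$ whose support lies in $S$. This is the place where the running-intersection property, together with the clique-containment of each constraint in a single bag, is essential; without the latter some $\phi_i$ would not be enforceable by purely local feasibility of the $y_t^\alpha$, and the construction would only yield a relaxation of $\conv(S)$.
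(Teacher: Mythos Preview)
Your overall strategy is exactly the ``local enumeration plus gluing'' approach the paper alludes to (the paper does not give its own proof but cites \cite{treewidth,kolman2015extension,Laurentima}, and explicitly describes the method as ``enumerating locally feasible vectors along with a gluing argument''). So at the level of ideas you are on target.

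There is, however, a genuine gap in the extended formulation you wrote down. Your consistency constraints~(iii) only equate the \emph{single-variable} marginals across adjacent bags: for each $j\in B_t\cap B_{t'}$ you require $\sum_{\alpha:\alpha_j=1} y_t^\alpha=\sum_{\alpha':\alpha'_j=1} y_{t'}^{\alpha'}$. But these equalities are already implied by your linking constraints~(ii), since both sides equal $x_j$; so~(iii) adds nothing, and what you have written is strictly weaker than what the gluing argument needs. The inductive join you describe requires that the \emph{joint} marginals on the full separator $B_t\cap B_{t'}$ agree, i.e.\ for every $\beta\in\{0,1\}^{B_t\cap B_{t'}}$ one needs
\[
\sum_{\alpha:\,\alpha|_{B_t\cap B_{t'}}=\beta} y_t^\alpha \;=\; \sum_{\alpha':\,\alpha'|_{B_t\cap B_{t'}}=\beta} y_{t'}^{\alpha'}.
\]
Without this, the construction fails: take bags $B_t=\{1,2,3\}$, $B_{t'}=\{2,3,4\}$ with no constraints, and set $y_t$ to put mass $\tfrac12$ on each of the partial assignments $(x_1,x_2,x_3)=(0,0,0),(0,1,1)$ while $y_{t'}$ puts mass $\tfrac12$ on each of $(x_2,x_3,x_4)=(0,1,0),(1,0,0)$. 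All of your constraints~(i)--(iii) hold (with $x_2=x_3=\tfrac12$), yet the two bag distributions place disjoint supports on $(x_2,x_3)$ and cannot be glued into any distribution on $\{0,1\}^4$; the projected $x$ need not lie in $\conv(S)$. Replacing~(iii) by the full separator-marginal equalities fixes this, adds at most $2^{\omega+1}$ equalities per tree edge, and leaves the $O(n2^\omega)$ inequality count (which is what governs $xc$) unchanged.
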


We will construct sets $S$ that
\begin{inparaenum}[(a)]
\item can be formulated
using sparse constraints (given by some treewidth $\omega$) and which
\item exhibit high extension complexity essentially of \eqref{xcbinary}.
\end{inparaenum}
By
building on recent lower bounds on semidefinite extension complexity
\cite{briet2013existence}, we show the existence of such 0/1 sets,
whose semidefinite extension complexity (nearly) meets the bound
\eqref{xcbinary} (see Main Theorem \ref{thr:counting}).  In fact, for those
hard instances, we show a stronger result. The extension complexity
does not take into account techniques that are routinely
adopted to solve integer programs, such as e.g., reformulations or parallelization of separable sets. These techniques can be used to modify the original instance to an equivalent integer programming problem, which may be computationally more attractive. We show that the hard instances we construct cannot be reformulated to have lower extension complexity or being separable.

The careful reader might have noticed an important fact: the extension
complexity bound in \eqref{xcbinary} does not depend on a particular
formulation of the set $S$, as opposed to the treewidth. To overcome
this disparity and for simplicity in the upcoming discussion we focus
on the ``best possible'' treewidth of a formulation, which we refer to
as the \emph{treewidth (or treewidth complexity) of \(S\)}. This
definition prevents the results from
depending on a particular formulation, or the type of constraints
(e.g., linear, boolean, or polynomial).

\begin{Def} \label{treewidthofset} Given $S \subseteq \{0,1\}^n$, we
  denote as $tw(S)$ the smallest treewidth of the intersection graph
  of any formulation of $S$ as in \eqref{formulation}.
\end{Def}

\begin{remark} It came to our attention that, independently
of this work, in \citet{aboulker2019extension} it was recently proven that for
any minor-closed family of graphs there exists a constant $c$ such
that the \emph{correlation polytope} of each graph of $n$ vertices in
the minor-closed family has linear extension complexity at least
\begin{equation}\label{corrupper} 2^{c(\omega + \log n)}
\end{equation} where $\omega$ is the treewidth of the graph. This provides
families of polytopes where \eqref{xcbinary} is almost tight. While
this result is in the same spirit as the result we prove in this section, we highlight a few key differences:
\begin{enumerate}
\item The results in \cite{aboulker2019extension} study the important question of the linear extension complexity of the correlation polytope for various graphs providing (almost) optimal bounds, while we give ourselves more freedom with the polytope family.
\item The constant $c$ in \eqref{corrupper} is at most $1/2$, and the
correlation polytope of a graph with treewidth $\omega$ has ambient dimension
$N \in O( \omega n)$---the number of edges of the graph. If additionally $N \in \Theta(\omega n)$, the lower bound in \eqref{corrupper} satisfies
\[ 2^{c(\omega + \log n)} \in O\left( \sqrt{ \frac{N}{\omega} } 2^{\omega/2} \right).\]
The polytopes we construct here have a lower bound with a leading term
$N/\omega$ as compared to \(\sqrt{ \frac{N}{\omega}}\). This is due to the fact
that we rely on the stronger existential counting arguments in
\cite{rothvoss2013some,briet2013existence,briet2015existence} along with a polytope composition procedure.
\item Our employed technique is drastically different: rather than
  \emph{reducing} to a face of the correlation polytope we provide a
  general technique to \emph{construct} high-extension-complexity polytopes
  from any seed polytope (under appropriate assumptions). 
\item Our results apply to both the \emph{semidefinite}
and the \emph{linear} case. Moreover, we also specialize our construction
to Stable Set polytopes where the gluing operation that we use has a
natural representation in terms of graph-theoretic operations.
\end{enumerate}
\end{remark}

\subsection{Binary optimization problems with high extension complexity} \label{sec:highxc}

In this section we analyze how high semidefinite extension complexity can be used to derive characteristics of the formulation of sets and their
treewidth. Consider a family of sets
$\{S_n\}_{n\in \mathbb{N}}$ with $S_n \subseteq \{0,1\}^n$ such that
\begin{equation}
    \label{SnDef}
    xc_{SDP}(S_n) \in \Omega\left( 2^{f_n}\right)
\end{equation}
for some $f_n$. For technical
reasons we further assume that $f_n$ satisfies
\begin{equation}\label{technical-bound}
\liminf_{n\rightarrow \infty }\frac{\log n}{f_n} < 1.
\end{equation}

\begin{remark}
Every family of sets $S_n$ such that $xc_{SDP}(S_n) \in \Omega(n^k)$ for some $k> 1$ satisfies \eqref{technical-bound}. In such case, $f_n \geq k \log n$ asymptotically and \eqref{technical-bound} can be easily verified.
\end{remark}

Assuming \eqref{technical-bound} only excludes sets with linear or
sub-linear semidefinite extension complexity (w.r.t.~\(n\)), which are
of little interest here. Moreover, by \cite{briet2013existence}, we
know there exist 0/1 sets whose semidefinite extension complexity
satisfies \eqref{technical-bound}.

\begin{lemma}\label{lemma:big-treewidth-general}
  Any formulation of $S_n$ has intersection graph with treewidth
  $\Omega(f_n)$ and at most $n-1$. In particular, $tw(S_n)$ is $\Omega(f_n)$
  and $O(n)$.
\end{lemma}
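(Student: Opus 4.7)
The plan is to combine the known upper bound on linear extension complexity from Theorem \ref{theo:binary} with the assumed lower bound \eqref{SnDef} and the comparison $xc_{SDP} \le xc$. The upper treewidth bound is essentially trivial and the real content is to convert the exponential $xc_{SDP}$ lower bound into a lower bound on $\omega$, then use the technical assumption \eqref{technical-bound} to conclude that this lower bound is $\Omega(f_n)$.

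First I would dispose of the upper bound: the intersection graph of any formulation of $S_n$ has at most $n$ vertices, one per variable, so we may take as chordal completion the complete graph $K_n$, which has clique number $n$ and is chordal, giving treewidth at most $n-1$ by Definition~\ref{treewidthdef}.

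Next, for the lower bound, suppose $S_n$ admits a formulation as in \eqref{formulation} whose intersection graph has treewidth $\omega$. By Theorem \ref{theo:binary}, $\conv(S_n)$ has linear extension complexity at most $C\,n\,2^\omega$ for a universal constant $C$. Since every linear extended formulation is a semidefinite one using diagonal matrices, $xc_{SDP}(\conv(S_n)) \le xc(\conv(S_n)) \le C\, n\, 2^\omega$. Combined with the assumption $xc_{SDP}(S_n)\in\Omega(2^{f_n})$, i.e.\ there exists $c>0$ with $xc_{SDP}(\conv(S_n))\ge c\,2^{f_n}$ for all sufficiently large $n$, we get
\[
c\, 2^{f_n} \;\le\; C\, n\, 2^\omega,
\]
which after taking logarithms yields
\[
\omega \;\ge\; f_n - \log n - O(1).
\]

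Finally, invoking the technical assumption \eqref{technical-bound}, there exists $\delta>0$ and a subsequence along which $\log n \le (1-\delta)f_n$, so $\omega \ge \delta f_n - O(1)$, i.e.\ $\omega \in \Omega(f_n)$. Since this holds for every formulation of $S_n$, taking the minimum gives $tw(S_n)\in\Omega(f_n)$. I do not anticipate a real obstacle here: the main conceptual point is simply that \eqref{technical-bound} is exactly the condition needed to absorb the $\log n$ slack arising from the $O(n\,2^\omega)$ upper bound into the dominant $f_n$ term, so that the exponential lower bound on $xc_{SDP}$ forces $\omega$ to grow like $f_n$ rather than merely like $f_n - \log n$.
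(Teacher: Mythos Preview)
Your proposal is correct and follows essentially the same approach as the paper: the upper bound is disposed of trivially, and for the lower bound both arguments combine the $O(n2^\omega)$ bound from Theorem~\ref{theo:binary} (via $xc_{SDP}\le xc$) with the assumed lower bound $xc_{SDP}(S_n)\ge c_2 2^{f_n}$ to obtain $\omega \ge f_n - \log n - O(1)$, and then invoke \eqref{technical-bound} to absorb the $\log n$ term. The only cosmetic difference is that the paper phrases the last step as a proof by contradiction (assuming $\omega_n\in o(f_n)$ and deriving $\liminf \log n/f_n \ge 1$), whereas you argue directly along a subsequence; both yield the same conclusion under the same convention for~$\Omega$.
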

\begin{proof}
The upper bound is immediate, since $S_n$ has $n$ variables. For the lower bound,
we know from Theorem \ref{theo:binary} there exists $c_1$ such that 
\begin{equation}
xc_{SDP}(S_n) \leq c_1 n 2^{\omega_n }, 
\end{equation}
where $\omega_n$ is the treewidth obtained from a formulation \eqref{formulation}. And since 
\begin{equation}
xc_{SDP}(S_n) \geq c_2 2^{f_n}
\end{equation}
for some $c_2$, we obtain
\[f_n \leq \log (c_1/c_2) + \omega_n + \log n. \]

If $\omega_n \in o(f_n)$ this implies
\[ 1 \leq \liminf_{n\rightarrow \infty} \frac{\log n}{f_n},\]
a contradiction with \eqref{technical-bound}. We conclude
$\omega_n \in \Omega(f_n)$.
 \end{proof}

\subsection{Composition of Polytopes} \label{sec:composition}

The techniques in this section allow us to manipulate the sets $S_n$
in a convenient way. Here we drop the index $n$ for ease of notation
as all definitions and results apply for any 0/1 set. We use the
notation \(\alpha S\) with \(\alpha \in \R_+\) to denote the set \(\{x \mid
x = \alpha \cdot y \text{ with } y \in S\}\); in particular \(0 \in
0S\) for all \(S\). 

\begin{Def}\label{def:plusoperator}
For $S \subseteq \{0,1\}^n$, we define $S^{+} \subseteq \{0,1\}^{n+1}$ as
\[ S^+ = \{(x,x_{n+1}) \in \{0,1\}^{n+1} \ | \ x\in (1-x_{n+1})S \}. \]
\end{Def}

In particular, \((x,0) \in S^+\) for all \(x \in S\) and \(e_{n+1} \in
S^+\). We obtain the following lemma:

\begin{lemma}\label{lemma:conv-Pplus}
\begin{equation}\label{conv-Pplus}
\conv(S^+) = \{(x,x_{n+1}) \in [0,1]^{n+1} \ | \ x\in (1-x_{n+1})\conv(S) \}.
\end{equation}
\end{lemma}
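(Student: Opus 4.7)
The plan is to prove both inclusions of the claimed equality, using the explicit structure of $S^+$. The first useful observation is that $0 \cdot S = \{0\}$, so Definition \ref{def:plusoperator} simplifies to $S^+ = \{(x,0) : x \in S\} \cup \{e_{n+1}\}$; in particular $S^+$ has a clean description as the ``lifted copy'' of $S$ in the hyperplane $x_{n+1}=0$ together with the apex point $e_{n+1}$. Let me denote the right-hand side of \eqref{conv-Pplus} by $R$, so the goal is $\conv(S^+) = R$.

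For the inclusion $\conv(S^+) \subseteq R$, I would first verify $S^+ \subseteq R$ directly: each $(x,0)$ with $x \in S$ satisfies $x \in 1\cdot \conv(S)$, and $e_{n+1}$ satisfies $0 \in 0\cdot \conv(S)$. It then suffices to show $R$ is convex. Given two points $(x,x_{n+1}),(y,y_{n+1}) \in R$ with witnesses $u,v \in \conv(S)$ (so $x=(1-x_{n+1})u$, $y=(1-y_{n+1})v$), form a convex combination with weight $\lambda$. The natural candidate witness is the rescaled convex combination $\tfrac{\lambda(1-x_{n+1})}{1-t}u + \tfrac{(1-\lambda)(1-y_{n+1})}{1-t}v$ where $t$ is the last coordinate of the combination; this lies in $\conv(S)$ because its coefficients are non-negative and sum to $1$.

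The main obstacle — really the only subtle point — is the degenerate case $t = 1$, where the rescaling above is ill-defined. Here I would argue separately that if $1-t = \lambda(1-x_{n+1}) + (1-\lambda)(1-y_{n+1}) = 0$, then the first $n$ coordinates of the convex combination equal $0$, so the combined point is $(0,1)$, which trivially lies in $R$ since $0 \in 0\cdot\conv(S)$. This completes convexity.

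For the reverse inclusion $R \subseteq \conv(S^+)$, take $(x,x_{n+1}) \in R$ and split on whether $x_{n+1}=1$. If $x_{n+1}=1$ then $x=0$ and $(x,x_{n+1}) = e_{n+1} \in S^+ \subseteq \conv(S^+)$. Otherwise $1-x_{n+1}>0$ and we write $x = (1-x_{n+1})u$ with $u = \sum_i \lambda_i v_i$ a convex combination of points $v_i \in S$; then
\[(x,x_{n+1}) \;=\; \sum_i (1-x_{n+1})\lambda_i \,(v_i,0) \;+\; x_{n+1}\,(0,1),\]
which is a convex combination of elements of $S^+$ since the weights are non-negative and sum to $(1-x_{n+1})+x_{n+1}=1$. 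This yields the reverse inclusion and completes the argument.
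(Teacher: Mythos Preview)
Your proof is correct and follows essentially the same approach as the paper: both directions hinge on the decomposition $(x,x_{n+1}) = (1-x_{n+1})(u,0) + x_{n+1}\,e_{n+1}$ with $u \in \conv(S)$. You are more explicit than the paper in verifying convexity of the right-hand side (the paper simply asserts it), and for the reverse inclusion you write the convex combination directly whereas the paper phrases the same computation as showing that any extreme point of the right-hand set must have $x_{n+1}\in\{0,1\}$; these are cosmetic differences rather than a different route.
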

\begin{proof}
Inclusion $\subseteq $ is direct, as the right-hand set is convex, and
the inclusion can be directly verified for the extreme points.

Now consider $(x,x_{n+1})$ an extreme point of the right-hand set in \eqref{conv-Pplus}. We first claim $x_{n+1} \in \{0,1\}$. Otherwise,
we can write
\[(x,x_{n+1}) = (1-x_{n+1}) (x/(1-x_{n+1}) , 0) + x_{n+1}e_{n+1},\]
where $e_{n+1}$ is the $(n+1)$-th canonical vector. By assumption
$x/(1-x_{n+1})\in S$ thus $(x/(1-x_{n+1}) , 0)\in S^+$ and
$e_{n+1} \in S^+$. This contradicts $(x,x_{n+1})$ being an extreme
point.

As such $x_{n+1} \in \{0,1\}$ and we can easily verify that
$(x,x_{n+1})\in S^+$ which proves the remaining inclusion.
 \end{proof}

\begin{Def} \label{def:pyramid} A polytope $Q \subseteq \mathbb{R}^n$
  is called a \emph{pyramid with base $B\subseteq \mathbb{R}^n$ and
    apex $v\in \mathbb{R}^n$} if
\[ Q = conv(B\cup \{v\}) \]
and $v$ is not contained in the affine hull of $B$.
\end{Def}

In \citet{tiwary2017extension} the extension complexity of the
Cartesian product of polytopes is analyzed and it is shown:

\begin{theorem} \label{theo:pyramids}
Let $Q_1, Q_2$ be non-empty polytopes such that one of the two polytopes is a pyramid. Then
\[xc(Q_1\times Q_2) = xc(Q_1) + xc(Q_2) \]
\end{theorem}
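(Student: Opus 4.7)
The plan is to verify both inequalities. The $\leq$ direction is immediate: block-diagonally stacking optimal extended formulations of $Q_1$ and $Q_2$ yields a valid extended formulation of $Q_1 \times Q_2$ of size $xc(Q_1) + xc(Q_2)$. For the $\geq$ direction, I would invoke Yannakakis' theorem and reduce to showing $rk_+(S) \geq rk_+(S_1) + rk_+(S_2)$ for the respective slack matrices. Assume WLOG that $Q_2$ is the pyramid, with apex $a$ and base $B_2$. Since $a \notin \text{aff}(B_2)$, after reordering rows and columns $S_2$ takes the block-diagonal form $\mathrm{diag}(\beta, S_{B_2})$ with $\beta > 0$. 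This lifts to two zero-patterns in $S$: the base-facet row of $Q_2$ vanishes on every column $(u,v)$ with $v \in V(B_2)$, and every side-facet row vanishes on every apex column $(u,a)$.

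Given any nonnegative factorization $S = TU$ of rank $r$, I would classify the $r$ columns of $T$ by the support of their restriction to the $Q_2$-facet rows into four classes: zero, base-only, side-only, and mixed. The two zero-patterns force mixed columns to contribute nothing to $TU$, and they may be deleted; base-only columns appear with nonzero weight only in apex columns of $U$, and side-only columns only in base-vertex columns. Letting $J_0, J_b, J_s$ denote the resulting disjoint index sets, one has $r = |J_0| + |J_b| + |J_s|$. Restricting $S = TU$ to apex columns of $U$ extracts a nonneg factorization of $S_1$ using the column set $J_0 \cup J_b$; restricting to base-vertex columns of $U$ and invoking the block form of $S_2$ extracts a nonneg factorization of $S_2$ using $J_b \cup J_s$.

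The principal obstacle is the final accounting: a naive sum of these sub-factorization sizes exceeds $r$, so the pyramid structure must be exploited to tighten the bound. The key observation is that the apex column of $S_2$ is $\beta e_b$, a rank-$1$ vector supported on a single entry; this forces the weights assigned to the base-only columns at varying apex columns of $U$ to lie in a one-parameter affine family, so that their aggregated contribution to $S_1[:, u]$ is a fixed vector $c^*$ independent of $u \in V(Q_1)$. Showing that this fixed rank-$1$ additive contribution cannot reduce $rk_+(S_1)$ below $|J_0|$---thereby closing $rk_+(S_1) + rk_+(S_2) \leq |J_0| + (|J_b| + |J_s|) = r$---is the technical heart of the proof, and in essence shows that any ``savings'' from coupling top and bottom rows in the factorization are exactly cancelled by the additional columns needed to furnish the apex's slack.
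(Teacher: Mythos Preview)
The paper does not supply its own proof of this statement; it is quoted from \citet{tiwary2017extension}. The paper does, however, prove the semidefinite analogue (Theorem~\ref{theo:pyramidsSDP}), and that argument---which follows Tiwary's---is structurally quite different from yours. It proceeds by exhibiting a block-triangular submatrix of the slack matrix of $Q_1\times Q_2$: writing the pyramid's slack matrix as $\mathrm{diag}(T',1)$, one extracts from the product slack matrix a submatrix of the shape
\[
A' = \begin{pmatrix} S & S \\ T' & 0 \end{pmatrix},
\]
and then invokes the superadditivity of nonnegative (resp.\ PSD) rank over block-triangular matrices. No direct dissection of an arbitrary factorization $S=TU$ is needed.

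Your route via support-classification of the columns of $T$ is sound up through the point where you discard the ``mixed'' columns and obtain the inequalities $rk_+(S_1)\le |J_0|+|J_b|$ and $rk_+(S_{B_2})\le |J_s|$. But the final step contains a genuine error. You assert that because the apex column of $S_2$ is the single-entry vector $\beta e_b$, the weights $(U_{j,(u,a)})_{j\in J_b}$ lie in a ``one-parameter affine family'' as $u$ varies, and hence their aggregated contribution to the $Q_1$-facet rows is a fixed vector $c^*$ independent of $u$. This is false: the only constraint on those weights is the single linear equation $\sum_{j\in J_b} T_{b,j}\,U_{j,(u,a)}=\beta$, which cuts out a hyperplane of codimension one, not a one-dimensional family. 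When $|J_b|\ge 2$ the individual weights $U_{j,(u,a)}$ can vary freely with $u$ subject to that one equation, and if the vectors $T[Q_1\text{-rows},j]$ for $j\in J_b$ are not all parallel, the aggregated contribution $\sum_{j\in J_b} T[Q_1\text{-rows},j]\,U_{j,(u,a)}$ genuinely depends on $u$. So the mechanism you propose for recovering the missing ``$+1$'' does not work, and your argument as stated only yields $rk_+(S_1)+rk_+(S_2)\le r+1$.
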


This result provides us with a tool to combine polytopes in a way that
their extension complexity is added up. Unfortunately, the result is
limited to \emph{linear} extended formulations. We generalize this
result to the SDP case here:

\begin{theorem} \label{theo:pyramidsSDP}
Let $Q_1, Q_2$ be non-empty polytopes such that one of them is a pyramid. Then
\[xc_{SDP}(Q_1\times Q_2) \geq xc_{SDP}(Q_1) + xc_{SDP}(Q_2) - 1 \]
\end{theorem}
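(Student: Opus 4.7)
The plan is to apply Yannakakis' Factorization Theorem for SDPs (Theorem~\ref{yannakakisPSD}) and establish the stronger statement that any PSD factorization of the slack matrix $M$ of $Q_1\times Q_2$ of size $r$ induces PSD factorizations of the slack matrices $S_1,S_2$ of $Q_1,Q_2$ whose sizes sum to at most $r+1$. The matrix $M$ has a natural block structure: rows coming from a facet of $Q_1$ contribute $(S_1)_{i,k}$ and are independent of the $Q_2$-coordinate $l$, while rows from a facet of $Q_2$ contribute $(S_2)_{j,l}$ and are independent of $k$; columns are indexed by vertex pairs $(u_k,w_l)$.

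Without loss of generality assume $Q_1$ is the pyramid, with apex $v=u_s$ and base facet $B=F_p$. The defining pyramid property is that $v$ lies on every facet of $Q_1$ other than $B$, so $(S_1)_{i,s}=0$ for all $i<p$. Fix a size-$r$ PSD factorization consisting of $\{A_i\}_{i=1}^p$ for the $Q_1$-facets, $\{A'_j\}_{j=1}^q$ for the $Q_2$-facets, and $\{B_{k,l}\}$ for the vertex pairs, all in $\mathbb{S}^r_+$. The vanishing inner products $\langle A_i,B_{s,l}\rangle=0$ combined with positive semidefiniteness force the range orthogonality $\text{range}(B_{s,l})\perp \text{range}(A_i)$ for every $i<p$ and every $l$. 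Let $V=\sum_{i<p}\text{range}(A_i)$ with $d=\dim V$, so that $\text{range}(B_{s,l})\subseteq V^\perp$ for every $l$.

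From this, I extract two factorizations. For $S_2$, conjugate each $A'_j$ by the orthogonal projector $P$ onto $V^\perp$; the resulting matrices $P A'_j P$ are PSD, and since every $B_{s,l}$ already lies in $V^\perp$ the inner products are preserved: $\langle P A'_j P, B_{s,l}\rangle=\langle A'_j, B_{s,l}\rangle=(S_2)_{j,l}$. All these matrices live in an $(r-d)$-dimensional subspace, giving $rk_{PSD}(S_2)\leq r-d$. For $S_1$, the original $A_1,\ldots,A_p$ together with $B_{1,l_0},\ldots,B_{s,l_0}$ for any fixed $l_0$ already form a PSD factorization of $S_1$, and each $A_i$ lies in $V_1:=V+\text{range}(A_p)$; projecting the $B_{k,l_0}$'s onto $V_1$ then yields a PSD factorization of $S_1$ of size $\dim V_1$.

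The main obstacle, and the source of the ``$-1$'' gap relative to the linear result of \citet{tiwary2017extension}, is to prove $\dim V_1\leq d+1$, equivalently that $\text{range}(A_p)$ extends $V$ by at most one dimension. The guiding intuition is that the slack row of the base facet $B\times Q_2$ in $M$ is rank one as a row vector: it equals a positive constant $\alpha$ on columns with $k=s$ and vanishes elsewhere. Making this precise amounts to showing that $A_p$ may be replaced, without loss of generality, by a rank-one matrix $\beta\beta^T$ where $\beta\in V^\perp$ satisfies $\beta^T B_{s,l}\beta=\alpha$ for every $l$ and $\beta\perp\text{range}(B_{k,l})$ for every $k<s$ and every $l$; such a $\beta$ should be extractable via a rank-reduction argument applied to $A_p$ within the affine slice of the PSD cone defined by its pairing constraints. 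This is the delicate PSD-specific step where one dimension can be shared between the two factorizations (the ``$-1$'' not present in the linear setting, which must dedicate a full coordinate to the base facet). Combining the bounds gives $rk_{PSD}(S_1)+rk_{PSD}(S_2)\leq (d+1)+(r-d)=r+1$, which via Theorem~\ref{yannakakisPSD} is equivalent to $xc_{SDP}(Q_1\times Q_2)\geq xc_{SDP}(Q_1)+xc_{SDP}(Q_2)-1$.
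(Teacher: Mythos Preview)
Your projection strategy---split a size-$r$ PSD factorization along the orthogonal decomposition $V\oplus V^\perp$ with $V=\sum_{i<p}\text{range}(A_i)$---is sound and is in fact how the block-triangular PSD-rank inequality of \cite{fawzi2015positive} is proved. The gap is the rank-reduction step for $A_p$. You assert that $A_p$ may be replaced by a rank-one $\beta\beta^T$ satisfying the same inner-product constraints $\langle\,\cdot\,,B_{k,l}\rangle$, but you do not prove this, and the abstract claim can fail: take $C_1=e_1e_1^T$, $C_2=e_2e_2^T$, $C_3=\tfrac{1}{2}(e_1+e_2)(e_1+e_2)^T$ in $\mathbb{S}^2_+$; then $A=I_2$ satisfies $\langle A,C_l\rangle=1$ for every $l$, yet $\beta^T C_l\beta=1$ for all $l$ forces $\beta_1^2=\beta_2^2=1$ and $(\beta_1+\beta_2)^2=2$, which is impossible. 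I see no structural feature of the slack-matrix setting that rules out such configurations for the projected matrices $B_{s,l}$ restricted to $V^\perp$. Without the rank-one replacement, $\dim V_1$ can be as large as $r$ and your bound on $rk_{PSD}(S_1)$ becomes vacuous.

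The paper sidesteps this entirely by not trying to factorize the full pyramid slack matrix. In your notation: drop the base-facet row $A_p$ and the apex column $s$; the remaining $A_1,\dots,A_{p-1}$ together with $B_{1,l_0},\dots,B_{s-1,l_0}$ already factorize the slack matrix $S_1'$ of the \emph{base} $Q_1'$, and since each such $A_i$ has range in $V$, projecting the column factors onto $V$ gives $rk_{PSD}(S_1')\le d$. Combined with your correct bound $rk_{PSD}(S_2)\le r-d$, this is exactly the block-triangular inequality $xc_{SDP}(Q_1')+xc_{SDP}(Q_2)\le xc_{SDP}(Q_1\times Q_2)$, which the paper invokes from \cite[Theorem~2.10]{fawzi2015positive} as a black box. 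The pyramid relation $xc_{SDP}(Q_1)=xc_{SDP}(Q_1')+1$ (equation \eqref{pyramid-base-PSD}, also from \cite{fawzi2015positive}) then finishes the argument. So the ``$-1$'' comes from the known one-unit gap between a pyramid and its base, not from a shared dimension extracted out of $A_p$.
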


\begin{proof}
  This result follows directly from combining the analysis by
  \citet{tiwary2017extension} with a result from
  \citet{fawzi2015positive}. We assume w.l.o.g.~that $Q_2$ is a
  pyramid and thus we may assume the slack matrix $T$ of $Q_2$ has
  the form
\[T = \left[\begin{array}{c|c}
 T' & 0 \\
\hline
0 & 1
\end{array} \right] \]
with $T'$ a slack matrix of the base $Q'_2$ of $Q_2$. This implies
\begin{equation}\label{pyramid-base-PSD}
xc_{SDP}(Q_2) = xc_{SDP}(Q'_2) + 1,
\end{equation}
(see e.g., \cite[Theorem 2.10]{fawzi2015positive}). On the other hand,
it also implies that there is a slack matrix $A$ of $Q_1 \times Q_2$
of the following form (see \cite{tiwary2017extension}):
\[A = \left[\begin{array}{c|c|c|c}
S & \cdots & S & S \\
\hline
t'_1 \cdots t'_1 & \cdots & t'_k \cdots t'_k & 0 \cdots 0 \\
\hline
0 \cdots 0 & 0 \cdots 0 & 0 \cdots 0 & 1 \cdots 1
            \end{array} \right], \]

 where each $t'_i$ corresponds to a column of $T'$ and $S$ is
 a slack matrix of $Q_1$. Further, the following matrix is a
 sub-matrix
 of $A$:

\[A' = \left[\begin{array}{c|c}
 S & S \\
\hline
T' & 0
\end{array} \right]. \]
Since this is a block-triangular matrix by \cite[Theorem 2.10]{fawzi2015positive}, we know that
\[rank_{PSD}(A') \geq rank_{PSD}(S)  + rank_{PSD}(T'). \]
Using the factorization theorem for semidefinite extended formulations
(Theorem~\ref{yannakakisPSD}) and \eqref{pyramid-base-PSD} we obtain
\[xc_{SDP}(Q_1\times Q_2) \geq xc_{SDP}(Q_1) + xc_{SDP}(Q_2) -
1.\] \end{proof}

The previous result will allow us to combine polytopes and
obtain a lower bound for the resulting extension complexity. To this end
we prove the following:

\begin{lemma} \label{general-pyramid}
Let $S$ and $S^+$ be as before. Then
\begin{enumerate}
\item[(a)] $\conv(S^+)$ is a pyramid with base $\conv(S)\times \{0\}$ and apex $e_{n+1}$.
\item[(b)] $tw(S) \leq tw(S^+) \leq tw(S) + 1$.
\end{enumerate}
\end{lemma}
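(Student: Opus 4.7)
For part (a), the plan is to use Lemma \ref{lemma:conv-Pplus} and check the two conditions in Definition \ref{def:pyramid}. The affine-hull condition is immediate: every point of $\conv(S)\times\{0\}$ has last coordinate $0$, so its affine hull lies in the hyperplane $\{x_{n+1}=0\}$, which does not contain $e_{n+1}$. For the pyramid identity, I would use the explicit description $\conv(S^+) = \{(x,x_{n+1}) \in [0,1]^{n+1} : x \in (1-x_{n+1})\conv(S)\}$ from Lemma \ref{lemma:conv-Pplus}. The inclusion $\conv(\conv(S)\times\{0\}\cup\{e_{n+1}\}) \subseteq \conv(S^+)$ is direct since both generators lie in $\conv(S^+)$. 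Conversely, given $(x,x_{n+1})\in\conv(S^+)$: if $x_{n+1}=1$ then $x=0$ and $(x,x_{n+1})=e_{n+1}$; otherwise $y := x/(1-x_{n+1})\in\conv(S)$ and I can write $(x,x_{n+1}) = (1-x_{n+1})(y,0) + x_{n+1}\, e_{n+1}$, a convex combination of the base with the apex.

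For the upper bound $tw(S^+)\le tw(S)+1$ in part (b), the plan is to build an explicit formulation of $S^+$ from a formulation $\phi_1,\dots,\phi_m$ of $S$ with intersection graph $G$ of treewidth $tw(S)$. Using the characterization "if $x_{n+1}=1$ then $x=0$, else $x\in S$", I would take the constraints
\[
(1-x_{n+1})\,\phi_i(x) = 0 \quad (i\in[m]),\qquad x_{n+1}\,x_j = 0 \quad (j\in[n]),
\]
all viewed as boolean functions. The intersection graph of this formulation is exactly $G$ with a new vertex $x_{n+1}$ adjacent to every other vertex. The key treewidth fact is that adding a universal vertex increases treewidth by at most $1$: given a tree decomposition $(T,\{B_t\})$ of $G$ with bags of size $\le tw(S)+1$, the collection $(T,\{B_t\cup\{x_{n+1}\}\})$ is a valid tree decomposition of the augmented graph with bag size $\le tw(S)+2$.

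For the lower bound $tw(S)\le tw(S^+)$, the plan is to specialize a formulation of $S^+$. Pick any formulation of $S^+$ whose intersection graph $H$ achieves the minimum treewidth $tw(S^+)$. Since $S = \{x\in\{0,1\}^n : (x,0)\in S^+\}$, substituting $x_{n+1}=0$ in every defining constraint of $S^+$ yields a valid boolean formulation of $S$, and each new constraint involves only the variables that its predecessor involved, minus $x_{n+1}$. Therefore the intersection graph of the resulting formulation of $S$ is a subgraph of $H\setminus\{x_{n+1}\}$, whose treewidth is bounded by $tw(H)$ since both vertex deletion and taking subgraphs are treewidth-monotone operations.

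Neither direction looks technically hard; the only subtle point is making sure the substitution/augmentation preserves the intersection-graph structure cleanly, so I would state the two graph-theoretic facts (adding a universal vertex raises treewidth by at most one; deleting a vertex does not increase treewidth) explicitly and then apply them to the boolean formulations constructed above.
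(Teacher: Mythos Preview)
Your proposal is correct and follows essentially the same approach as the paper. For part (a) the paper simply invokes the proof of Lemma~\ref{lemma:conv-Pplus}, which is exactly the convex-combination argument you spell out; for part (b) the paper uses the same ``add a universal vertex / delete a vertex'' strategy, with the cosmetic difference that it writes the second family of constraints as $x_j \le 1 - x_{n+1}$ rather than your equivalent $x_{n+1}x_j = 0$, and it does not spell out the tree-decomposition justification for the two treewidth monotonicity facts you state explicitly.
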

\begin{proof}
\begin{inparaenum}
Property \item[(a)] follows directly from the proof of Lemma \ref{lemma:conv-Pplus}.
For property \item[(b)] consider a formulation
\begin{equation}\label{Pformulation}
 S = \{x\in\{0,1\}^n \ | \ \phi_i(x) = 0, \ i \in [m]\}.
\end{equation}
Then a valid formulation for $S^+$ is given by
 \[ S^+ = \{ (x, x_{n+1})\in \{0,1\}^{n+1} \ | \ (1-x_{n+1})\phi_i(x)
 = 0,\ i \in [m] \text{ and } \ x_j \leq 1- x_{n+1} ,\ j \in [n]\}.\]
(note that inequalities can be interpreted as boolean functions
as well). This formulation of $S^+$ has an intersection graph
formed by adding a new vertex to the intersection graph
of \eqref{Pformulation} connected to every other vertex.
This increases the treewidth by at most \(1\) and hence
\[tw(S^+) \leq tw(S) + 1.\]

For the remaining inequality, take a formulation of
$S^+$ whose intersection graph has minimal treewidth:
\begin{equation}
\label{eq:formulation1}
S^+ = \{(x,x_{n+1}) \in \{0,1\}^{n+1}\ | \ \varphi_i(x,x_{n+1}) = 0,\
i \in [m]\}.
\end{equation}
Since $S = \{x \in \{0,1\}^{n} \ | \ (x,0) \in S^+\}$, we obtain
\begin{equation}
    \label{eq:formulation2}
    S =  \{x \in \{0,1\}^{n}\ |\ \varphi_i(x, 0) = 0,\ i \in [m]\}.
\end{equation}
The treewidth associated with formulation \eqref{eq:formulation2} is at most the treewidth of formulation \eqref{eq:formulation1}, as the intersection graph of the former is obtained by removing
a vertex from the intersection graph of the latter. By assumption,
the treewidth of formulation \eqref{eq:formulation1}
is $tw(S^+)$, thus 
\[tw(S) \leq tw(S^+).\]
\end{inparaenum}
 \end{proof}

In what follows, we will need a short technical lemma.

\begin{lemma}\label{technical}
Let $S \subseteq \{0,1\}^n$.  Then
\[tw(S\times S) = tw(S). \]
\end{lemma}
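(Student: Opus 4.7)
The plan is to prove the two inequalities $tw(S \times S) \leq tw(S)$ and $tw(S \times S) \geq tw(S)$ separately, each by an explicit construction of a formulation whose intersection graph realizes the desired bound. The first direction exploits the parallel nature of a product set, and the second direction exploits that fixing one coordinate to a feasible value yields a formulation of $S$ whose intersection graph is a minor (in fact a subgraph) of the original.

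For the upper bound $tw(S \times S) \leq tw(S)$, I would take an optimal formulation
$$S = \{x \in \{0,1\}^n : \phi_i(x) = 0, \ i \in [m]\}$$
whose intersection graph $G$ has treewidth $tw(S)$. Then the obvious formulation
$$S \times S = \{(x,y) \in \{0,1\}^{2n} : \phi_i(x)=0,\ \phi_i(y)=0,\ i \in [m]\}$$
is a valid formulation of $S \times S$ whose intersection graph is the disjoint union $G \sqcup G$. Since the treewidth of a disjoint union equals the maximum of the treewidths of its connected components, this formulation has treewidth exactly $tw(S)$, giving the inequality.

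For the lower bound $tw(S) \leq tw(S \times S)$, assume $S \neq \emptyset$ (the empty case is trivial) and fix any point $y^\ast \in S$. Take a formulation
$$S \times S = \{(x,y) \in \{0,1\}^{2n} : \psi_i(x,y) = 0,\ i \in [m]\}$$
whose intersection graph $H$ has treewidth $tw(S \times S)$. Since $y^\ast \in S$, substituting yields
$$S = \{x \in \{0,1\}^n : \psi_i(x,y^\ast) = 0,\ i \in [m]\},$$
a valid formulation of $S$. After substitution, the variables $y_1,\dots,y_n$ become constants, so syntactically they no longer appear in the constraints; consequently the intersection graph of this formulation is a subgraph of the induced subgraph $H[\{x_1,\dots,x_n\}]$. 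Since treewidth is monotone under both taking subgraphs and taking induced subgraphs, the resulting intersection graph has treewidth at most $tw(H) = tw(S \times S)$, which gives $tw(S) \leq tw(S \times S)$.

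I do not foresee a serious obstacle: both directions are constructions and rely only on standard monotonicity of treewidth plus the definition of $tw(S)$ as an infimum over formulations. The only subtlety worth flagging is ensuring that the substitution in the second direction produces a bona fide formulation of $S$, which is where we use $y^\ast \in S$, and handling the degenerate case $S = \emptyset$ (where both sides are $0$) separately.
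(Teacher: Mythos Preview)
Your proposal is correct and follows essentially the same approach as the paper: duplicate an optimal formulation of $S$ for the upper bound, and substitute a fixed $\hat y\in S$ into an optimal formulation of $S\times S$ for the lower bound. You are slightly more explicit than the paper in handling the empty case and in noting that the resulting intersection graph is a subgraph of the induced subgraph on the $x$-variables, but these are minor refinements of the same argument.
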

\begin{proof}
    Inequality $\leq$ follows directly, since any formulation of $S$
    can be used to formulate $S\times S$. Moreover, the intersection
    graph of such formulation consists of \(2\) identical copies of the
    intersection graph of the formulation of $S$. From here the
    inequality follows.
    
    For the other inequality, take any formulation for $S\times S$:
    \begin{equation}\label{formulationPxP}
        S\times S = \{(x,y) \in \{0,1\}^{2n}\ : \ \varphi_i(x,y) = 0,\
        i= \in [m].\}
    \end{equation}
    Let $\hat{y} \in S$ be arbitrary. By definition we must have that
    $x\in S$ if and only if $(x,\hat{y}) \in S\times S$, thus
    $S = \{ x\in \{0,1\}^n\ : \ \varphi_i(x,\hat{y}) = 0,\ i=1,\ldots,
    m \}$ is a valid formulation for $S$. The intersection graph of
    such formulation is a sub-graph of the intersection graph of
    formulation \eqref{formulationPxP}, thus its treewidth is
    at most as large. This proves $tw(S)\leq tw(S\times S)$.  \end{proof}

The results above shows the key fact that taking Cartesian product of certain polytopes adds up their extension complexity, but roughly maintains their treewidth. We summarize this in the following Lemma.
\begin{lemma}\label{general-products}
Let $S \subseteq \{0,1\}^n$ and define
\[S^{\times k} = S^+ \times \cdots \times S^+\]
where the Cartesian product is taken $k$ times. Then
\[xc_{SDP}(S^{\times k}) \geq k\cdot xc_{SDP}(S) \]
and
\[ tw(S) \leq tw(S^{\times k}) \leq tw(S) + 1.\]
\end{lemma}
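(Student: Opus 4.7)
The plan is to assemble the lemma from three already-proved ingredients: the pyramid identity and inequality for semidefinite extension complexity (Theorem \ref{theo:pyramidsSDP} and the step \eqref{pyramid-base-PSD} inside its proof), the geometric and treewidth properties of the $^+$ operator (Lemma \ref{general-pyramid}), and the ``product-preserves-treewidth'' fact (Lemma \ref{technical}).

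For the extension-complexity lower bound, the first step is to observe, via Lemma \ref{general-pyramid}(a), that $\conv(S^+)$ is a pyramid with base $\conv(S)\times\{0\}$ and apex $e_{n+1}$, hence by the identity \eqref{pyramid-base-PSD} we have $xc_{SDP}(S^+) = xc_{SDP}(S)+1$. Then I would argue by induction on $k$: the base case $k=1$ is the previous identity; for the inductive step, write $S^{\times k} = S^{\times (k-1)} \times S^+$ and apply Theorem \ref{theo:pyramidsSDP} to this pair (with $S^+$ playing the role of the pyramid factor), obtaining
\[
xc_{SDP}(S^{\times k}) \;\geq\; xc_{SDP}(S^{\times (k-1)}) + xc_{SDP}(S^+) - 1.
\]
Combining with the inductive hypothesis $xc_{SDP}(S^{\times (k-1)}) \geq (k-1)\cdot xc_{SDP}(S) + 1$ and the identity $xc_{SDP}(S^+) = xc_{SDP}(S)+1$ yields $xc_{SDP}(S^{\times k}) \geq k\cdot xc_{SDP}(S) + 1$, which is stronger than the claim.

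For the treewidth bounds, the key observation is that Lemma \ref{technical} extends verbatim from $2$ copies to $k$ copies: the same fixing argument (plug in a feasible point into $k-1$ of the coordinate blocks) gives a formulation of a single copy from a formulation of the $k$-fold product, and conversely a single formulation can be duplicated $k$ times without merging any constraints across blocks. Applied to $S^+$, this yields $tw\bigl((S^+)^{\times k}\bigr) = tw(S^+)$. Combining with Lemma \ref{general-pyramid}(b), which gives $tw(S) \leq tw(S^+) \leq tw(S)+1$, immediately sandwiches $tw(S^{\times k})$ between $tw(S)$ and $tw(S)+1$.

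I do not anticipate a real obstacle here: the only point requiring a small verification is the extension of Lemma \ref{technical} to $k$ factors, but the proof of that lemma only uses the ability to substitute a fixed feasible point into one coordinate block, which works identically for any number of blocks. Hence the entire argument reduces to a bookkeeping induction for the SDP bound and a direct application of the existing lemmas for the treewidth bounds.
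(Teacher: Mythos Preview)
Your proposal is correct and follows essentially the same route as the paper: induction on $k$ using Theorem~\ref{theo:pyramidsSDP} together with \eqref{pyramid-base-PSD} for the SDP bound, and Lemma~\ref{technical} (applied iteratively, which amounts to your $k$-fold extension) together with Lemma~\ref{general-pyramid}(b) for the treewidth bound. The only cosmetic difference is that you carry an explicit ``$+1$'' through the induction, whereas the paper writes the recursion as $xc_{SDP}(S^{\times k}) \geq xc_{SDP}(S^{\times (k-1)}) + xc_{SDP}(S)$ and unrolls it; also, be sure (as the paper does) to note that $\conv(S^{\times k}) = \conv(S^{\times(k-1)})\times\conv(S^+)$ so that Theorem~\ref{theo:pyramidsSDP} applies to these polytopes.
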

\begin{proof}
Since $\conv(S^+)$ is a pyramid (part (a) of Lemma \ref{general-pyramid}) and $\conv(S^+\times S^+) = \conv(S^+) \times \conv(S^+)$ we obtain
\begin{align*}
xc_{SDP}(S^{\times k})  & \geq  xc_{SDP}(S^{\times (k-1)}) + xc_{SDP}(S^+) - 1 && \text{(by Theorem \ref{theo:pyramidsSDP})}\\
& =  xc_{SDP}(S^{\times (k-1)}) + xc_{SDP}(S)  && \text{(by \eqref{pyramid-base-PSD})}
\end{align*}
Applying this inductively we obtain $xc_{SDP}(S^{\times k}) \geq k\cdot xc_{SDP}(S)$.
On the other hand, applying Lemma \ref{technical} iteratively we have
\[tw(S^{\times k}) = tw(S^+)\]
and thus the treewidth claim follows from part (b) of Lemma \ref{general-pyramid}.
\end{proof}

\subsection{Composing polytopes of high semidefinite extension complexity}
We now use the results in Sections \ref{sec:highxc} and
\ref{sec:composition} and a family $\{S_n\}_{n\in \mathbb{N}}$ of
(assumed) high (semidefinite) extension complexity, to construct a family of
polytopes having a (semidefinite) extension complexity lower bounded by treewidth.

\begin{theorem}
  \label{theo:general} Let $\{S_n\}_{n\in \mathbb{N}}$ be a family of sets satisfying \eqref{SnDef}, i.e.
    \[xc_{SDP}(S_n) \in \Omega\left( 2^{f_n}\right),\]
    and technical condition \eqref{technical-bound}. Consider a sequence $\{\omega_n\}_{n \in \mathbb{N}}$ with
  $\omega_n\leq n-1$ for all $n \in \mathbb{N}$. Then there exists a family
  of sets $\{S'_n\}_{n \in \mathbb{N}}$, $S'_n \subseteq \{0,1\}^n$, such that:
	\[tw(S'_n) \leq \omega_n + 1 \quad \text{and}\quad xc_{SDP}(S'_n)\in \Omega \left( \frac{n}{\omega_n + 1}  2^{f_{\omega_n}}\right). \]
	Moreover, $\conv(S_n')$ is a pyramid and $tw(S'_n) \in \Omega(f_{\omega_n})$.
\end{theorem}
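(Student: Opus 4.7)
The plan is to amplify the assumed hard family $\{S_n\}$ using the two composition primitives from Section \ref{sec:composition}: the pyramid operation $S \mapsto S^+$ and the Cartesian power $S \mapsto S^{\times k}$. Roughly, I want to take the $\omega_n$-dimensional hard seed $S_{\omega_n}$, stack it $k \approx n/(\omega_n+1)$ times (each copy contributing an additive $xc_{SDP}(S_{\omega_n})$ to the SDP extension complexity while only adding one to the treewidth), and finally cap the construction with one more $^+$ to secure the pyramid property demanded by the statement.

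Concretely, set $k := \lfloor (n-1)/(\omega_n+1) \rfloor$ and define $T_n := S_{\omega_n}^{\times k} \subseteq \{0,1\}^{k(\omega_n+1)}$. Then let $S'_n := T_n^+$, padded if necessary with $n - k(\omega_n+1) - 1$ coordinates fixed to $0$ so that $S'_n \subseteq \{0,1\}^n$. Since fixing coordinates to $0$ is an affine embedding, it preserves both the semidefinite extension complexity and the pyramid structure of the convex hull, and in the intersection graph it only adds isolated vertices, so it does not change the treewidth.

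I would then verify the four claimed properties as a direct assembly of earlier results. Lemma \ref{general-pyramid}(a) gives that $\conv(T_n^+)$ is a pyramid, and zero-padding keeps it so. For the treewidth upper bound, since $S_{\omega_n}$ has $\omega_n$ variables we have $tw(S_{\omega_n}) \leq \omega_n - 1$; Lemma \ref{general-products} promotes this to $tw(T_n) \leq \omega_n$, and Lemma \ref{general-pyramid}(b) yields $tw(S'_n) \leq \omega_n + 1$. For the extension complexity, Lemma \ref{general-products} gives $xc_{SDP}(T_n) \geq k \cdot xc_{SDP}(S_{\omega_n})$; then applying equation \eqref{pyramid-base-PSD} (or Theorem \ref{theo:pyramidsSDP}) to the pyramid $T_n^+$ produces $xc_{SDP}(S'_n) \geq k \cdot xc_{SDP}(S_{\omega_n}) + 1 \in \Omega\bigl(\tfrac{n}{\omega_n+1}\, 2^{f_{\omega_n}}\bigr)$. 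Finally, monotonicity of $tw$ through the $^{\times k}$ and $^+$ constructions (mirroring the argument inside Lemma \ref{general-pyramid}(b)) gives $tw(S'_n) \geq tw(S_{\omega_n})$, and Lemma \ref{lemma:big-treewidth-general} applied to $S_{\omega_n}$ yields $tw(S_{\omega_n}) \in \Omega(f_{\omega_n})$, so $tw(S'_n) \in \Omega(f_{\omega_n})$.

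The main obstacle is essentially bookkeeping rather than a new idea: one has to confirm that the ``extract a formulation of the smaller factor by restricting'' trick used in Lemma \ref{general-pyramid}(b) and Lemma \ref{technical} really does propagate cleanly through the $^{\times k}\!\circ\!^+$ composition and through the zero-padding, and one has to dispatch the trivial corner case $\omega_n = n-1$ where the chosen $k$ could degenerate to $0$ (this case is not interesting asymptotically and can be handled by taking $S'_n := S_n$ directly, or absorbed into the $\Omega(\cdot)$). Everything else is a direct reassembly of Theorem \ref{theo:pyramidsSDP}, Lemmas \ref{general-pyramid}, \ref{technical}, \ref{general-products}, and \ref{lemma:big-treewidth-general}.
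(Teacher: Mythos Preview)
Your proposal is correct and follows essentially the same construction as the paper: take the seed $S_{\omega_n}$, form $S_{\omega_n}^{\times k}$ with $k = \lfloor (n-1)/(\omega_n+1)\rfloor$, then apply one final $^+$, and read off the four properties from Lemmas \ref{lemma:big-treewidth-general}, \ref{general-pyramid}, and \ref{general-products}. The paper is slightly less explicit than you about zero-padding to land exactly in $\{0,1\}^n$ and about the degenerate case $k=0$, but otherwise the argument is identical.
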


\begin{proof}
Fix $n\in \mathbb{N}$ and consider set $S_{\omega_n}$.  This set has $\omega_n$ variables and from
Lemma \ref{lemma:big-treewidth-general} $tw(S_{\omega_n})$ is $\Omega(f_{\omega_n})$ and at most $\omega_n - 1$.
Now let $k\in \mathbb{N}$ and consider $S_{\omega_n}^{\times k}$. By Lemma \ref{general-products}
\[tw(S_{\omega_n}) \leq tw(S_{\omega_n}^{\times k}) \leq tw(S_{\omega_n}) + 1\]
which implies $tw(S_{\omega_n}^{\times k})$ is at most $\omega_n$. 
Additionally 
\[ xc_{SDP}(S_{\omega_n}^{\times k}) \geq k\cdot xc(S_{\omega_n}) \in \Omega\left( k \cdot  2^{f_{\omega_n}} \right).\]
As a last step, we define 
\begin{equation}\label{lastpyramid}
S'_n = \left(S_{\omega_n}^{\times k}\right)^+,
\end{equation}
which inherits the extension complexity bounds from
$S_{\omega_n}^{\times k}$ and increases the treewidth by at most 1. The last requirement we need is 
 $S_n'$ to have at most $n$ variables, hence, we require
 \[ k \cdot (\omega_n +1) \leq n - 1.\]
Choosing $k = \lfloor \frac{n - 1}{\omega_n + 1} \rfloor$ concludes
the result.
 \end{proof}

The reader might notice that the last step taken in \eqref{lastpyramid} 
is not necessary to obtain the extension complexity result. However,
this will prove useful next, when we further analyze how hard these
instances are.

We are now ready to apply the techniques we developed to some known
hard polytopes, thus showing that Theorem \ref{theo:binary} is
essentially tight.

\begin{maintheorem}\label{thr:counting}
    For every $\{\omega_n\}_{n \in \mathbb{N}}$ satisfying $\omega_n\leq n - 1$ for all $n \in \mathbb{N}$, there exists a family of sets $\{S'_n\}_{n \in \mathbb{N}}$ each with at most $n$ variables and such that:
	\[ tw(S'_n) \leq \omega_n + 1 \quad \text{and} \quad xc_{SDP}(S'_n)\in \Omega \left( \frac{n}{\omega_n + 1}  2^{\frac{\omega_n}{4}(1-o(1))}\right) \]
Moreover, $tw(S'_n)$ is $\Omega(\frac{\omega_n}{4}(1-o(1)))$ and $\conv(S_n')$ is 
a pyramid.
\end{maintheorem}
\begin{proof}
  In \cite{briet2013existence} the the existence of $n$-dimensional
  0/1 polytopes with semidefinite extension complexity lower bounded
  by
\[2^{\frac{n}{4}(1-o(1))},\]
is shown. We simply use the vertices of these polytopes as
$\{S_n\}_{n\in\mathbb{N}}$ in Theorem \ref{theo:general} and the result is obtained. 
 \end{proof}

Note that Theorem \ref{thr:counting} provides a nice additional insight:
as $tw(S'_n)\in O(\omega_n)$ the instances we construct can be
formulated \emph{sparsely}, but there is no valid formulation that is
considerably \emph{sparser} than that as $tw(S'_n)$ is
$\Omega(\frac{\omega_n}{4}(1-o(1)))$.

\subsection{Reformulations}\label{sec:reformulations}
When solving optimization problems in general and integer
programming problems in particular, reformulation techniques are often
employed to modify the original instance, in order to obtain a more
well-behaved one. For instance, the affine map $(x_1,x_2-1)$ can be
applied to the set $\{(0,0),(0,1),(1,1),(2,1)\}\subseteq \R^2$ to
obtain the set $\{(0,0),(0,1),(1,0),(1,1)\}$. The convex hull of the
latter set can then be ``decomposed'' as the Cartesian product of the
line segments $[0,1]$ and $[0,1]$; the same is not
true for the former.

In this section we show that the hard instances from Theorem \ref{thr:counting} are
robust with respect to common reformulations techniques. A very general notion of reformulation was introduced by
\citet{BraunPokutta}, where the authors deal with any nonnegative problem and allow to customize which objective functions (called \emph{evaluation}) the original problem and the reformulation have to agree on (see \cite{BraunPokutta} for details). Here, we restrict the definition to reformulations that agree with the original problem for \emph{any} nonnegative objective function.

\begin{Def}\label{def:reformulation} 
  Let ${S}\subseteq \{0,1\}^n$. We define a \emph{reformulation}
  of ${S}$ as a triple $({ S}',f,d)$, where:

\begin{enumerate}
\item[(a)]\label{eq:setS} ${S}\subseteq \mathbb{R}^m$ is an arbitrary set;
\item[(b)] $f:{S}\rightarrow {S}'$ is a bijection;
\item[(c)]\label{eq:obj} $d$ is a collection of affine functions
  $\{d_c: \R^m \rightarrow \R: c \in {\cal C}\}$ with the property that $d_c(f(x))=c(x)$ for all
  $x \in {S}$ and $c \in {\cal C}$, where
  ${\cal C}$ contains all affine functions $c: \R^n \rightarrow \R$ with \(c(x) \geq 0\) for all
  $x \in { S}$.
\end{enumerate}
\end{Def}

The motivation for this definition is that, for each \emph{affine} function $c$ that is nonnegative over ${ S}$, one could find the optimal
solution to the instance
\begin{equation}\label{eq:a-BP} \max \{c(x) :  x \in { S}\} \end{equation}
by finding an optimal solution $y^*$ to the
instance \begin{equation}\label{eq:BP-reformulation}\max \{d_c(y) : y
  \in { S}'\}\end{equation} and then outputting $f^{-1}(y^*)$. Typically one
  is  interested in the case when \eqref{eq:BP-reformulation} is an
integer programming problem, thus $f$ can be viewed as in integer programming re-encoding
of an optimization problem. However, in the following analysis we do not need to restrict
$S'$ to be an integer set. Furthermore, we do not need to assume $S\subseteq \{0,1\}^n$ either, but we
will phrase everything with this assumption as the sets we construct in this work are all 0/1 valued.

For a set ${S}\subseteq \{0,1\}^n$, let $xc^{ref}({S})$ (resp. $xc_{SDP}^{ref}({S})$) be the minimum linear (resp. semidefinite) extension complexity of a reformulation for ${S}$. Clearly $xc^{ref}({S})\leq xc({S})$ and $xc_{SDP}^{ref}({S})\leq xc_{SDP}({S})$, as the set itself can be viewed as a trivial reformulation.

The following results appeared in \cite{BraunPokutta}, and show that the extension complexity of a set cannot be significantly reduced using a reformulation.

\begin{theorem}
$xc^{ref}({S})\geq  xc({S}) - 1 \quad \hbox{ and } \quad
xc^{ref}_{SDP}({S}) \geq  xc_{SDP}({S}) - 1. $
\end{theorem}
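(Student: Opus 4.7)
The plan is to prove both inequalities by a slack-matrix embedding: for any reformulation $(S',f,d)$ of $S$, the slack matrix of $\conv(S)$ appears, up to relabeling, as (essentially) a submatrix of a slack matrix of $\conv(S')$ whose rows are valid (though possibly non-facet) inequalities of $\conv(S')$. Yannakakis' theorems (the classical linear version and the PSD analogue from Theorem~\ref{yannakakisPSD}) then transport the lower bound.

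The first step is to fix a reformulation $(S',f,d)$ and let $c_1,\dots,c_k$ be the facet-defining affine functions of $\conv(S)$; each satisfies $c_i(x)\ge 0$ on $\conv(S)\supseteq S$, so $c_i\in \mathcal{C}$. By Definition~\ref{def:reformulation}, each $c_i$ has an associated affine function $d_{c_i}$ on $\mathbb{R}^m$ with $d_{c_i}(f(v))=c_i(v)\ge 0$ for every $v\in S$. Since $f$ is a bijection onto $S'$, this forces $d_{c_i}\ge 0$ on $S'$, and hence on $\conv(S')$, so each $d_{c_i}$ is a valid affine inequality for $\conv(S')$.

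Next, consider two slack matrices: $M$, with rows indexed by the $c_i$ and columns by $v\in S$, entries $M_{i,v}=c_i(v)$; and $M'$, with rows indexed by the $d_{c_i}$ and columns by $f(v)\in S'$, entries $M'_{i,f(v)}=d_{c_i}(f(v))$. The defining identity of the reformulation gives $M'_{i,f(v)}=c_i(v)=M_{i,v}$, so $M$ and $M'$ coincide after the bijective column relabeling induced by $f$. Since $M$ is a facet slack matrix of $\conv(S)$, Yannakakis' theorem gives $rk_+(M)=xc(\conv(S))$ and $rk_{PSD}(M)=xc_{SDP}(\conv(S))$.

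To close, observe that since the rows of $M'$ are valid (but not necessarily facet-defining) inequalities for $\conv(S')$, each $d_{c_i}$ is a nonnegative combination of the facets of $\conv(S')$; hence $M'=T\cdot S^{*}_{S'}$ for some nonnegative $T$ and the facet slack matrix $S^{*}_{S'}$ of $\conv(S')$. Composing a nonnegative factorization $S^{*}_{S'}=AB$ on the left by $T$ yields $M'=(TA)B$ with $TA\ge 0$, proving $rk_+(M')\le rk_+(S^{*}_{S'})=xc(\conv(S'))$; the analogous substitution $A_i\mapsto \sum_k T_{ik}A_k$, which stays in $\mathbb{S}^r_+$, covers the PSD case. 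Combining yields $xc(\conv(S))=rk_+(M)=rk_+(M')\le xc(\conv(S'))$, and similarly $xc_{SDP}(\conv(S))\le xc_{SDP}(\conv(S'))$. Minimizing over reformulations produces $xc^{ref}(S)\ge xc(S)$ and $xc^{ref}_{SDP}(S)\ge xc_{SDP}(S)$, which are in fact stronger than the stated inequalities; the $-1$ slack reflects the broader notion of reformulation in \cite{BraunPokutta}. The only delicate point is the upper bound $rk_+(M')\le xc(\conv(S'))$ when the rows of $M'$ are not a facet description, which is exactly the composition argument just given.
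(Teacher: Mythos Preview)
The paper does not supply its own proof of this theorem; it simply attributes the result to \cite{BraunPokutta}. Your slack-matrix embedding is therefore a stand-alone argument rather than something to compare against, and it is essentially correct. In fact, under the paper's Definition~\ref{def:reformulation} (where $\mathcal{C}$ consists of \emph{all} nonnegative affine functions on $S$), your argument yields the stronger conclusions $xc^{ref}(S)\geq xc(S)$ and $xc^{ref}_{SDP}(S)\geq xc_{SDP}(S)$, exactly as you observe at the end.

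Two small points deserve an explicit sentence. First, the assertion that every valid inequality for $\conv(S')$ is a nonnegative combination of its facets (so that $M'=T\,S^{*}_{S'}$ with $T\geq 0$, with no additional all-ones row needed) relies on $\conv(S')$ being a \emph{bounded} polytope of dimension at least one: in that case the trivial inequality $0\leq 1$ is not an extreme ray of the valid-inequality cone, so the facets generate the whole cone. Here $S'=f(S)$ is finite, hence $\conv(S')$ is indeed a polytope, but you should say so; for unbounded polyhedra the claim fails, and this is precisely where a ``$-1$'' would enter. Second, your $M'$ has columns indexed by all of $S'$, whereas $S^{*}_{S'}$ has only vertex columns; since non-vertex points of $S'$ are convex combinations of vertices and each $d_{c_i}$ is affine, the extra columns are convex combinations of existing ones and do not change either the nonnegative or the PSD rank. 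With these two clarifications the proof is complete.
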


Thus, reformulating problems as in Definition \ref{def:reformulation} can decrease their extension
complexity by at most $1$. This shows the sets we construct are robust, in terms of their extension complexity, for reformulations. However, extension complexity is not the only parameter to measure how hard a problem is, and \emph{separability} can be also used to achieve tractability. We define and analyze this next.

\subsubsection{Reformulations and separability}

Given that the extension complexity cannot be reduced significantly via reformulations, one could aim at reformulating an optimization problem in a way that the resulting set is a Cartesian product of lower-dimensional sets (see the example at the beginning of Section \ref{sec:reformulations}).

More formally, for a $d$-dimensional set ${S}\subseteq \mathbb{R}^n$, we say it is
\emph{decomposable} if we can write
${S} = {S}_1 \times {S}_2$, with
${S}_i \subseteq \mathbb{R}^{n_i}$ (with \(i \in \{1,2\}\)) of
dimension $d_i$ so that $d_1 + d_2 = d$, and $d>d_1 \geq d_2$. We say
that a reformulation $({S}',f,d)$ of ${S}$ is decomposable
if so is ${S}'$. In this section, we prove that the hard instances from Theorem
\ref{thr:counting} cannot be reformulated to be decomposable. 

\begin{lemma}\label{lem:pyramids-stay-pyramids}
Let  $({S}',f,d)$ be a reformulation of ${S}\subseteq \{0,1\}^n$. If $\text{conv}({S})$ is a pyramid, then 
 $\text{conv}({S}')$ is also a pyramid.
\end{lemma}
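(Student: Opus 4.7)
The plan is to characterize the pyramid property of $\text{conv}(S)$ by an affine ``height function'' that vanishes on the base and is strictly positive at the apex, and then to transport this function to $\mathbb{R}^m$ using the bijection $f$ and the affine correspondence $d$. Once the transported function is in hand, it will directly witness that $\text{conv}(S')$ is a pyramid.

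First I would use the hypothesis $S \subseteq \{0,1\}^n$ to identify $S$ with the vertex set of $\text{conv}(S)$: every $0/1$ point is extreme in $[0,1]^n$, hence extreme in $\text{conv}(S)$, and conversely every vertex of $\text{conv}(S)$ must lie in $S$. Writing $\text{conv}(S) = \text{conv}(B\cup\{v^*\})$ with $v^*\notin \text{aff}(B)$, it follows that $v^*\in S$ and (after replacing $B$ by $\text{conv}(S\setminus\{v^*\})$, which preserves $\text{aff}(B)$) the set of vertices of $B$ is exactly $S\setminus\{v^*\}$.

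Next, since $v^*\notin \text{aff}(B)$, there exists an affine function $\ell:\R^n\to\R$ that vanishes identically on $\text{aff}(B)$ and satisfies $\ell(v^*)\neq 0$. Replacing $\ell$ by $-\ell$ if necessary, I may assume $\ell(v^*)>0$. Then $\ell(x)=0$ for all $x\in S\setminus\{v^*\}$ and $\ell(v^*)>0$, so $\ell$ is nonnegative on $S$. By Definition~\ref{def:reformulation}(c), the reformulation provides an affine function $d_\ell:\R^m\to\R$ with $d_\ell(f(x))=\ell(x)$ for all $x\in S$. Setting $v':=f(v^*)$ and using the bijectivity of $f$, we obtain $d_\ell\equiv 0$ on $S'\setminus\{v'\}$ and $d_\ell(v')>0$.

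To finish, I would observe that because $d_\ell$ is affine and vanishes on $S'\setminus\{v'\}$, it vanishes on $\text{aff}(S'\setminus\{v'\})$; since $d_\ell(v')>0$, this forces $v'\notin \text{aff}(S'\setminus\{v'\})$. Hence $\text{conv}(S')=\text{conv}\bigl((S'\setminus\{v'\})\cup\{v'\}\bigr)$ is a pyramid with base $\text{conv}(S'\setminus\{v'\})$ and apex $v'$. The only delicate point — and what I expect to be the main obstacle — is the sign normalization step: the definition of reformulation only transports affine functions that are \emph{nonnegative} on $S$, so one must carefully orient the separating affine functional $\ell$ before invoking the existence of $d_\ell$. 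Everything else is a mechanical application of the definitions together with the fact that $S$ agrees with the vertex set of $\text{conv}(S)$.
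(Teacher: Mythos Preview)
Your proof is correct and follows essentially the same route as the paper's: both arguments pick an affine functional that is constant on the base and different at the apex, normalize it to be nonnegative on $S$, and then transport it via $d$ to witness the pyramid structure of $\text{conv}(S')$. Your version is somewhat more explicit about why the transported functional forces $v'\notin\text{aff}(S'\setminus\{v'\})$ and about the sign normalization needed to land in the class $\mathcal{C}$, but these are refinements rather than a different strategy.
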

\begin{proof}
Since $\text{conv}({S})$ is a pyramid, there exist ${ B}\subseteq {S}$ and a point $v\in {S}\setminus \text{aff}({ B})$ such that 
 \[\text{conv}({S}) = \text{conv}({ B}\cup \{v\}).\] Hence, there exists an affine function $\hat{c}: \R^n \rightarrow \R$ such that \[ 0\leq M = \hat{c}(x)\neq \hat{c}(v)\geq 0 \quad \forall x\in {S}\cap { B}.\]
 
 We claim that $ \text{conv}({S}')$ is a pyramid with base $ \text{conv}(\{f(x) \, :\, x\in {S} \cap { B}\})$ and apex $f(v)$. This follows from the fact that $d_{\hat{c}}( f(x)) = \hat{c}(x) $, thus
 $M = d_{\hat{c}}(f(x)) \neq d_{\hat{c}}( f(v) )\quad \forall x\in {S}\cap { B} $. \end{proof}

Next we show that pyramids are not decomposable. 

\begin{lemma}\label{lem:roma}
Let $\conv({S})\subseteq \R^n$ be a pyramid. Then ${S}$ is not decomposable.
\end{lemma}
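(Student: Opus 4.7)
The plan is to argue by contradiction. Suppose $S$ is decomposable, so $S=S_1\times S_2$ with $S_i\subseteq\R^{n_i}$ and $d_i=\dim\conv(S_i)$ satisfying $d_1+d_2=d$ and $d>d_1\geq d_2$. The last condition forces $d_2\geq 1$, hence $d_1\geq d_2\geq 1$. Setting $Q_i=\conv(S_i)$, the standard identity $\conv(S_1\times S_2)=\conv(S_1)\times\conv(S_2)$ gives $\conv(S)=Q_1\times Q_2$, so it suffices to show that a product $Q_1\times Q_2$ of two polytopes of positive dimension cannot be a pyramid.

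Assume for contradiction that $Q_1\times Q_2$ is a pyramid with apex $v$ and base $B$, so by Definition \ref{def:pyramid} we have $v\notin\text{aff}(B)$. Since the vertices of a Cartesian product of polytopes are exactly the products of vertices of the factors, I can write $v=(v_1,v_2)$ with $v_i$ a vertex of $Q_i$. Because each $Q_i$ has dimension at least $1$, it contains at least two vertices, and I can pick a vertex $u_i\neq v_i$ of $Q_i$. The three points
\[(v_1,u_2),\quad (u_1,v_2),\quad (u_1,u_2)\]
are then vertices of $Q_1\times Q_2$, all distinct from $v$; since every vertex of the pyramid other than the apex belongs to the base, all three lie in $B$. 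The key identity
\[(v_1,u_2)+(u_1,v_2)-(u_1,u_2)=(v_1,v_2)=v\]
(whose coefficients $1,1,-1$ sum to $1$) then shows $v\in\text{aff}(B)$, contradicting the pyramid condition.

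The steps are short and there is no real obstacle: the only facts to invoke are that vertices of a Cartesian product are products of vertices, that every non-apex vertex of a pyramid is a vertex of its base, and the algebraic identity above. The proof is essentially the observation that the apex of a pyramid ``loses'' one affine direction when removed, which cannot happen in a product of two polytopes of positive dimension because the apex can always be recovered as an affine combination of three nearby product-vertices.
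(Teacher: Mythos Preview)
Your proof is correct and follows essentially the same approach as the paper: both assume a decomposition $S=S_1\times S_2$ with each factor of positive dimension, write the apex as $(v_1,v_2)$, choose distinct vertices $u_i\neq v_i$ in each factor, and observe that $(v_1,u_2)+(u_1,v_2)-(u_1,u_2)=(v_1,v_2)$ places the apex in the affine hull of three base vertices. Your write-up is in fact slightly more explicit, spelling out the identity $\conv(S_1\times S_2)=\conv(S_1)\times\conv(S_2)$ and the affine combination with coefficients summing to $1$.
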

\begin{proof} Without loss of generality let ${S}$ have dimension
  $d \geq 3$; the statement is trivial otherwise. Assume for
  contradiction that ${S}= {S}_1 \times {S}_2 $ with
  ${S}_i \subseteq \mathbb{R}^{n_i}$ of dimension $d_i$ for
  $i \in \{1,2\}$ and without loss of generality let \(d_1 \geq
  d_2\). Let $(\bar y_1,\bar y_2)$ be the apex of $\conv({
    S})$. Since $d_1\geq d_2 = d - d_1 \geq 1$, we deduce that there
  exist $\tilde y_1 \in \mathbb{R}^{n_1}$,
  $\tilde y_2 \in \mathbb{R}^{n_2}$ with $\tilde y_i \neq \bar y_i$
  for $i \in \{1,2\}$ such that $(\tilde y_1, \tilde y_2)$ is an extreme point of
  $\conv({S})$. Moreover,
  $(\bar y_1,\tilde y_2), (\tilde y_1,\bar y_2),(\tilde y_1,\tilde
  y_2)$ are all extreme points, and since they are different from the
  apex, they must lie on the base of $\conv({S})$. Nonetheless,
  the affine space generated by those three latter points contains
  $(\bar y_1,\bar y_2)$, a contradiction. \end{proof}

Note that the hypothesis of the previous lemma cannot be relaxed to the weaker assumption that $\conv({S})$ \emph{(only) contains} a pyramid, as the example from the beginning of Section
\ref{sec:reformulations} shows.

\begin{lemma}
Fix $n \in \mathbb{N}$, let ${S}:=P'_n \subseteq \{0,1\}^n$ as in Theorem \ref{thr:counting}. Then ${S}$ does not admit a reformulation that is decomposable.
\end{lemma}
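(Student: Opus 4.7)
The plan is to chain the two preceding lemmas. Theorem \ref{thr:counting} explicitly guarantees that $\conv(S'_n)$ is a pyramid, so the hypothesis of Lemma \ref{lem:pyramids-stay-pyramids} is met by our distinguished set $S = S'_n$.

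First, I would take an arbitrary reformulation $(S'', f, d)$ of $S$ in the sense of Definition \ref{def:reformulation}. Since $\conv(S)$ is a pyramid, Lemma \ref{lem:pyramids-stay-pyramids} immediately yields that $\conv(S'')$ is a pyramid as well. Next, I would invoke Lemma \ref{lem:roma}, which states that whenever $\conv(S'')$ is a pyramid, the set $S''$ is not decomposable. Putting these two applications together gives the conclusion: no reformulation of $S'_n$ can be decomposable, as desired.

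There is essentially no obstacle here beyond citing the two lemmas in the right order; the heavy lifting was done in establishing (i) that pyramidality is preserved under the reformulation notion (through the existence of the affine separator $\hat{c}$ that gets transported by $d_{\hat{c}} \circ f = \hat{c}$) and (ii) that pyramids fail the decomposability product structure (through the extreme point argument on $(\bar y_1,\tilde y_2),(\tilde y_1, \bar y_2),(\tilde y_1,\tilde y_2)$ forcing the apex into the base's affine hull). The only thing worth emphasizing in the writeup is that the pyramid property of $\conv(S'_n)$ is precisely the extra assertion carried through from \eqref{lastpyramid} in the construction of $S'_n$, which is the reason the final pyramid-forming step there is not wasted, as already flagged by the authors just after the proof of Theorem \ref{theo:general}.
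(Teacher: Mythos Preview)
Your proposal is correct and follows essentially the same approach as the paper: take an arbitrary reformulation, apply Lemma~\ref{lem:pyramids-stay-pyramids} (using that $\conv(S'_n)$ is a pyramid, as guaranteed by Theorem~\ref{thr:counting}), and then apply Lemma~\ref{lem:roma} to conclude non-decomposability. Your additional commentary on why the final $(\cdot)^+$ step in \eqref{lastpyramid} is needed is accurate and matches the paper's own remark after Theorem~\ref{theo:general}.
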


\begin{proof}
  Let $({S}',f,d)$ be a reformulation of ${S}$. By Lemma 
  \ref{lem:pyramids-stay-pyramids}, $\conv({S}')$ is a
  pyramid and  by Lemma \ref{lem:roma}, ${S}'$ is not decomposable.
 \end{proof}

\section{Related Results} \label{extra-results}
We will now present several related results. 

\subsection{Specialization to Stable Set Polytopes}

We proved the existence of certain 0/1 polytopes with high exponential
semidefinite extension complexity, parametrized using the treewidth of
a formulation of the set itself in Theorem \ref{thr:counting}. For this
purpose, we used a 0/1 set that does not necessarily correspond to a
polytope of a combinatorial problem, such as, e.g., the \emph{stable
  set problem} or the \emph{matching problem}. And even if we had used a
family of combinatorial polytopes as a starting point, there is no
guarantee that the resulting polytopes in Theorem \ref{thr:counting}
correspond to a combinatorial problem as well. In this section we show
that the argument in Theorem \ref{theo:general} is compatible with the
stable set polytope, and one can state a similar parametrized lower
bound on the semidefinite extension complexity of a family of stable
set polytopes. Due to the restriction of the class of polytopes
considered here, the lower bound is worse than that of Theorem
\ref{thr:counting}, but nonetheless it is exponential in the treewidth
parameter.

\begin{Def}
Given a graph $G = (V,E)$ on $n$ nodes, where $V= [n]$. We define
\begin{equation}
    \text{STAB}(G) = \{x\in \{0,1\}^n \ | \ x_i + x_j \leq 1 \ \forall \{i,j\} \in E(G)\}. \label{eq:stab}
\end{equation}
\end{Def}

We first note there is a correspondence between the treewidth of the
graph $G$, and the treewidth of the set $\text{STAB}(G)$ as defined in
Definition \ref{treewidthofset}. While expected this is a non-trivial
fact, since one could conceive the existence of a
boolean-formula-based formulation of $\text{STAB}(G)$ that can be
sparser that $G$ itself. We prove that this is not the case.

\begin{lemma} Given a graph $G = (V,E)$. Then:
\[tw(G) = tw(\text{STAB}(G))\]
\end{lemma}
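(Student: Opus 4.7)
\medskip

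\noindent\textbf{Proof proposal.}
The plan is to prove the two inequalities $tw(\mathrm{STAB}(G)) \leq tw(G)$ and $tw(G) \leq tw(\mathrm{STAB}(G))$ separately. The first direction is immediate: the natural formulation \eqref{eq:stab} uses one constraint $x_i + x_j \leq 1$ per edge of $G$, so its intersection graph is exactly $G$ (the vertices $x_i$ and $x_j$ share a constraint iff $\{i,j\} \in E(G)$). Hence by Definition \ref{treewidthofset}, $tw(\mathrm{STAB}(G)) \leq tw(G)$.

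For the reverse inequality, I would show the stronger structural statement that in \emph{any} boolean formulation $\mathrm{STAB}(G) = \{x \in \{0,1\}^n : \phi_i(x) = 0,\ i \in [m]\}$, the graph $G$ is a subgraph of the intersection graph $H$ of that formulation; since treewidth is monotone under taking subgraphs, this will give $tw(G) \leq tw(H)$, and then minimizing over formulations yields $tw(G) \leq tw(\mathrm{STAB}(G))$.

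So the key lemma to prove is: for every edge $\{u,v\} \in E(G)$, at least one constraint $\phi_k$ must involve both variables $x_u$ and $x_v$. The argument is by contradiction. Consider the point $x^\star = e_u + e_v \in \{0,1\}^n$. Since $\{u,v\}$ is an edge of $G$, the point $x^\star$ is not a stable set, so $x^\star \notin \mathrm{STAB}(G)$ and there exists some $k$ with $\phi_k(x^\star) = 1$. Assume for contradiction that no constraint uses both $x_u$ and $x_v$, so $\phi_k$ involves at most one of them. If $\phi_k$ involves neither, then $\phi_k(x^\star) = \phi_k(0)$; but $0 \in \mathrm{STAB}(G)$, which forces $\phi_k(0) = 0$, a contradiction. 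If $\phi_k$ involves exactly one of them, say $x_u$, then $\phi_k(x^\star) = \phi_k(e_u)$ since the value at the $v$-coordinate is irrelevant to $\phi_k$; but the singleton $\{u\}$ is always stable, so $e_u \in \mathrm{STAB}(G)$ and hence $\phi_k(e_u) = 0$, again a contradiction. The symmetric case with $v$ in place of $u$ is identical.

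This establishes the key lemma, and combined with the subgraph-monotonicity of treewidth and the first direction, gives $tw(G) = tw(\mathrm{STAB}(G))$. I do not anticipate a real obstacle here: the only subtle point is realizing that one should avoid a direct chordal-completion argument and instead exploit the membership of simple ``witness'' vectors ($0$, $e_u$, $e_v$, $e_u + e_v$) to force every edge to appear in the intersection graph; this is cleanly handled by the two-case split above.
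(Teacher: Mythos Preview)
Your proposal is correct and follows essentially the same route as the paper: both directions are argued identically, and for the nontrivial inequality you prove the same structural fact---that $G$ is a subgraph of the intersection graph of \emph{any} formulation---using the same witness vectors $e_u$, $e_v$, and $e_u+e_v$ (the paper phrases the contradiction globally over all $\phi_i$ rather than isolating one violated constraint, but the logic is the same).
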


\begin{proof}
  Note that $tw(G) \geq tw(\text{STAB}(G))$, since the formulation
  given in \eqref{eq:stab} has as intersection graph $G$ itself. For
  the $\leq$ inequality, we prove that the intersection graph of
  \emph{any} formulation of $\text{STAB}(G)$ has $G$ as subgraph. For
  contradiction, suppose there exist $\phi_i$ with  $i \in [m]$ such
  that
\[ \text{STAB}(G) = \{x\in \{0,1\}^n \ |\ \phi_i(x) = 1, \ i \in [m]\}.
\]
and that $G$ is not a subgraph of $\Gamma[\text{STAB}(G)_\phi] = (V,E')$. This implies that there must exist $\{k,l \} \in E $ such that $\{k,l \} \not\in E'$. Defining
\[I_j \doteq \{i \, : \, x_j\in  \text{supp}(\phi_i) \} \]
with $\text{supp}(\phi_i)$ being the set of variables that appear explicitly
in $\phi$, we obtain $I_k \cap I_l = \emptyset$. On the other hand,
both $e_k$ and $e_l$---the $k$-th and $l$-th canonical vectors---are indicator vectors of valid stable sets, thus
\[\phi_i(e_k) = 1 \wedge \phi_i(e_l) = 1 \quad \forall\, i.\]

We conclude the proof by noting that, since $e_k$ and $e_l$ only differ in the $k$-th and $l$-th coordinates, and $I_k \cap I_l = \emptyset$, 
\[\phi_i(e_k + e_l) = 1 \quad \forall\, i.\]

This is not possible, since both $k$ and $l$ cannot be part of a stable set of $G$ simultaneously.

 \end{proof}

Slightly abusing notation, we now define a $(\cdot)^+$ operator for
graphs (which is based on the $(\cdot)^+$ operator for polytopes) that
will justify why we can use the same procedure as in Theorem
\ref{theo:general} within the stable set family.

\begin{Def}
Let $G=(V,E)$ be a graph on $n$ vertices with $V=[n]$. We define $G^+$ as
\begin{enumerate}[(i)]
    \item $V(G^+) = [n+1]$.
    \item $E(G^+) = E(G) \cup \{(i,n+1) \, :\, \forall\, i \in [n]\}$
\end{enumerate}
\end{Def}

We are ready to formulate the following key lemma:

\begin{lemma} \label{stab-pyramid}
Let $G$ be a graph on $n$ vertices, and define $G^+$ as before. Then $conv(\text{STAB}(G^+))$
is a pyramid with base $conv(\text{STAB}(G))$ and apex $e_{n+1}$. Moreover,
\[ \text{STAB}(G^+) = \text{STAB}(G)^+.\]
\end{lemma}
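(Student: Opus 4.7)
The plan is to first establish the set-theoretic identity $\text{STAB}(G^+) = \text{STAB}(G)^+$ directly from the definitions, and then derive the pyramid structure as an immediate corollary of Lemma \ref{general-pyramid}(a), so no new pyramid argument is needed.

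For the identity, I would take an arbitrary $(x, x_{n+1}) \in \{0,1\}^{n+1}$ and proceed by cases on $x_{n+1}$. By construction, $n+1$ is adjacent in $G^+$ to every vertex $i \in [n]$, so the defining inequalities of $\text{STAB}(G^+)$ include $x_i + x_{n+1} \leq 1$ for all $i \in [n]$, together with the original edge inequalities $x_i + x_j \leq 1$ for $\{i,j\} \in E(G)$. If $x_{n+1} = 1$, the former inequalities force $x_i = 0$ for every $i \in [n]$, giving the single point $e_{n+1}$. If $x_{n+1} = 0$, the new inequalities are vacuous and the remaining constraints are exactly those defining $\text{STAB}(G)$, so the feasible points are precisely $\{(x,0) : x \in \text{STAB}(G)\}$. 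Hence
\[
\text{STAB}(G^+) = \{(x,0) : x \in \text{STAB}(G)\} \cup \{e_{n+1}\}.
\]

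Next, applying Definition \ref{def:plusoperator} with $S = \text{STAB}(G)$, the same case split on $x_{n+1}$ gives $x \in (1-x_{n+1})S$, which for $x_{n+1} = 0$ yields $x \in S$ and for $x_{n+1} = 1$ yields $x \in 0 \cdot S$. Since $0 \in \text{STAB}(G)$ so that $S$ is nonempty, the convention $0 \cdot S = \{0\}$ applies, and $S^+$ reduces to the same union as above. This proves $\text{STAB}(G^+) = \text{STAB}(G)^+$.

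The pyramid claim then follows immediately: since $\text{STAB}(G^+) = \text{STAB}(G)^+$, Lemma \ref{general-pyramid}(a) gives that $\conv(\text{STAB}(G^+))$ is a pyramid with base $\conv(\text{STAB}(G)) \times \{0\}$ (identified with $\conv(\text{STAB}(G))$ in the statement) and apex $e_{n+1}$. There is no real obstacle in the argument; the only point requiring mild care is the handling of the degenerate factor $0 \cdot S$, which is fine because $0$ is always a stable set, so $S$ is nonempty and the convention from the paper applies.
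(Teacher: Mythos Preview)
Your proof is correct and follows essentially the same approach as the paper: both arguments observe that the stable sets of $G^+$ are exactly the stable sets of $G$ together with the singleton $\{n+1\}$, which matches the definition of $\text{STAB}(G)^+$, and the pyramid claim then follows from Lemma~\ref{general-pyramid}(a). Your version is simply more explicit about the case analysis on $x_{n+1}$ and the handling of $0\cdot S$, whereas the paper states the correspondence in one sentence.
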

\begin{proof}
  This follows directly since the stable sets of $G^+$ are either
  stable sets of $G$ or $\{n+1\}$ by construction. This is in
  correspondence to the definition of $\text{STAB}(G)^+$.
 \end{proof}

Using the lemma from above, we can retrace the proof of Theorem
\ref{theo:general} using as a starting point a family \(\{S_n\}_{n \in
\mathbb{N}}\) with $S_n = \text{STAB}(G_n)$ for some graph $G_n$ over $n$
nodes with \(n \in \mathbb{N}\). Note 
that, in addition, if we consider $G_1, G_2$ copies of a graph $G$ we
have
\[\text{STAB}(G_1 \cup G_2) = \text{STAB}(G_1) \times \text{STAB}(G_2).\]

All in all we obtain the following corollary:

\begin{cor}
  Given any sequence $\{w_n\}_{n=1}^\infty$ of natural numbers such that $w_n \leq n-1$
  for all \(n\), there exists a family of connected graphs
  $\{G'_n\}_{n=1}^\infty$ such that $tw(G_n') \leq w_n$ and
\[xc_{SDP} (\text{STAB}(G'_n)) \in \Omega \left(\frac{n}{w_n + 1} 2^{\Omega(w_n^{1/13})} \right).\]
\end{cor}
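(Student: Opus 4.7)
The plan is to replay the construction of Theorem~\ref{theo:general} and Main Theorem~\ref{thr:counting}, but restricting every operation to live inside the stable set family, using the graph operation $G \mapsto G^+$ as the combinatorial analogue of the polytope operation $S \mapsto S^+$. The correspondence established in Lemma~\ref{stab-pyramid}, namely $\text{STAB}(G^+) = \text{STAB}(G)^+$ together with disjoint graph unions yielding Cartesian products of stable set polytopes, is precisely what makes this translation possible.

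As a seed, I would invoke a known stretched-exponential SDP extension complexity lower bound for stable set polytopes: a family of graphs $\{G_n\}_{n\in\mathbb{N}}$, with $G_n$ on $n$ vertices, satisfying $xc_{SDP}(\text{STAB}(G_n)) \in 2^{\Omega(n^{1/13})}$ (arising from the known LP/SDP bounds on independent set polytopes combined with Lee--Raghavendra--Steurer style lifts). Given the target treewidth sequence $\{w_n\}$, set $m := w_n - 1$ so that $tw(G_m) \leq m - 1 = w_n - 2$, and form $G_m^+$ with $m+1 = w_n$ vertices. By Lemma~\ref{stab-pyramid}, $\text{STAB}(G_m^+) = \text{STAB}(G_m)^+$ is a pyramid with base $\text{STAB}(G_m)$.

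Next, take $k := \lfloor (n-1)/w_n \rfloor$ disjoint copies of $G_m^+$ to form the (disconnected) graph $H_k$. Because disjoint unions give Cartesian products of stable set polytopes, $\text{STAB}(H_k) = \text{STAB}(G_m)^{\times k}$ in the notation of Lemma~\ref{general-products}, so that
\[
xc_{SDP}(\text{STAB}(H_k)) \geq k \cdot xc_{SDP}(\text{STAB}(G_m)) \in \Omega\!\left(\tfrac{n}{w_n+1} \cdot 2^{\Omega(w_n^{1/13})}\right),
\]
using $(w_n-1)^{1/13} = \Omega(w_n^{1/13})$. To repair connectedness without giving up the pyramid structure, apply the graph $(\cdot)^+$ one final time and set $G_n' := (H_k)^+$, which attaches a single universal vertex to $H_k$. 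By Lemma~\ref{stab-pyramid}, $\text{STAB}(G_n') = \text{STAB}(H_k)^+$ is again a pyramid with base $\text{STAB}(H_k)$, so its semidefinite extension complexity inherits the lower bound just derived. The graph $G_n'$ is connected through the new universal vertex, and has $k w_n + 1 \leq n$ vertices.

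The one delicate point is the treewidth bookkeeping, since each application of $(\cdot)^+$ to a graph adds a universal vertex and therefore costs one unit of treewidth. Chaining the bounds gives $tw(G_m^+) \leq tw(G_m)+1 \leq w_n - 1$, disjoint unions do not increase treewidth so $tw(H_k) \leq w_n - 1$, and the final operation yields $tw(G_n') \leq w_n$. This is exactly why I start with a seed of size $w_n - 1$ rather than $w_n$: it provides the two units of treewidth budget needed to absorb both $(\cdot)^+$ operations (one to create a pyramid, one to glue the copies into a connected graph) while leaving the exponent $w_n^{1/13}$ in the extension complexity lower bound asymptotically unchanged.
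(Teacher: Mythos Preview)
Your proposal is correct and follows essentially the same approach as the paper: seed with the Lee--Raghavendra--Steurer stable set graphs of high SDP extension complexity, use the graph operation $G \mapsto G^+$ (via Lemma~\ref{stab-pyramid}) as the analogue of the polytope operation $S \mapsto S^+$, take disjoint unions to realize the Cartesian product step of Lemma~\ref{general-products}, and apply a final $(\cdot)^+$ to restore connectedness exactly as in \eqref{lastpyramid}. Your treewidth bookkeeping---seeding at size $w_n-1$ so that the two $(\cdot)^+$ operations land at $tw(G_n')\leq w_n$ rather than $w_n+1$---is in fact more explicit than the paper's brief appeal to the proof of Theorem~\ref{theo:general}.
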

\begin{proof}
  The proof follows along the same lines as the proof of Theorem
  \ref{theo:general}. Our starting point is a result by
  \citet{lee2015lower}, where it is proven that for any $n$, there
  exists graph $G_n$ on $n$ vertices, such that
\[xc_{SDP}(STAB(G_n)) \geq 2^{\Omega(n^{1/13})}.\]
Employing Lemma \ref{stab-pyramid}, the operations $(\cdot)^+$ and
$\times$ over stable set polytopes correspond to operations performed
directly over graphs, thus the result follows; the connectedness
requirement follows from the separability argument stated before.
  \end{proof}

 We can also restrict ourselves to linear extension complexity to obtain
 sharper lower bounds. Following the exact same strategy (using
 Theorem \ref{theo:pyramids} instead of Theorem
 \ref{theo:pyramidsSDP}) and using a result of
 \citet{goos2016extension} that shows that there exist graphs $G$ on
 $n$ variables such that
\[xc(STAB(G_n)) \geq 2^{\Omega(n/\log n)} \]
we obtain the following corollary for the linear extension complexity case:

\begin{cor}\label{cor:SS}
Given any family $\{w_n\}_{n=1}^\infty$ such that $w_n \leq n-1$ for
all \(n\), there exists a family of graphs $\{G'_n\}_{n=1}^\infty$ such that $tw(G_n') \leq w_n$ and 
\[xc (\text{STAB}(G'_n)) \in \Omega \left(\frac{n}{w_n + 1} 2^{\Omega(w_n/\log w_n)} \right).\]
\end{cor}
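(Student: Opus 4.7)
The plan is to mirror the proof of the previous corollary (the semidefinite case), replacing the semidefinite extension complexity lower bound of \citet{lee2015lower} with the stronger linear extension complexity lower bound of \citet{goos2016extension} and using Theorem \ref{theo:pyramids} in place of Theorem \ref{theo:pyramidsSDP}. Specifically, for each $n$ I would start with the graph $G_{w_n}$ on $w_n$ vertices such that $xc(\text{STAB}(G_{w_n})) \geq 2^{\Omega(w_n/\log w_n)}$, and then build $G'_n$ as a suitable composition of copies of $G_{w_n}^+$.

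The construction would proceed as follows. Let $k := \lfloor (n-1)/(w_n + 1) \rfloor$ and define $G'_n$ as the disjoint union of $k$ copies of $G_{w_n}^+$, which has at most $k(w_n+1) \leq n-1 < n$ vertices. By Lemma \ref{stab-pyramid}, the stable set polytope of each copy is a pyramid with base $\text{conv}(\text{STAB}(G_{w_n}))$ and apex the canonical vector of the added universal vertex; moreover, $\text{STAB}(G'_n) = \text{STAB}(G_{w_n})^{\times k}$ in the sense of Lemma \ref{general-products}. Applying Theorem \ref{theo:pyramids} inductively (which gives exact additivity for Cartesian products with pyramids) yields
\[
xc(\text{STAB}(G'_n)) \;\geq\; k \cdot xc(\text{STAB}(G_{w_n})) \;\in\; \Omega\!\left(\frac{n}{w_n+1}\, 2^{\Omega(w_n/\log w_n)}\right).
\]

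For the treewidth bound, each copy of $G_{w_n}^+$ has treewidth at most $tw(G_{w_n}) + 1 \leq w_n$ (since adding a universal vertex raises treewidth by at most one, and $tw(G_{w_n}) \leq w_n - 1$), and taking disjoint unions does not increase treewidth. Hence $tw(G'_n) \leq w_n$, as required. The connectedness requirement can be enforced by the same separability argument used in the preceding corollary; alternatively, one may simply identify the apex vertices across all $k$ copies, which preserves the pyramid structure of the resulting polytope (by a further application of Lemma \ref{stab-pyramid}) and therefore does not affect the extension complexity bound obtained via Theorem \ref{theo:pyramids}.

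The main potential obstacle is verifying that the composition operation at the polytope level indeed corresponds to a graph-theoretic operation yielding a \emph{stable set polytope}, so that the lower bound transfers cleanly; this is exactly the content of Lemma \ref{stab-pyramid}, which states $\text{STAB}(G^+) = \text{STAB}(G)^+$, and of the observation that $\text{STAB}(G_1 \cup G_2) = \text{STAB}(G_1) \times \text{STAB}(G_2)$ for disjoint unions. Once these identifications are in place, the bookkeeping on the vertex count and the treewidth is essentially the same as in Theorem \ref{theo:general}, and the linear extension complexity lower bound follows by plugging in the \citeauthor{goos2016extension} bound.
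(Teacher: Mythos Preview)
Your main argument is correct and follows exactly the approach the paper sketches: swap the Lee--Raghavendra--Steurer bound for the G\"o\"os et al.\ bound and use Theorem~\ref{theo:pyramids} instead of Theorem~\ref{theo:pyramidsSDP}, building $G'_n$ as the disjoint union of $k=\lfloor (n-1)/(w_n+1)\rfloor$ copies of $G_{w_n}^+$. One small caveat: your parenthetical alternative of identifying all apex vertices into a single vertex does not work as written, because the resulting polytope is a pyramid over $\text{STAB}(G_{w_n})^k$, and since $\text{STAB}(G_{w_n})$ itself need not be a pyramid, Theorem~\ref{theo:pyramids} no longer gives additivity across the $k$ factors of the base; fortunately the corollary does not ask for connectedness, so this alternative is unnecessary.
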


\subsection{Average Extension Complexity of Stable Set Polytopes}

In Theorem \ref{thr:counting} (resp. Corollary \ref{cor:SS}) it is shown
that the upper bounds on the extension complexity in terms of
treewidth discussed in the introduction are essentially tight when we
consider $0/1$ polytopes (resp. stable set polytopes), i.e., there
exist polytopes that almost satisfy the bound. It is a natural
question to ask whether this fact holds with high probability: if we
sample a ``random'' $0/1$ (resp. stable set) polytope, is its
extension complexity close to the bound given by Theorem
\ref{thr:counting} (resp. Corollary \ref{cor:SS}) with high probability?

We show that the answer to the previous question is negative for
stable set polytopes in the classical Erd\"os-Renyi random graph
model. Here a graph $G(n,p)$ is the outcome of a random process that
starts from a graph on $n$ nodes without any edges and then adds each
potential edge independently with probability $p$. For a graph $G$, we denote by
$\alpha(G)$ the maximum size of its stable set. The average extension
complexity of stable set polytope has been studied in
\cite{BrFiPo}. However, we will only need the following basic fact,
that can be found in e.g. \cite[Lemma 11.2.1]{Diestel}.

\begin{lemma}\label{lem:diestel}
Let $n,r\geq 2$ and $G=G(n,p)$. Then $\mathbb{P}(\alpha(G)\geq r) \leq (ne^{-p(r-1)/2)})^r$.
\end{lemma}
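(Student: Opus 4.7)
The plan is to bound $\mathbb{P}(\alpha(G)\geq r)$ by a standard first-moment / union-bound argument: enumerate all candidate independent sets of size $r$, upper-bound the probability each one is independent, and sum.

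First I would observe that $\alpha(G)\geq r$ if and only if there exists some $r$-subset $S\subseteq V(G)$ with no edges among its vertices. Applying the union bound over all such $r$-subsets yields
\[
\mathbb{P}(\alpha(G)\geq r) \;\leq\; \sum_{S \in \binom{[n]}{r}} \mathbb{P}(S \text{ is independent in } G(n,p)).
\]
For a fixed $S$ of size $r$, independence requires each of the $\binom{r}{2}$ potential edges within $S$ to be absent, an event of probability $(1-p)^{\binom{r}{2}}$ by independence of the edge draws in the Erd\"os-Renyi model. Hence
\[
\mathbb{P}(\alpha(G)\geq r) \;\leq\; \binom{n}{r}(1-p)^{\binom{r}{2}}.
\]

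Next I would simplify using two elementary inequalities: $\binom{n}{r}\leq n^r$ (drop the $r!$ in the denominator, which only helps) and $1-p \leq e^{-p}$. These give
\[
\mathbb{P}(\alpha(G)\geq r) \;\leq\; n^r \cdot e^{-p\binom{r}{2}} \;=\; n^r \cdot e^{-p r(r-1)/2} \;=\; \bigl(n\,e^{-p(r-1)/2}\bigr)^r,
\]
which is exactly the claimed bound.

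There is no real obstacle here: this is a textbook first-moment computation, and the only ``choice'' is to use the loose bound $\binom{n}{r}\leq n^r$ rather than $n^r/r!$ so that the final expression factors cleanly as an $r$-th power. The hypothesis $n,r\geq 2$ is not used in the derivation itself; it is present only to ensure that the quantities $\binom{n}{r}$ and $\binom{r}{2}$ are meaningful and that the lemma is being applied in a regime where the bound is informative.
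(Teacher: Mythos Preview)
Your argument is correct and is exactly the standard first-moment/union-bound proof; the paper does not give its own proof but simply cites \cite[Lemma 11.2.1]{Diestel}, which is precisely this computation.
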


For $\alpha(G)\geq 2$, standard enumeration arguments imply that
$xc(\text{STAB}(G))\leq n^{\alpha(G)}$. Hence, for $r \gg \log n/p$,
using Lemma \ref{lem:diestel} one has
\[\begin{array}{lll}\mathbb{P}(xc(\text{STAB} (G))\geq 2^{r \log n})=\mathbb{P}(xc(\text{STAB} (G))\geq n^{r})& \leq & \mathbb{P}(\alpha(G)\geq r) \\ & \leq & n^re^{-pr(r-1)/2}\\& = & 2^{r \log n}e^{-pr(r-1)/2}\xrightarrow[n\rightarrow +\infty]{}0.\end{array}\]
In the regime $p \gg c(n)\cdot \frac{\log^2 n}{n}$ with
$c(n)=\Omega(1)$, we have therefore
\[\lim_{n\rightarrow \infty}{\mathbb{P}(xc(\text{STAB} (G))\geq 2^{\frac{n}{c(n)\log n}})}=0.\]
On the other hand, random graphs in the same regime of $p$ have linear
treewidth with high probability, as shown in \cite[Theorem 2]{TwRandom}.

\begin{theorem}[\citet{TwRandom}]\label{thr:random-graphs}
  Let $p$ be as above and $G=G(n,p)$. Then
  \[\lim_{n\rightarrow +\infty} \mathbb{P}(tw(G)\geq (1-t)n)=1\] 
  for every constant $0<t<1$.
\end{theorem}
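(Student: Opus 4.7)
The plan is to show that $G(n,p)$ contains no small balanced vertex separator, using the Robertson–Seymour characterization of treewidth: if $tw(G) \leq w$, then $G$ admits $S \subseteq V$ with $|S| \leq w+1$ such that every connected component of $G - S$ has at most $\tfrac{2n}{3}$ vertices. Contrapositively, to prove $tw(G) \geq k$ it suffices (in a suitable weighted form) to show that for every $S$ with $|S| \leq k$, some component of $G - S$ is ``heavy''.

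First, fixing $S$ with $|S| \leq (1-t)n - 1$, I would bound the probability of the bad event that $V \setminus S$ splits as $A \sqcup B$ with $|A|, |B| \geq \tfrac{1}{3}|V \setminus S|$ and no $G$-edges between $A$ and $B$. By independence of edges, for a single triple $(S, A, B)$ this event has probability at most
\[(1-p)^{|A||B|} \leq \exp\bigl(-p(n - |S|)^2/9\bigr) \leq \exp\bigl(-pt^2n^2/9\bigr).\]

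Second, I would take a union bound over all admissible triples $(S, A, B)$, of which there are at most $\sum_{s \leq (1-t)n} \binom{n}{s}\,2^{n-s} \leq 3^n$. The failure probability is bounded by $\exp(n \log 3 - pt^2 n^2/9)$, which tends to $0$ whenever $pn \to \infty$. In the regime $p \gg c(n)\log^2 n/n$ with $c(n) = \Omega(1)$, one has $pn \gg c(n)\log^2 n \to \infty$, so the union bound succeeds.

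The main obstacle is pushing the bound to $(1-t)n$ for arbitrary $t \in (0,1)$. The naive application of the uniform-weight separator theorem can only certify $tw(G) > n/3$, because for $|S| > n/3$ the requirement ``some component of $G - S$ has $> 2n/3$ vertices'' becomes vacuous. To reach the sharper bound for small $t$ one must apply the weighted separator theorem to a carefully designed witness set $W \subseteq V$ of size proportional to the target (or a random subset, analyzed uniformly in $S$ via a second, joint union bound over $(W, S, A, B)$); equivalently, one constructs a haven or bramble of order $(1-t)n + 1$ directly from the strong connectivity of $G(n,p)$ in the assumed regime. This is the combinatorial heart of the argument in \cite{TwRandom}; once it is in place, the probability calculation above remains the engine that drives the conclusion.
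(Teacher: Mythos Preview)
The paper does not prove this theorem: it is quoted as \cite[Theorem~2]{TwRandom} and used as a black box, with no argument supplied. There is therefore nothing in the paper to compare your proposal against.

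As a standalone attempt, your sketch is incomplete by your own admission. The union-bound calculation you carry out is correct and shows that with high probability no set $S$ of size at most $(1-t)n$ disconnects $G-S$ into two pieces each of size at least $\tfrac{1}{3}|V\setminus S|$. But, as you note, the uniform-weight separator lemma only converts this into $tw(G) > cn$ for some constant $c \leq 1/3$; once $|S| \geq n/3$ the condition ``some component of $G-S$ has more than $2n/3$ vertices'' is vacuously false, and the contrapositive yields nothing. Reaching $(1-t)n$ for \emph{arbitrary} $t\in(0,1)$ is precisely the content of the theorem, and you defer that step entirely to ``the combinatorial heart of the argument in \cite{TwRandom}''. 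What remains in your write-up is a plausibility argument plus a pointer back to the reference the paper already cites.

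If you want to close the gap yourself, the cleanest route is the weighted separator lemma: show that with high probability, for every $S$ with $|S|\leq (1-t)n$ and every $W\subseteq V$ with $|W|$ of order $tn$, some component of $G-S$ contains more than $|W|/2$ points of $W$. The event to rule out is a bipartition $V\setminus S = A \sqcup B$ with $|A\cap W|,|B\cap W|\leq |W|/2$ and no edges between $A$ and $B$; since both sides must then have size at least $|W|/2 = \Theta(tn)$, the edge-absence probability is still $\exp(-\Omega(pt^2n^2))$, and a union bound over the at most $4^n$ choices of $(S,W,A,B)$ succeeds in the regime $pn\to\infty$. Alternatively, build a bramble of order $(1-t)n+1$ directly. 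Either way, this is the actual proof and cannot be left as a remark.
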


Therefore, for any $p$ in this regime, the corresponding bound for
$xc(\text{STAB} (G))$ given by Corollary \ref{cor:SS} is of the order
$\Omega(2^{\frac{n}{\log n}})$ with high probability. This means that
stable set polytopes like the ones constructed in Corollary
\ref{cor:SS} happen with very low probability, and thus the
corresponding treewidth-based upper bound is loose with high
probability.

Note that this also prevents us from using counting arguments (similar
to \cite{rothvoss2013some,briet2013existence}) to establish high
extension complexity as replacement for the construction used to establish Corollary
\ref{cor:SS} for stable sets.

\subsection{Lower bounds for a fixed intersection graph family}
Theorem \ref{thr:counting} holds for any valid ``target'' treewidth, and we assume complete freedom on the set that we can construct, as long
as its treewidth meets the target. A natural question is whether 
the same can be said for an arbitrary family of intersection graphs. In this section
we prove that a similar result can be obtained even if we fix the graph
family, and require the constructed sets to have the fixed family as intersection graph.
Since we are in a much more restricted setting, the result is weaker than Theorem \ref{thr:counting}, but it remains exponential in the treewidth parameter.

\subsubsection{Planar 2-SAT polytope with exponential extension complexity}

In \citet{avis2015generalization} it is shown that:

\begin{theorem}[\citet{avis2015generalization}]
  For every $n$ there exists a $2$-$SAT$ formula $\phi$ in $n$
  variables such that the satisfiability polytope of $\phi$ has
  extension complexity at least $2^{\Omega(\sqrt[4]{n})}$. Moreover,
  one can assume the graph induced by the $2$-SAT formula is planar.
\end{theorem}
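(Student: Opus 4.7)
The plan is to obtain the required formulas by planarizing a family of 2-SAT formulas whose satisfiability polytopes already have extension complexity exponential in the number of original variables; the fourth root in the exponent will appear because planarization blows up the variable count to roughly the fourth power of the original count.

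First I would fix a family $\{G_m\}_{m \in \mathbb{N}}$ of graphs on $m$ vertices for which $\text{STAB}(G_m)$ has extension complexity $2^{\Omega(m)}$; such families exist, e.g.\ by the results of Göös or of Fiorini--Massar--Pokutta--Tiwary--de Wolf. Then $\text{STAB}(G_m)$ is exactly the satisfiability polytope of the 2-SAT formula
\[ \phi_m \ := \ \bigwedge_{\{i,j\} \in E(G_m)} (\bar x_i \vee \bar x_j), \]
whose induced graph coincides with $G_m$ and so is typically non-planar.

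Next I would planarize $\phi_m$ via equality gadgets. Fix any drawing of $G_m$ in the plane with at most $O(m^4)$ pairwise edge crossings. Along each edge $\{i,j\}$ that participates in crossings, subdivide it by replacing the single clause $(\bar x_i \vee \bar x_j)$ with
\[ (x_i \leftrightarrow u_1) \wedge (u_1 \leftrightarrow u_2) \wedge \cdots \wedge (u_{k-1} \leftrightarrow u_k) \wedge (\bar u_k \vee \bar x_j), \]
where $u_1, \ldots, u_k$ are new variables (one per crossing encountered along the edge) and each biconditional is realized as two 2-SAT clauses. The induced graph of the gadget is a path from $i$ through $u_1, \ldots, u_k$ to $j$, which can be re-routed along a thin tube following the original edge so as to avoid each crossed edge by passing a different $u_\ell$ on either side. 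After performing this at every crossing, the resulting 2-SAT formula $\tilde\phi_m$ has planar induced graph and uses $N = m + O(m^4) = O(m^4)$ variables.

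Finally, the equality clauses force $x_i = u_1 = \cdots = u_k$ in every satisfying assignment, so the coordinate projection of the satisfiability polytope of $\tilde\phi_m$ onto the original $x$-variables equals $\text{STAB}(G_m)$. Since extension complexity does not increase under coordinate projection, this gives
\[ xc\bigl(\text{SAT}(\tilde\phi_m)\bigr) \ \geq \ xc(\text{STAB}(G_m)) \ \in \ 2^{\Omega(m)} \ = \ 2^{\Omega(N^{1/4})}, \]
and renaming $N$ as $n$ yields the theorem. The main delicate step will be checking that the equality-gadget substitution is genuinely local, preserves the original satisfying set exactly under projection, and admits a planar drawing that resolves the chosen crossing without creating new ones; the crossing-number bound and the monotonicity of $xc$ under projection are then routine.
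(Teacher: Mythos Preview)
Your high-level route---start from a stable-set instance with high extension complexity, encode it as a monotone 2-SAT formula, planarize, and use that projection does not increase $xc$---is exactly the route the paper sketches (deferring the planarization to \cite{avis2015extension}). The paper's arithmetic is slightly different: it begins from the $2^{\Omega(\sqrt{m})}$ stable-set bound of \cite{fiorini2012linear} and uses a quadratic blowup in the planarization, whereas you begin from a linear-exponent bound and allow a quartic blowup; either way one lands at $2^{\Omega(n^{1/4})}$. (Minor point: G\"o\"os gives $2^{\Omega(m/\log m)}$, not $2^{\Omega(m)}$, but this does not affect the $n^{1/4}$ conclusion.)

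The genuine gap is the planarization mechanism. Replacing an edge $\{i,j\}$ by an equality chain $x_i\leftrightarrow u_1\leftrightarrow\cdots\leftrightarrow u_k$ and $(\bar u_k\vee\bar x_j)$ only \emph{subdivides} the edge in the intersection graph; it does not remove any crossing. If the original edge crossed $\{p,q\}$, then the subdivided path still separates $p$ from $q$ in any drawing that ``follows the original edge'': consecutive $u_\ell,u_{\ell+1}$ are themselves joined by an edge, so the edge $\{p,q\}$ cannot ``pass between'' them without crossing that edge. Topologically, subdivision is a homeomorphism and cannot turn a non-planar graph into a planar one. What is actually required---and what \cite{avis2015extension} supplies for stable set---is a genuine \emph{crossover gadget}: a constant-size planar fragment inserted at each crossing point that carries two independent signals through one another while keeping the original satisfying set as a coordinate projection. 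Once you swap your subdivision step for such a gadget (and check the projection property for it), the rest of your argument is correct.
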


Here, the satisfiability polytope is simply the convex hull of the 0/1
points that satisfy the boolean formula $\phi$. The result follows
from a stable set instance with linear extension complexity
$2^{\Omega(\sqrt{n})}$ from \cite{fiorini2012linear}, which is then used in
\cite{avis2015extension} to obtain a stable set instance on a
\emph{planar} graph with extension complexity
$2^{\Omega(\sqrt[4]{n})}$. The latter can be cast as the feasible set
of a 2-SAT formula derived from the underlying graph. 

One can
follow the same strategy along with the results regarding \emph{semidefinite} extension complexity of \citet{lee2015lower} to see that
there exist a family of 2-SAT formulas $\{\phi_n\}_{n=1}^\infty$ on
$n$ variables defined over \emph{planar} graphs, such that their
respective satisfiability polytopes have \emph{semidefinite} extension
complexity at least
\begin{equation} \label{lbstabpsd}
2^{\Omega(n^{1/26})}. 
\end{equation}

Using this observation, we can follow a similar strategy as in Section \ref{section2SATreduction} to prove the following lower bound.

\subsubsection{Lower bound result}

\begin{theorem} \label{arbitrarygraphs}
Let $\{G_k\}_{k=1}^{\infty}$ be an arbitrary family of graphs indexed by treewidth. There exists a sequence $\{(S_{n_k}, G_{n_k})\}_{k = 1}^\infty$, where
\begin{enumerate}[(i)]
\item $S_{n_k}$ is a 0/1 set.
\item $\{G_{n_k}\}_{k=1}^{\infty}$ is a subsequence of $\{G_k\}_{k=1}^{\infty}$.
\item $S_{n_k}$ admits a formulation with $G_{n_k}$ as intersection graph, and thus $tw(S_{n_k}) \leq n_k$.
\item $xc_{SDP}(S_{n_k}) \geq 2^{poly(n_k^{1/c})}$, for a universal
  constant $c$.
\end{enumerate}
\end{theorem}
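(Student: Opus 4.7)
The plan is to adapt the approach of Main Theorem \ref{main-theo} to the extension-complexity setting, replacing MAX-2SAT with a family of planar 2-SAT formulas whose satisfiability polytopes are known to have high semidefinite extension complexity. The overall scheme is identical in spirit: hardness is ``teleported'' from a planar seed instance into the target graph family via graph-minor reductions; what changes is how we track the hardness, namely through projection of convex hulls rather than through feasibility rounding.

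First, I take the family $\{\phi_m\}_{m=1}^\infty$ of planar 2-SAT formulas on $m$ variables whose satisfiability polytopes satisfy $xc_{SDP} \geq 2^{\Omega(m^{1/26})}$, as recorded in \eqref{lbstabpsd}, and denote by $H_m$ their (planar) intersection graphs. By Corollary \ref{main-cor}, there is a fixed polynomial $\kappa$ (of degree roughly $98$ up to polylog factors, inherited from Theorem \ref{chekuri}) such that $H_m$ is a minor of every graph of treewidth at least $\kappa(m)$. Since $\{G_k\}_{k=1}^\infty$ is indexed by treewidth and $\kappa$ is polynomial, I extract a strictly increasing sequence of treewidths $n_1 < n_2 < \ldots$ together with $m_1 < m_2 < \ldots$ satisfying $n_j \geq \kappa(m_j)$, so that $H_{m_j}$ is a minor of $G_{n_j}$ for each $j$.

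Next, I encode $\phi_{m_j}$ as a binary instance $\mathcal{I}_{m_j}$ whose intersection graph is exactly $H_{m_j}$: each 2-SAT clause on variables $x_a, x_b$ becomes a boolean (equivalently, linear) constraint on that pair, with no auxiliary variables needed. I then apply iteratively the vertex-deletion, edge-deletion and edge-contraction reductions of Section \ref{minor-operations}, slightly modified so that newly introduced coordinates are binary rather than continuous, to transform $\mathcal{I}_{m_j}$ into an equivalent 0/1 instance with intersection graph exactly $G_{n_j}$; call $S_{n_j}$ its feasible set.

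Finally, to transfer the extension-complexity lower bound, I note that the three reductions in Section \ref{minor-operations} respectively add an unconstrained binary coordinate, a redundant inequality, or a coordinate forced (via an equation $z_u = z_v$) to equal an existing variable. Hence the projection $\pi$ of $S_{n_j}$ onto the original variables of $\phi_{m_j}$ is precisely the satisfying-assignment set of $\phi_{m_j}$; since $\pi(\conv(S_{n_j})) = \conv(\pi(S_{n_j}))$ and projection cannot increase semidefinite extension complexity, I get
\[
  xc_{SDP}(S_{n_j}) \;\geq\; xc_{SDP}\bigl(\conv(\mathrm{sat}(\phi_{m_j}))\bigr) \;\geq\; 2^{\Omega(m_j^{1/26})}.
\]
Combined with $n_j \leq \kappa(m_j) \in O(m_j^{98}\,\mathrm{polylog}(m_j))$, this inverts to $m_j \geq \Omega(n_j^{1/99})$, yielding $xc_{SDP}(S_{n_j}) \geq 2^{\Omega(n_j^{1/(99\cdot 26)})}$, which fits the claimed form $2^{\mathrm{poly}(n_j^{1/c})}$ for a universal constant $c$. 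The main technical step I expect to require care is the bookkeeping that shows the projection recovers the satisfiability set exactly: this needs a case analysis of the three reductions of Section \ref{minor-operations} verifying that the correspondence between feasible solutions survives at the level of $\conv$, and not merely for individual 0/1 points.
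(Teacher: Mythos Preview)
Your proposal is correct and follows essentially the same route as the paper: start from the planar 2-SAT family with $xc_{SDP} \geq 2^{\Omega(m^{1/26})}$, embed the planar intersection graph $H_m$ into $G_{\kappa(m)}$ via the minor operations of Section~\ref{minor-operations} (taken in the pure 0/1 setting), and use that the lifted set projects onto the original satisfiability set so that semidefinite extension complexity can only go up. One small slip to clean up: you assert both $n_j \geq \kappa(m_j)$ and later $n_j \leq \kappa(m_j)$; what you actually want (and what the paper does) is simply to set $n_k := \kappa(k)$, which simultaneously guarantees the minor embedding and makes the inversion $m_j \in \Omega(n_j^{1/c})$ immediate.
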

\begin{proof}
Consider $\phi_n$ a 2-SAT formula on $n$ variables over a planar graph such that
\[xc_{SDP}(S'_n) \geq 2^{\Omega(n^{1/26})}\]
where $S'_n$ is the set of binary vectors that satisfy $\phi_n$. Let
$H_n$ be the planar graph on $n$ vertices associated to $\phi_n$ and fix
$k\in \mathbb{N}$. By Corollary
\ref{main-cor}, we know there exists a polynomial $\kappa(k)$ such that
$H_k$ is a minor of $G_{\kappa(k)}$. Following the proof in Section
\ref{section2SATreduction}, we can start from a formulation of
$S_n' \subseteq\{0,1\}^k$ and obtain an equivalent formulation in a
lifted space, which has $G_{\kappa(k)}$ as intersection graph. We call
the set of feasible solutions of this lifted formulation
$S'_{\kappa(k)}$.

Since the procedure generates equivalent formulations in an extended
space, one can easily see that
\[xc_{SDP}(S'_{k}) \leq xc_{SDP}(S'_{\kappa(k)})\]
consequently,
\[xc_{SDP}(S'_{\kappa(k)}) \geq 2^{\Omega(k^{1/26})}.\]
Defining $n_k = \kappa(k)$ the result follows. The fact that $c$ is a universal
constant is justified by the fact that $\kappa$ depends only on the number of vertices of $H_k$.
 \end{proof}

The reader might have noticed that one can state the result in Theorem
\ref{arbitrarygraphs} without the need of a subsequence $n_k$, since
one can augment the sequence by defining ``intermediate'' pairs
\begin{equation}\label{patch}
(S_{\kappa(k) + 1}, G_{\kappa(k) + 1}), \ldots, (S_{\kappa(k+1) -1}, G_{\kappa(k+1)-1}) 
\end{equation}
since $H_k$ is also a minor of $G_{k'}$ for $k' > \kappa(k)$. This
would not change the exponential lower bound, since there is only a
polynomial number of sets in \eqref{patch}.

\section*{Acknowledgements}
Research reported in this paper was partially supported by NSF CAREER
award CMMI-1452463 and by the Institute for Data Valorisation (IVADO).

\bibliographystyle{spbasic}      
\bibliography{pwidth}

\begin{thebibliography}{62}
\providecommand{\natexlab}[1]{#1}
\providecommand{\url}[1]{{#1}}
\providecommand{\urlprefix}{URL }
\expandafter\ifx\csname urlstyle\endcsname\relax
  \providecommand{\doi}[1]{DOI~\discretionary{}{}{}#1}\else
  \providecommand{\doi}{DOI~\discretionary{}{}{}\begingroup
  \urlstyle{rm}\Url}\fi
\providecommand{\eprint}[2][]{\url{#2}}

\bibitem[{Aboulker et~al.(2019)Aboulker, Fiorini, Huynh, Macchia, and
  Seif}]{aboulker2019extension}
Aboulker P, Fiorini S, Huynh T, Macchia M, Seif J (2019) Extension complexity
  of the correlation polytope. Operations Research Letters 47(1):47--51

\bibitem[{Arnborg(1985)}]{arnborg}
Arnborg S (1985) {Efficient algorithms for combinatorial problems on graphs
  with bounded decomposability. A survey.} BIT Numerical Mathematics 25:2--23

\bibitem[{Arnborg and Proskurowski(1984)}]{arnborg0}
Arnborg S, Proskurowski A (1984) {Linear time algorithms for NP-hard problems
  on graphs embedded in k-trees}. Tech. Rep. {TRITA-NA-8404}, {The Royal
  Institute of Technology, Stockholm}

\bibitem[{Arnborg et~al.(1987)Arnborg, Corneil, and Proskurowski}]{acp}
Arnborg S, Corneil D, Proskurowski A (1987) Complexity of finding embeddings in
  a k-tree. SIAM Journal on Algebraic Discrete Methods 8(2):277--284

\bibitem[{Arora and Barak(2009)}]{AB}
Arora S, Barak B (2009) Computational Complexity - {A} Modern Approach.
  Cambridge University Press,
  \urlprefix\url{http://www.cambridge.org/catalogue/catalogue.asp?isbn=9780521424264}

\bibitem[{Avis and Tiwary(2015{\natexlab{a}})}]{avis2015generalization}
Avis D, Tiwary HR (2015{\natexlab{a}}) A generalization of extension complexity
  that captures p. Information Processing Letters 115(6-8):588--593

\bibitem[{Avis and Tiwary(2015{\natexlab{b}})}]{avis2015extension}
Avis D, Tiwary HR (2015{\natexlab{b}}) On the extension complexity of
  combinatorial polytopes. Mathematical Programming 153(1):95--115

\bibitem[{Bazzi et~al.(2015)Bazzi, Fiorini, Pokutta, and
  Svensson}]{bazzi2015no}
Bazzi A, Fiorini S, Pokutta S, Svensson O (2015) No small linear program
  approximates vertex cover within a factor 2--e. In: Foundations of Computer
  Science (FOCS), 2015 IEEE 56th Annual Symposium on, IEEE, pp 1123--1142

\bibitem[{Bazzi et~al.(2018)Bazzi, Fiorini, Pokutta, and
  Svensson}]{bazzi2015noJour}
Bazzi A, Fiorini S, Pokutta S, Svensson O (2018) No small linear program
  approximates vertex cover within a factor 2--e. {to appear in Mathematics of
  Operations Research}

\bibitem[{Bern et~al.(1987)Bern, Lawler, and Wong}]{Bern1987216}
Bern M, Lawler E, Wong A (1987) Linear-time computation of optimal subgraphs of
  decomposable graphs. Journal of Algorithms 8(2):216 -- 235

\bibitem[{Bienstock and Langston(1995)}]{bienstocklangston}
Bienstock D, Langston MA (1995) Chapter 8 algorithmic implications of the graph
  minor theorem. In: MO~Ball CM TL~Magnanti, Nemhauser G (eds) Network Models,
  Handbooks in Operations Research and Management Science, vol~7, Elsevier, pp
  481 -- 502

\bibitem[{Bienstock and Mu\~noz(2018)}]{treewidth}
Bienstock D, Mu\~noz G (2018) Lp formulations for polynomial optimization
  problems. SIAM Journal on Optimization 28(2):1121--1150

\bibitem[{Bienstock and {\"{O}}zbay(2004)}]{tw}
Bienstock D, {\"{O}}zbay N (2004) Tree-width and the {Sherali-Adams} operator.
  Discrete Optimization 1(1):13--21

\bibitem[{Bodlaender(1988)}]{bodlaender}
Bodlaender H (1988) Dynamic programming on graphs with bounded treewidth. In:
  Lepist\"{o} T, Salomaa A (eds) Automata, Languages and Programming, Lecture
  Notes in Computer Science, vol 317, Springer Berlin Heidelberg, pp 105--118

\bibitem[{Braun et~al.(2015)Braun, Fiorini, Pokutta, and
  Steurer}]{braun2015approximation}
Braun G, Fiorini S, Pokutta S, Steurer D (2015) Approximation limits of linear
  programs (beyond hierarchies). Mathematics of Operations Research
  40(3):756--772

\bibitem[{Braun et~al.(2016{\natexlab{a}})Braun, Fiorini, and Pokutta}]{BrFiPo}
Braun G, Fiorini S, Pokutta S (2016{\natexlab{a}}) Average case polyhedral
  complexity of the maximum stable set problem. Math Program 160(1-2):407--431,
  \doi{10.1007/s10107-016-0989-3},
  \urlprefix\url{https://doi.org/10.1007/s10107-016-0989-3}

\bibitem[{Braun et~al.(2016{\natexlab{b}})Braun, Pokutta, and
  Roy}]{BraunPokutta}
Braun G, Pokutta S, Roy A (2016{\natexlab{b}}) Strong reductions for extended
  formulations. In: International Conference on Integer Programming and
  Combinatorial Optimization, Springer, pp 350--361

\bibitem[{Bri{\"e}t et~al.(2013)Bri{\"e}t, Dadush, and
  Pokutta}]{briet2013existence}
Bri{\"e}t J, Dadush D, Pokutta S (2013) On the existence of 0/1 polytopes with
  high semidefinite extension complexity. In: European Symposium on Algorithms,
  Springer, pp 217--228

\bibitem[{Bri{\"e}t et~al.(2015)Bri{\"e}t, Dadush, and
  Pokutta}]{briet2015existence}
Bri{\"e}t J, Dadush D, Pokutta S (2015) On the existence of 0/1 polytopes with
  high semidefinite extension complexity. Mathematical Programming
  153(1):179--199

\bibitem[{Brown et~al.(1989)Brown, Fellows, and
  Langston}]{BrownFellowsLangston1989}
Brown DJ, Fellows MR, Langston MA (1989) Polynomial-time self-reducibility:
  Theoretical motivations and practical results. International journal of
  computer mathematics 31(1-2):1--9

\bibitem[{Chandrasekaran et~al.(2008)Chandrasekaran, Srebro, and
  Harsha}]{Chandrasekaran08}
Chandrasekaran V, Srebro N, Harsha P (2008) {Complexity of Inference in
  Graphical Models}. In: Proceedings of the Twenty-Fourth Conference Annual
  Conference on Uncertainty in Artificial Intelligence (UAI-08), AUAI Press,
  Corvallis, Oregon, pp 70--78

\bibitem[{Chekuri and Chuzhoy(2016)}]{chekuri2016polynomial}
Chekuri C, Chuzhoy J (2016) Polynomial bounds for the grid-minor theorem.
  Journal of the ACM (JACM) 63(5):40

\bibitem[{Chuzhoy(2016)}]{chuzhoy2016improved}
Chuzhoy J (2016) Improved bounds for the excluded grid theorem. arXiv preprint
  arXiv:160202629

\bibitem[{Cunningham and Geelen(2007)}]{cunngeel}
Cunningham WH, Geelen J (2007) On integer programming and the branch-width of
  the constraint matrix. In: International Conference on Integer Programming
  and Combinatorial Optimization, Springer, pp 158--166

\bibitem[{Dechter and Pearl(1989)}]{dechterpearl}
Dechter R, Pearl J (1989) Tree clustering for constraint networks (research
  note). Artif Intell 38:353--366

\bibitem[{Diestel(2012)}]{Diestel}
Diestel R (2012) Graph Theory, 4th Edition, Graduate texts in mathematics, vol
  173. Springer

\bibitem[{Fawzi et~al.(2015)Fawzi, Gouveia, Parrilo, Robinson, and
  Thomas}]{fawzi2015positive}
Fawzi H, Gouveia J, Parrilo PA, Robinson RZ, Thomas RR (2015) Positive
  semidefinite rank. Mathematical Programming 153(1):133--177

\bibitem[{Feige and Jozeph(2014)}]{Feige}
Feige U, Jozeph S (2014) Demand queries with preprocessing. In: Automata,
  Languages, and Programming - 41st International Colloquium, {ICALP} 2014,
  Copenhagen, Denmark, July 8-11, 2014, Proceedings, Part {I}, pp 477--488,
  \doi{10.1007/978-3-662-43948-7_40},
  \urlprefix\url{https://doi.org/10.1007/978-3-662-43948-7_40}

\bibitem[{Fiorini et~al.(2012)Fiorini, Massar, Pokutta, Tiwary, and
  De~Wolf}]{fiorini2012linear}
Fiorini S, Massar S, Pokutta S, Tiwary HR, De~Wolf R (2012) Linear vs.
  semidefinite extended formulations: exponential separation and strong lower
  bounds. In: Proceedings of the forty-fourth annual ACM symposium on Theory of
  computing, ACM, pp 95--106

\bibitem[{Fiorini et~al.(2015)Fiorini, Massar, Pokutta, Tiwary, and
  Wolf}]{fiorini2015exponential}
Fiorini S, Massar S, Pokutta S, Tiwary HR, Wolf RD (2015) Exponential lower
  bounds for polytopes in combinatorial optimization. Journal of the ACM (JACM)
  62(2):17

\bibitem[{Freuder(1985)}]{freuder}
Freuder EC (1985) A sufficient condition for backtrack-bounded search. J ACM
  32(4):755--761, \doi{10.1145/4221.4225},
  \urlprefix\url{http://doi.acm.org/10.1145/4221.4225}

\bibitem[{Fulkerson and Gross(1965)}]{fulkersongross}
Fulkerson DR, Gross O (1965) Incidence matrices and interval graphs. Pacific
  Journal of Mathematics 15:835--855

\bibitem[{Gajarsk{\`y} et~al.(2017)Gajarsk{\`y}, Hlin{\v{e}}n{\`y}, and
  Tiwary}]{gajarsky2017parameterized}
Gajarsk{\`y} J, Hlin{\v{e}}n{\`y} P, Tiwary HR (2017) Parameterized extension
  complexity of independent set and related problems. Discrete Applied
  Mathematics

\bibitem[{G{\"o}{\"o}s et~al.(2016)G{\"o}{\"o}s, Jain, and
  Watson}]{goos2016extension}
G{\"o}{\"o}s M, Jain R, Watson T (2016) Extension complexity of independent set
  polytopes. In: Foundations of Computer Science (FOCS), 2016 IEEE 57th Annual
  Symposium on, IEEE, pp 565--572

\bibitem[{Gouveia et~al.(2013)Gouveia, Parrilo, and Thomas}]{gouveia2013lifts}
Gouveia J, Parrilo PA, Thomas RR (2013) Lifts of convex sets and cone
  factorizations. Mathematics of Operations Research 38(2):248--264

\bibitem[{Grohe(2007)}]{grohe2007complexity}
Grohe M (2007) The complexity of homomorphism and constraint satisfaction
  problems seen from the other side. Journal of the ACM (JACM) 54(1):1

\bibitem[{Guibas et~al.(1991)Guibas, Hershberger, Mitchell, and
  Snoeyink}]{guibas1991approximating}
Guibas LJ, Hershberger JE, Mitchell JS, Snoeyink JS (1991) Approximating
  polygons and subdivisions with minimum link paths. In: International
  Symposium on Algorithms, Springer, pp 151--162

\bibitem[{Halin(1976)}]{halin}
Halin R (1976) S-functions for graphs. Journal of Geometry 8(1-2):171--186

\bibitem[{H{\aa}stad(2001)}]{Hastad01}
H{\aa}stad J (2001) Some optimal inapproximability results. J {ACM}
  48(4):798--859, \doi{10.1145/502090.502098},
  \urlprefix\url{http://doi.acm.org/10.1145/502090.502098}

\bibitem[{Kolman and Kouteck{\`y}(2015)}]{kolman2015extended}
Kolman P, Kouteck{\`y} M (2015) Extended formulation for csp that is compact
  for instances of bounded treewidth. The Electronic Journal of Combinatorics
  22(4):P4--30

\bibitem[{Kolman et~al.(2015)Kolman, Kouteck{\`y}, and
  Tiwary}]{kolman2015extension}
Kolman P, Kouteck{\`y} M, Tiwary HR (2015) Extension complexity, mso logic, and
  treewidth. arXiv preprint arXiv:150704907

\bibitem[{Lasserre(2006)}]{lasserre2}
Lasserre J (2006) Convergent {SDP} relaxations in polynomial optimization with
  sparsity. SIAM Journal on Optimization 17(3):822--843

\bibitem[{Laurent(2010)}]{Laurentima}
Laurent M (2010) Sum of squares, moment matrices and optimization over
  polynomials. IMA pp 1--147

\bibitem[{Lauritzen(1996)}]{graphical}
Lauritzen SL (1996) Graphical Models. Oxford University Press

\bibitem[{Lee et~al.(2015)Lee, Raghavendra, and Steurer}]{lee2015lower}
Lee JR, Raghavendra P, Steurer D (2015) Lower bounds on the size of
  semidefinite programming relaxations. In: Proceedings of the forty-seventh
  annual ACM symposium on Theory of computing, ACM, pp 567--576

\bibitem[{Lokshtanov et~al.(2011)Lokshtanov, Marx, and
  Saurabh}]{lokshtanov2011known}
Lokshtanov D, Marx D, Saurabh S (2011) Known algorithms on graphs of bounded
  treewidth are probably optimal. In: Proceedings of the twenty-second annual
  ACM-SIAM symposium on Discrete Algorithms, Society for Industrial and Applied
  Mathematics, pp 777--789

\bibitem[{Marx(2010)}]{marx2010can}
Marx D (2010) Can you beat treewidth? Theory Of Computing 6:85--112

\bibitem[{Mu\~noz(2017)}]{thesisGonzalo}
Mu\~noz G (2017) Integer programming techniques for polynomial optimization.
  PhD thesis, Columbia University, \doi{https://doi.org/10.7916/D82F812G}

\bibitem[{Pearl(1982)}]{pearl82}
Pearl J (1982) Reverend bayes on inference engines: a distributed hierarchical
  approach. In: In Proceedings of the National Conference on Artificial
  Intelligence, pp 133--136

\bibitem[{Robertson and Seymour(1984)}]{Robertson198449}
Robertson N, Seymour P (1984) Graph minors. {III}. {Planar tree-width}. Journal
  of Combinatorial Theory, Series B 36(1):49 -- 64

\bibitem[{Robertson and Seymour(1986)}]{rs86}
Robertson N, Seymour P (1986) Graph minors {II}: Algorithmic aspects of
  tree-width. Journal of Algorithms 7:309 -- 322

\bibitem[{Robertson et~al.(1994)Robertson, Seymour, and
  Thomas}]{ROBERTSON1994323}
Robertson N, Seymour P, Thomas R (1994) Quickly excluding a planar graph.
  Journal of Combinatorial Theory, Series B 62(2):323 -- 348,
  \doi{http://dx.doi.org/10.1006/jctb.1994.1073},
  \urlprefix\url{http://www.sciencedirect.com/science/article/pii/S0095895684710732}

\bibitem[{Rothvo{\ss}(2013)}]{rothvoss2013some}
Rothvo{\ss} T (2013) Some 0/1 polytopes need exponential size extended
  formulations. Mathematical Programming 142(1-2):255--268

\bibitem[{Rothvo{\ss}(2017)}]{rothvoss2017matching}
Rothvo{\ss} T (2017) The matching polytope has exponential extension
  complexity. Journal of the ACM (JACM) 64(6):41

\bibitem[{Tamassia and Tollis(1989)}]{tamassia}
Tamassia R, Tollis IG (1989) Planar grid embedding in linear time. IEEE
  Transactions on Circuits and Systems 36(9):1230--1234

\bibitem[{Tiwary et~al.(2017)Tiwary, Weltge, and
  Zenklusen}]{tiwary2017extension}
Tiwary HR, Weltge S, Zenklusen R (2017) Extension complexities of cartesian
  products involving a pyramid. arXiv preprint arXiv:170201959

\bibitem[{Wainwright and Jordan(2004)}]{wainjor}
Wainwright MJ, Jordan MI (2004) {Treewidth-Based} conditions for exactness of
  the {Sherali-Adams} and {Lasserre} relaxations. Tech. Rep. 671, University of
  California

\bibitem[{Wainwright and Jordan(2008)}]{wainwrightjordanlong}
Wainwright MJ, Jordan MI (2008) Graphical models, exponential families, and
  variational inference. Found Trends Mach Learn 1:1--305

\bibitem[{Waki et~al.(2006)Waki, Kim, Kojima, and Muramatsu}]{Waki06sumsof}
Waki H, Kim S, Kojima M, Muramatsu M (2006) Sums of squares and semidefinite
  programming relaxations for polynomial optimization problems with structured
  sparsity. SIAM Journal on Optimization 17:218--242

\bibitem[{Wang et~al.(2011)Wang, Liu, Cui, and Xu}]{TwRandom}
Wang C, Liu T, Cui P, Xu K (2011) A note on treewidth in random graphs. In:
  Combinatorial Optimization and Applications - 5th International Conference,
  {COCOA} 2011, Zhangjiajie, China, August 4-6, 2011. Proceedings, pp 491--499,
  \doi{10.1007/978-3-642-22616-8_38},
  \urlprefix\url{https://doi.org/10.1007/978-3-642-22616-8_38}

\bibitem[{Yannakakis(1991)}]{Yannakakis}
Yannakakis M (1991) Expressing combinatorial optimization problems by linear
  programs. Journal of Computer and System Sciences 43(3):441--466

\bibitem[{Zuckerman(2007)}]{Zuckerman07}
Zuckerman D (2007) Linear degree extractors and the inapproximability of max
  clique and chromatic number. Theory of Computing 3(1):103--128,
  \doi{10.4086/toc.2007.v003a006},
  \urlprefix\url{https://doi.org/10.4086/toc.2007.v003a006}

\end{thebibliography}

\newpage

\end{document}